\newcommand{\fullVersion}{}
\newcommand{\TACASVersion}{}
\let\llncssubparagraph\subparagraph
\let\subparagraph\paragraph
\let\subparagraph\llncssubparagraph
\newcommand{\cmark}{\ding{51}}%
\newcommand{\xmark}{\ding{55}}%
\newtheorem*{lemma*}{Lemma}
\newtheorem*{theorem*}{Theorem}
\begin{document}
\title{Learning Software Constraints via Installation~Attempts
\ifdefined\fullVersion
\thanks{The research leading to these results has received funding from the European
Union under the H2020 and 5G-PPP Phase2 programs, under Grant Agreement No.
761 557 (project NGPaaS).}
\else
\thanks{The research was financially supported by the European
Union under the H2020 and 5G-PPP Phase2 programs, under Grant Agreement No.
761 557 (project NGPaaS).}
\fi
}
%
%
\author{Ran Ben Basat\inst{1}
\and
Maayan Goldstein\inst{2}
\and
Itai Segall\inst{2}
}
\authorrunning{Ben Basat et al.}
%
\institute{Harvard University \qquad{}\email{ran@seas.harvard.edu}\\ 
\and
Nokia Bell Labs\qquad{}
\email{\{maayan.goldstein,itai.segall\}@nokia.com}}
\maketitle              
\newcommand{\reachabilityMap}{\mathfrak R}
\newcommand{\GTLB}{\ensuremath{\ceil{\log_2 \sum_{i=0}^{d}{n \choose i}}}}
\newcommand{\maxProblem}{{\sc Maximal Sub-repository}}
\newcommand{\minProblem}{{\sc Minimal Package Installation}}
\newcommand{\learnAll}{{\sc Full Learning}}
\newcommand{\CFFParams}{\ensuremath{n,a,b}}
\newcommand{\CFF}{\ensuremath{\mathit{CFF}}}
\newcommand{\CFFSizeLetter}{\ensuremath{\mathcal S}}
\newcommand{\CFFSize}{\ensuremath{\CFFSizeLetter_{\CFFParams}}}
\newcommand{\CFFs}{\ensuremath{\mathit{CFFs}}}
\newcommand{\CFFLetter}{\ensuremath{\mathcal F}}
\newcommand{\nabCFF}{\ensuremath{(\CFFParams)-\CFF}}
\newcommand{\parameterizedCFF}[2]{\ensuremath{(n,#1,#2)-\CFF}}
\newcommand{\parameterizedCFFSize}[2]{\ensuremath{\CFFSizeLetter_{n,#1,#2}}}

\newcommand{\eps}{\epsilon}
\newcommand{\set}[1]{\left\{#1\right\}}
\newcommand{\ceil}[1]{ \left\lceil{#1}\right\rceil}
\newcommand{\floor}[1]{ \left\lfloor{#1}\right\rfloor}
\newcommand{\logp}[1]{\log\parentheses{#1}}
\newcommand{\clog}[1]{ \ceil{\log{#1}}}
\newcommand{\clogp}[1]{ \ceil{\logp{#1}} }
\newcommand{\flog}[1]{ \floor{\log{#1}}}
\newcommand{\parentheses}[1]{ \left({#1}\right)}
\newcommand{\abs}[1]{ \left|{#1}\right|}

\newcommand{\cdotpa}[1]{\cdot\parentheses{#1}}
\newcommand{\inc}[1]{$#1 = #1 + 1$}
\newcommand{\dec}[1]{$#1 = #1 - 1$}
\newcommand{\range}[2][0]{#1,1,\ldots,#2}
\newcommand{\frange}[1]{\set{\range{#1}}}
\newcommand{\orange}[1]{\set{1,2\ldots,#1}}
\newcommand{\xrange}[1]{\frange{#1-1}}
\newcommand{\oneOverE}{ \frac{1}{\eps} }
\newcommand{\oneOverT}{ \frac{1}{\tau} }
\newcommand{\smallMultError}{(1+o(1))}
\newcommand{\lowerbound}{\max \set{\log W ,\frac{1}{2\epsilon+W^{-1}}}}
\newcommand{\smallEpsLowerbound}{\window\logp{\frac{1}{\weps}}}
\newcommand{\smallEpsMemoryTheta}{$\Theta\parentheses{\smallEpsMemoryConsumption}$}
\newcommand{\smallEpsMemoryConsumption}{W\cdot\logp{\frac{1}{\weps}}}

\newcommand{\largeEpsRestriction}{For any \largeEps{},}
\newcommand{\largeEps}{$\eps^{-1} \le 2W\left(1-\frac{1}{\logw}\right)$}
\newcommand{\smallEpsRestriction}{For any \smallEps{},}
\newcommand{\smallEps}{$\eps^{-1}>2W\left(1-\frac{1}{\logw}\right)=2\window(1-o(1))$}
\newcommand{\bc}{{\sc Basic-Counting}}
\newcommand{\bs}{{\sc Basic-Summing}}
\newcommand{\windowcounting}{ {\sc $(W,\epsilon)$-Window-Counting}}

\newcommand{\query}[1][] {{\sc Query}$(#1)$}
\newcommand{\add}  [1][] {{\sc Add}$(#1)$}

\newcommand{\window}{W}
\newcommand{\logw}{\log \window}
\newcommand{\flogw}{\floor{\log \window}}
\newcommand{\weps}{\window\epsilon}
\newcommand{\wt}{\window\tau}
\newcommand{\logweps}{\logp{\weps}}
\newcommand{\logwt}{\logp{\wt}}
\newcommand{\bitarray}{b}
\newcommand{\currentBlockIndex}{i}
\newcommand{\currentBlock}{\bitarray_{\currentBlockIndex}}
\newcommand{\remainder}{y}
\newcommand{\numBlocks}{k}
\newcommand{\sumOfBits}{B}
\newcommand{\blockSize}{\frac{\window}{\numBlocks}}
\newcommand{\iblockSize}{\frac{\numBlocks}{\window}}
\newcommand{\threshold}{\blockSize}
\newcommand{\halfBlock}{\frac{\window}{2\numBlocks}}
\newcommand{\blockOffset}{m}
\newcommand{\inputVariable}{x}

\newcommand{\bcTableColumnWhh}{1.5cm}
\newcommand{\bsTableColumnWidth}{1.7cm}
\newcommand{\bsExtendedTableColumnWidth}{3cm}
\newcommand{\bcExtendedTableColumnWidth}{2.8cm}
\newcommand{\bcNarrowTableColumnWidth}{1.5cm}
\newcommand{\bsNarrowTableColumnWidth}{1.5cm}
\newcommand{\bsWorstCaseTableColumnWidth}{2cm}

\newcommand{\bsrange}{ R }
\newcommand{\bsReminderPercisionParameter}{ \gamma }
\newcommand{\bsest}{ \widehat{\bssum}}
\newcommand{\bssum}{ S^W }
\newcommand{\bsFracInput}{ \inputVariable' }
\newcommand{\bserror}{ \bsrange\window\epsilon }
\newcommand{\bsfractionbits}{ \frac{\bsReminderPercisionParameter}{\epsilon} }
\newcommand{\bsReminderFractionBits}{ \upsilon}
\newcommand{\bsAnalysisTarget}{ \bssum}
\newcommand{\bsAnalysisEstimator}{ \widehat{\bsAnalysisTarget}}
\newcommand{\bsAnalysisError}{ \bsAnalysisEstimator - \bsAnalysisTarget}
\newcommand{\bsRoundingError}{ \xi}


\newcommand{\neps}{\ensuremath{\winSize\eps}}
\newcommand{\Neps}{\ensuremath{\maxWinSize\eps}}
\newcommand{\logn}{\ensuremath{\log\winSize}}
\newcommand{\logN}{\ensuremath{\log\maxWinSize}}
\newcommand{\logneps}{\ensuremath{\logp\neps}}
\newcommand{\logNeps}{\ensuremath{\logp\Neps}}
\newcommand{\oneOverEps}{\ensuremath{\frac{1}{\eps}}}
\newcommand{\winSize}{\ensuremath{n}}
\newcommand{\maxWinSize}{\ensuremath{N}}
\newcommand{\curTime}{\ensuremath{t}}
\newcommand\Tau{\mathrm{T}}
\newcommand{\offset}{\ensuremath{\mathit{offset}}}
\newcommand{\roundedOOE}{k}
\newcommand{\numLargeBlocks}{\frac{\roundedOOE}{4}}
\newcommand{\numSmallBlocks}{\frac{\roundedOOE}{2}}

\newcommand{\remove}{{\sc Remove()}}
\newcommand{\merge}[1]{{\sc Merge(#1)}}
\newcommand{\counting}{{\sc Counting}}
\newcommand{\summing}{{\sc Summing}}
\newcommand{\freq}{{\sc Frequency Estimation}}
\newcommand{\NB}{\psi}
\newcommand{\NBound}{{Z_{1 - \frac{{{\delta _s}}}{2}}}V{\varepsilon_s}^{ - 2}}
\newcommand{\matrixCellWidth}{5.8cm}
\renewcommand{\arraystretch}{1.33}
\newtheorem{observation}[remark]{Observation}

\newcommand{\ignore}[1]{}
\begin{abstract}
	
	Modern software systems are expected to be secure and contain all the latest
	features, even when new versions of software are released multiple times an
	hour. Each system may include many interacting packages. The problem of installing multiple 
	dependent packages has been extensively studied in the past, yielding some promising 
	solutions that work well in practice. However, these assume that the developers 
	declare all the dependencies and conflicts between the packages. Oftentimes, 
	the entire repository structure may not be known upfront, for example when 
	packages are developed by different vendors. In this paper we present algorithms 
	for learning dependencies, conflicts and defective packages from installation attempts. 
	Our algorithms use combinatorial data structures to generate queries that test 
	installations and discover the entire dependency structure. A query that the 
	algorithms make corresponds to trying to install a subset of packages and 
	getting a Boolean feedback on whether all constraints were satisfied in this subset. 
	Our goal is to minimize the query complexity of the algorithms. We prove 
	lower and upper bounds on the number of queries that these algorithms require 
	to make for different settings of the problem.

	\ignore{
	
	The complexity of modern software systems is continuously growing. They are
	expected to be secure and contain all the latest features, up to a 
	point where new versions of software are released multiple times an hour.
	 Each system may include many interacting packages that must be deployed and coexist. 
	 The dependencies 
between software packages may change as the system evolves. Moreover, as packages 
may be provided by different software vendors, there might be dependencies and 
conflicts between these packages that can only be discovered by installing 
the packages. Finally, there may be defective packages that would fail 
once executed in the target environment.
	
The problem of installing multiple dependent packages has been extensively
	studied in the past, yielding some promising solutions that work well in
	practice, even for thousands of packages. However, these assume that the developers declare all the dependencies and conflicts 
	between the packages. In reality, the entire repository structure may not be
	known upfront. 

	In this paper, we present algorithms for learning 
	dependencies, conflicts and defective packages from installation attempts. Our algorithms use combinatorial data structures 
	to generate queries
	that test installations and discover the entire dependency structure. 
	A query that the algorithms make corresponds to trying to install a subset
	of packages and getting a Boolean feedback on whether all constraints were
	satisfied in this subset.
	Our goal is to minimize the query complexity of the algorithms.	
	We prove lower and upper bounds on the number of
	queries that these algorithms require to make for different settings of the
	problem.
	}

\end{abstract}

\section{Introduction}
\label{sec:intro}

\ignore{
\begin{itemize}  
\item Software changes quickly and by multiple vendors. Therefore:
\item A need for packages manager, that is already implemented for OSS.
\item Most of them are SAT solvers based.
\item There are even projects like mancoosi that organize competitions between
solvers as they do not provide the best possible solution
\item All of them assume dependencies/conflict/defects are known
\item In reality, this is not the case (examples\ldots)
\item Thus, a technique for resolving all unknowns is needed
\item still NP-hard
\item Our contributions: 1 - Here we show lower bounds 
\item 2 - And algorithms that use FCCs to solve the problem for small u, r, c
\end{itemize}
}

Modern software systems are very complex modular entities, made up of many
interacting packages that must be deployed and coexist in the same context.
System administrators are reluctant to apply security patches and other updates
to packages in complex IT systems.
The reason for this hesitation is the fear of breaking the running and working
system, thus causing downtime. It is tough for such administrators to know which
upgrades to packages are ``safe'' to apply to their particular environment and
to choose a subset of upgrades to be applied. As a result, often systems are
left outdated and vulnerable for \mbox{long periods of time.}


The software upgrade problem, where we wish to determine which updates to perform,
is extensively studied~\cite{Mancinelli2006, DiCosmo2010, burrows, Trezentos,
milp}.
As many open source products such as Debian and Ubuntu operating systems are built 
from packages, some practical solutions for installing these products  have been 
developed~\cite{aptget,aptitude,smart,cupt}. 
These solutions try to find a large subset of packages that are installable together. 
Most of them either use SAT solvers or pseudo-boolean 
optimizations~\cite{Mancinelli2006, DiCosmo2010, Trezentos}. Others apply greedy 
algorithms~\cite{burrows} to derive a solution to that problem, i.e., find an 
installable subset of packages that need to be installed (or upgraded). 
These techniques assume that the dependencies and the conflicts are declared by 
the developers or can be automatically derived from package descriptors.
However, for various reasons, some information is often missing about package 
repositories. For example, when software is developed by multiple vendors, not 
all conflicts and dependencies may be known upfront. 
In addition, software components are often tested in environments different than
those in which they are eventually deployed in production, ending up with
components not working as expected.
A trivial solution to the problem of identifying such unknown relations, and to 
that of deciding on a large subset of packages to be installed is trying out 
all combinations of packages, thus discovering all the missing information. 
This solution clearly does not scale for large systems. Hence, a more effective
solution to this problem is needed.

\ignore{
 Existing solutions to the
package selection problem, which were mostly developed for open source products
such as Linux based operating systems, assume that all the dependencies and
conflicts between the different packages have been declared by the developers.

In reality, 
some packages may not be installable in the user�s environment and others may 
introduce conflicts that were not described by the developers. Finally, some 
dependencies may be missing in the description as well. To the best of our
knowledge, none of the known solutions deal with these issues. 

}

In this paper we solve the problem of detection of unknown
dependencies, conflicts, and defects while installing and upgrading a 
complex software system. Our approach addresses the dynamic
nature of dependencies between packages and the limitations that may
be prescribed by the target environment. Since some defects and constraints can
only be discovered by installing the packages, we follow a trial-and-error
strategy to learn how to install or upgrade the packages. Following this
strategy, the algorithms try to install and test different
subsets of the packages, and analyse the success/failure of installation of 
different subsets, until all dependencies, defects and conflicts are discovered. 
We choose the subsets of packages to test via a
combinatorial approach that guarantees that any combination of packages of 
predefined size will be installed and tested together while leaving out of the
installation any combination of another predefined size.
Once all the tests are finished, our technique is guaranteed to have all the 
information needed to determine if a package has a defect, or 
if there are unknown conflicts or dependencies. This allows to use much fewer tests than a trivial solution
would use, making this a feasible approach. The entire learning process is
captured by Figure~\ref{fig:process}. It starts by extracting known
dependencies structure from package descriptors and after the testing steps ends
with a complete dependencies structure.

\begin{figure}[t!]
	
	\ifdefined\fullVersion
	\includegraphics[width = 1.0\columnwidth]{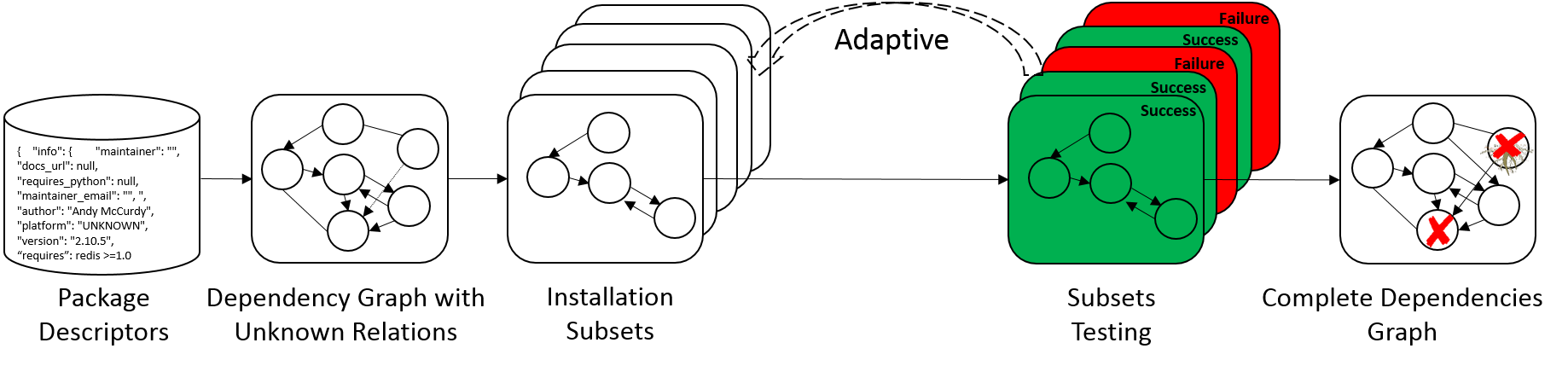}
	\caption{Learning process.}
	\else
	\centering
	\includegraphics[width = .9\columnwidth]{figs/process1v2.png}
	\vspace*{-3mm}\caption{Learning process. \vspace*{-3mm}}
	\fi
	\label{fig:process}
\end{figure}

\subsection{Contributions}
Our first contribution is the formalization of a stylized model that allows us 
to reason about the complexity of learning undocumented software constraints in a given repository. 
While previous works have considered all dependencies and conflicts to be known, 
here our goal is to handle the \mbox{undocumented package relations.}

Next, we prove lower and upper bound on the complexity of solving the
problem of resolving all the relations in the repositories graph in four
scenarios. One scenario is where the entire dependencies structure is known and
we are interested in finding the defects. The second case assumes that we have 
up to $u$ unknown dependencies and we wish to find all the
defects and the dependencies. The third case assumes that there are no
unknown dependencies, but there may be up to $c$ conflicts and $d$ defects. Finally, for the
most complex case, we assume that there can be up to $u$ unknown dependencies, up to $d$ defects,
and up to $c$ unknown conflicts 
and we find them all. For most of the scenarios we present both
\emph{non-adaptive} and \emph{adaptive} learning algorithms. Non-adaptive algorithms
work by trying out installations of subsets of packages and solve the problem at hand
based on the results of these attempts. Adaptive algorithms on the other hand
try one installation at a time and can decide which installation to try next
based on the results of the previous attempts.
\ifdefined\fullVersion
The growing complexity of the solutions for learning the relations graph for the
four scenarios is depicted in Figure~\ref{fig:contribution}, while the results are summarized in Table~\ref{tbl:summary}.

 \begin{figure}[t!]
    \centering
    \ifdefined\fullVersion
	\includegraphics[width = 0.8\columnwidth]{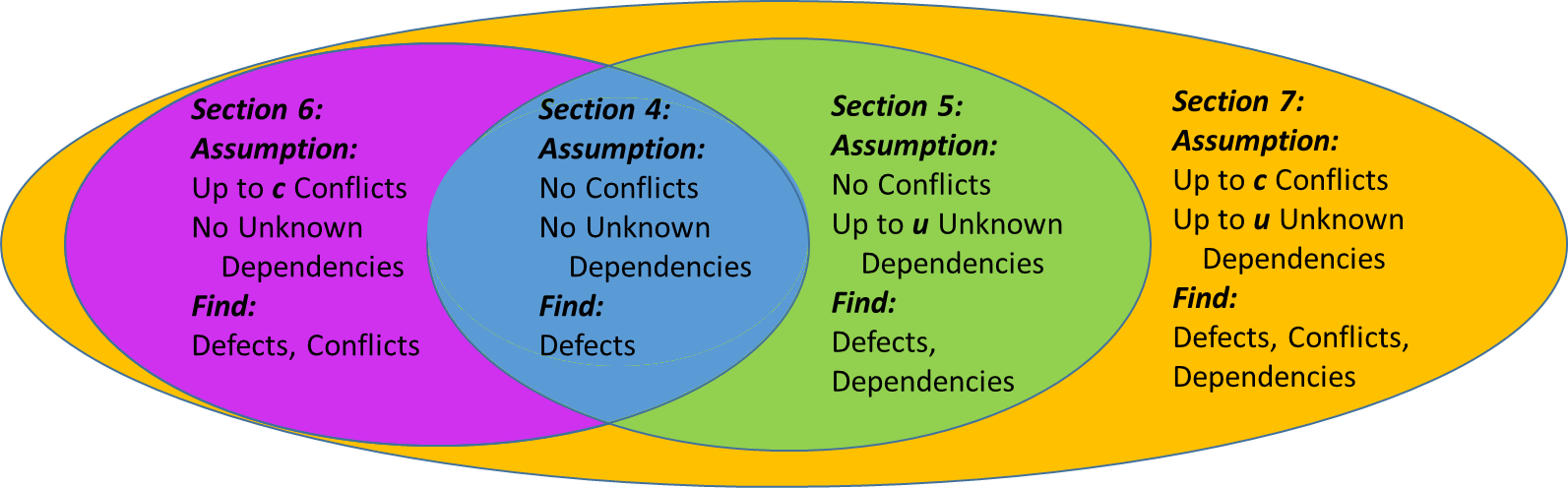}
	\caption{Growing complexity of the four scenarios considered in this work. }
	\else
	\includegraphics[width = 0.7\columnwidth]{figs/learning2.png}
	\caption{Growing complexity of the four scenarios considered in this work. \vspace*{-2mm}}
	\fi
	\label{fig:contribution}
\end{figure}

\fi

\section{Preliminaries}
For some $n\in\mathbb{N}$ we denote by $[n]\triangleq\orange{n}$ the set of integers
 smaller than or equal to $n$.
A \emph{mixed graph} $G=(V,E,A)$ consists of a set of vertices $V$, an 
undirected set of \emph{edges} $E\subseteq{V\choose 2}$, and a set of 
directed \emph{arcs} $A\subseteq V\times V$. We will use mixed graphs for modeling directed dependencies and (undirected) conflicts in software~repositories.

\ignore{Mixed graphs arise in 
several scheduling and Bayesian inference problems and will be useful for 
modeling directed dependencies and (undirected) conflicts in software~repositories.
}

\ignore{
Given a directed graph $G=(V,E)$ we denote by $SCC(G)$ the 
\emph{strongly connected components} graph of $G$ and by $C(G)$ the 
\emph{transitive closure} of $G$. \textbf{Itai: Do we use C(G) anywhere?
Maayan: we use c(v) a lot.}
}

\subsection{Learning Algorithms}
We consider algorithms that learn about properties of the underlying,
partially-unknown, graph. To that end, we evaluate the algorithms in terms of their \emph{query
complexity} -- the number of queries that they need before establishing their
answer.
That is, we assume that the algorithm has access to an \emph{oracle} that given a 
query returns a Boolean yes/no answer. In our scenario, the oracle is given a
query of whether a subset of packages can be installed, and returns yes or no
based on whether this installation is successful or not. Note that in some
settings the feedback can be more elaborate than just yes/no answer. For
example, the package management system may hint which additional packages need
to be installed. This could be used for fine tuning the subsets selection
process. However, we assume here only the minimal requirement of yes/no answers
and defer the more advanced feedback to future work.

We consider two types of algorithms -- \emph{adaptive} and \emph{non-adaptive}.
A non-adaptive algorithm is a procedure that given an input computes a set of
\emph{queries} and passes them to the oracle. When getting the Boolean
\emph{feedback} for each query it locally computes a solution to the problem. On
the other hand, adaptive algorithms are given continuous access to the oracle
and ask one query at a time. Thus, any query asked by an adaptive algorithm may
be chosen with respect to the oracle's previous answers.
\ifdefined\TACASVersion{
Non-adaptive algorithms have a parallelism advantage while adaptivity can lead
to exponentially smaller query complexity.
}
\else {
Non-adaptive algorithms have a parallelism advantage as the answers to all
queries can be computed at the same time. On the other hand, adaptivity can lead
to exponentially smaller query complexity.
Notice that we can also simulate any adaptive algorithm in a non-adaptive way
while incurring an exponential overhead, so this gap is tight.
}
\fi
\ignore{
For example, consider trying to learn an unknown number $x\in[n]$ where the
feedback on a query $y$ is one if $x\le y$. Clearly, any non-adaptive algorithm
requires $n$ queries to find $x$ while an adaptive solution can perform a binary
search using $O(\log n)$ queries.}
\subsection{Group Testing}\label{sec:GT}
In this section we provide an overview for the problem of \emph{group testing} 
that will be useful for our study for learning software relations between packages. Specifically, group testing would help us assess how many queries our algorithms need to make in order to discover all the dependencies and conflicts between software packages.

\ignore{
In group testing, we wish to identify a specific subset of $[n]$ using \emph{OR}
queries.
That is, each test corresponds to a subset $T\subseteq n$ and the feedback is True if at least one of the items in $T$ is defective.
The goal is then finding some predetermined defects set $S\subseteq [n]$ of size $|S|\le d$ (i.e., a query for $T$ will return True if $T\cap S\neq\emptyset$).
Group testing has various applications in computer science, as well as in statistics, 
biology and medicine. The study of adaptive algorithms for group testing dates back to 
1943 when Dorfman introduced the problem for identifying syphilitic soldiers~\cite{Dorfman}. 
Dorfman proposed to test equal sized soldier groups and then use individual 
tests for soldiers in the infected groups. 

This was then generalized by Li to arbitrary number of rounds, 
lowering the number of required tests to $O(d\logp{\frac{n}{d}})$~\cite{li1962sequential}. 
The constant was then improved, roughly by a factor of 1.9, using the Generalized 
Binary-Splitting (GBS) algorithm~\cite{Hwang}. 
The resulting solution is near-optimal in the sense that it requires at 
most $d-1$ more tests than the \GTLB ~information-theoretic lower bound.
Finally, a slight improvement to $0.255d+12\log_2d + 5.5$ tests over the lower 
bound was proposed when $n/d\ge 2$~\cite{allemann}.
}

 Group testing is a procedure that breaks up the task of identifying certain objects into tests on groups of items, rather than on individual ones. The study of adaptive algorithms for group testing dates back to 
1943 when Dorfman introduced the problem for identifying syphilitic soldiers~\cite{Dorfman}.  Dorfman proposed to test equal sized soldier groups and then use individual 
tests for soldiers in the infected groups. 
In our context, 
we are interested in trying to install a subset of packages. If the installation fails, then  there is either a defective package, or there is a conflict between some packages, or some dependency is missing.  
For this, we use the the Generalized 
Binary-Splitting (GBS) algorithm~\cite{Hwang} which provides the most effective solution (complexity-wise) to the group testing problem.
\subsection{Cover-Free Families}\label{sec:CFFs}
Another combinatorial structure in use in this work is that of
 $(n,a,b)$- Cover Free Families (denoted \nabCFF{}). An \nabCFF{}
 is a set of binary vectors $\CFFLetter\subseteq\set{0,1}^n$ such that on any
 $a+b$ indices we see all $a+b\choose b$ combinations of $1$s and $0$s. That
 is, we require that for any disjoint sets of indices $S_1,S_2$ of sizes $a,b$
 respectively, there exists a vector in \CFFLetter{} such that
 its $S_1$ entries are all zeros while it has ones on those of $S_2$, i.e.,
 $$\forall S_1,S_2\subseteq[n], |S_1|=a, |S_2|=b, S_1\cap
 S_2=\emptyset\implies\exists v\in\CFFLetter: v_{|S_1}=0 \wedge v_{|S_2}=1.$$

\ignore{
The problem of constructing small \CFFs{} was studied in extremal combinatorics
with applications in cryptography~\cite{boneh2004public} and graph problems such as finding an $r$-simple
$k$-path~\cite{Bshouty2017}.
We
show that \CFFs{} are intrinsically related to the problem of learning unknown
conflicts and dependencies in a software repository by showing both upper and
lower bounds that directly relate to these.}

Throughout the paper, we use
\CFFSize{} to denote the minimal size of a \nabCFF{}.
\ifdefined\fullVersion
In~\cite{CFFLB}, Stinson et al. showed that 
\ifdefined\fullVersion
$$\CFFSize{}=\begin{cases}
\Omega\parentheses{\frac{{a+b \choose b}(a+b)}{\log{a+b \choose b}}\log n} &
\mbox{if $a\le\sqrt n$} \\
\Omega\parentheses{\frac{{a+b \choose b}}{\log(a+b)}\log n} & \mbox{if $a\le n$}
\end{cases}.$$
\else
{\scriptsize $$\CFFSize{}=\begin{cases}
\Omega\parentheses{\frac{{a+b \choose b}(a+b)}{\log{a+b \choose b}}\log n} &
\mbox{if $a\le\sqrt n$} \\
\Omega\parentheses{\frac{{a+b \choose b}}{\log(a+b)}\log n} & \mbox{if $a\le n$}
\end{cases}.\vspace*{-1mm}$$}
\fi

\ignore{
Recently, a more involved analysis showed that in some cases we can improve these bounds
~\cite{NewestCFFLB,NewCFFLB}.
Since the minimal sizes of an \nabCFF{} and an \parameterizedCFF{b}{a} are clearly identical, the following expressions assume that $a\le b$; however, in the rest of the paper this is not necessarily the case.
}
\fi
We can efficiently construct a \CFF{} probabilistically by creating a set of
${a+b\choose a}(a+b)^{O(1)}\log n$ binary vectors of length $n$, where each
bit is set to 1 independently with probability $\frac{b}{a+b}$. The resulting
randomized set is an \nabCFF{} with high probability.
The best known deterministic construction for \CFFs{}~\cite{Bshouty2017}, which is also computed in linear time,  provides an upper bound of:
\ifdefined\fullVersion
$$\CFFSize{}\le\begin{cases}
{b^{a+1+o(1)}\log n} & \mbox{if $a = O(1)$} \\
{\parentheses{\frac{a+b}{a}}^{a+o(a)}\log n} & \mbox{if $a=\omega(1)\wedge a=o(b)$}\\
{2^{(a+b)H(\frac{a}{a+b})+o(b)}\log n} & \mbox{if $a=\Theta(b)$}\\
\end{cases},$$
\else
{\scriptsize 
$$\CFFSize{}\le\begin{cases}
{b^{a+1+o(1)}\log n} & \mbox{if $a = O(1)$} \\
{\parentheses{\frac{a+b}{a}}^{a+o(a)}\log n} & \mbox{if $a=\omega(1)\wedge a=o(b)$}\\
{2^{(a+b)H(\frac{a}{a+b})+o(b)}\log n} & \mbox{if $a=\Theta(b)$}\\
\end{cases},\vspace*{-3mm}$$
}
\fi
where $H(x)\triangleq-x\log_2 x - (1-x)\log_2 (1-x)$ is the binary entropy function. 
In order to avoid using these cumbersome expressions, we will hereafter express our upper and lower bounds as a function of \CFFSize{} for different values of $a,b$.

\section{Model}
\label{sec:randomized}
This section formally defines the problems we are
interested in solving, using graph theory. It starts by presenting the basic
terminology that we use to describe relations between packages in software
repositories. It then presents two learning objectives that we are considering. It also gives a summary of the notations used throughout the paper.


\subsection{Basic Terminology}
\label{sec:terminology}
We consider a set of \emph{packages} $P$ that represents the modules in our
repository.
An \emph{installation} is a set of packages $I\subseteq P$; intuitively, an installation can be successful or not depending on whether all dependency, conflict and defect constraints are satisfied as we formally define below.
A \emph{dependency} $(q,p)\in P^2$ is an ordered pair which means that any
installation that includes $q$ but excludes $p$ will fail. Similarly, a \emph{conflict} $\{p,q\}\in{P\choose 2}$ implies that any installation with both $p$ and $q$ will fail. We assume throughout this work that there are no alternatives to the declared dependencies, i.e. if $q$ depends on $p$ it can not be installed without $p$ by using another implementation $p_1$.
Similar definitions were introduced in previous
works~\cite{DiCosmo2010,Mancinelli2006}. The main difference is that prior
solutions assumed that all dependencies and conflicts are known while we address
the problem of \emph{learning} these using an oracle. That is, we assume that
one can try any installation and get a \emph{feedback} on whether it succeeded.
Using this feedback, our goal is to learn the unknown dependencies and conflicts
while minimizing the number of installation attempts. We also consider the concept 
of \emph{defects} -- packages that can not be a part of any successful installation. 
This can be due to a broken release, inconsistencies, etc. Notice that this
means that if a package $p$ depends on a defective module $q$, then $p$ could never be successfully 
installed and thus is also a defect. We also consider the notion of \emph{root defects} 
which are the root cause for an install to fail. In the example above, where $p$ depends 
on a defect $q$, we call $q$ a root defect. Formally, a root defect is a defective 
package for which all of the modules it depends on are not defects.

We model relations within a repository using a mixed graph $G=(P,C, K
\cup U)$ where $K$ is the set of known (directed) dependencies, $U$ is the set of unknown (directed) dependencies and
$C$ is the set of unknown (undirected) conflicts. Defects are modeled as a set
$D\subseteq P$ of packages that can not be installed or fail to work once
installed.
Notice that our definition of defects implies that $D$ has no incoming arcs. That is, $(P \times D)  \cap
(K\cup U)=\emptyset$.

Consider a cycle of known dependencies $p_1\to p_2\to \ldots\to p_z\to p_1$.
This implies that any successful installation must either install \emph{all} of
$p_1,p_2,\ldots,p_z$ or none of them. This allows us to contract these into a
single ``super-package'' whose installation is equivalent to that of all of
them. That is, we can consider the strongly connected components graph instead
of that of the original repository.\footnote{The exception here is that if one
of the packages in the component is a root defect, we will only identify that
one of the packages in the strongly connected component is defective. We emphasize that even without
contracting strongly connected components, 
these are indistinguishable and thus the root defect 
\mbox{cannot be learned in this model.}}
Thus, we henceforth 
assume that the induced digraph $G_K\triangleq(P, K)$ that contains only the known dependencies is \emph{acyclic}.

\ignore{Given a graph $G=(V,E)$ and a vertex $v\in V$, we denote by $c(v)$ the
\emph{transitive closure} of $v$. That is, $c(v)= \set{u | u\mbox{ is reachable
from $v$ in } G}$. In our framework, one cannot distinguish between packages in
the same connected component that has an unknown dependency. That is, assume that $p_1,p_2$ are in the same connected 
component in $c(P,K\cup U)$; using binary feedback one can never conclude 
whether $(p_1,p)\in U$ or $(p_2,p)\in U$ for some package $p\in P$. Thus, it
only makes sense to try to learn the transitive closure of the dependency graph. Further, when trying to install a package or the largest set of updates, the closure graph of the dependencies is the desired output, as we only wish to know which packages depend on which. Similarly, we cannot hope to distinguish the  $\set{p,p_1}\in C$ case from $\set{p,p_2}\in C$. Again, in practice all we need to know is that $p_1,p_2$ must be installed together and that an installation cannot contain both $p$ and $p_1,p_2$.
This motivates us to set as a goal to learn the strongly connected component graph of the dependency closure, and find the conflicts between components.
}

In our framework, one cannot distinguish between packages in
the same connected component that has an unknown dependency. 
That is, assume that $p_1,p_2$ are in the same connected 
component; using binary feedback one can never conclude 
whether $(p,p_1)\in U$ or $(p,p_2)\in U$ for some package $p\in P$. 
Thus, it
only makes sense to try to learn the transitive closure of the dependency graph. 
Further, when trying to install a package or the largest set of updates, 
the closure graph of the dependencies is the desired output, as we 
only wish to know which packages depend on which. We denote by $C(G)$ the
\emph{transitive closure} of a given graph $G=(V,E)$. That is, the vertex set of
$C(G)$ is $V$ and the edge set is $\set{(v_1,v_2)\mid v_2\mbox{ is reachable from $v_1$ in } G}$.

Similarly, we 
cannot hope to distinguish the  $\set{p,p_1}\in C$ case from $\set{p,p_2}\in C$. 
Again, in practice all we need to know is that $p_1,p_2$ must be installed 
together and that an installation cannot contain both $p$ and $p_1,p_2$.
This motivates us to set as a goal to learn the strongly connected 
component graph of the dependency closure, and find the conflicts between components.
\ifdefined\fullVersion

\fi
Table~\ref{tbl:notations} summarizes the notations used in this work.
\begin{table}[h]
	\centering
	\ifdefined\TACASVersion
	\scriptsize
	\else
    \small
    \fi
	
	\begin{tabular}{|c|l|}
		
		\hline
		Symbol & Meaning \tabularnewline
		\hline\hline
		\ifdefined\fullVersion
		$G=(V,E,A)$ & Mixed graph, with undirected edges E and directed edges A
		\tabularnewline
		\hline
		\fi
		 $G=(P,C, K \cup U)$ & Mixed graph, with packages $P$ as nodes, conflicts $C$,
		 \tabularnewline &  known dependencies $K$, and unknown dependencies $U$
		\tabularnewline
		\hline
		 $(n,a,b)-\text{CFF}$  $\CFFLetter$ & Cover-Free Family, where each vector
		 has $0$ at $a$ indexes, and $1$ at $b$ indexes
		  \tabularnewline
		\hline
		$\CFFSize{}$ & Size of best known deterministically constructed $(n,a,b)$-CFF 
		\tabularnewline
		\hline
		$P$ & Installable packages or modules in a software repository 
		\tabularnewline
		\hline
		$I$ & An installation of packages, subset of $P$ 
		\tabularnewline
		\hline
		$D$ & Uninstallable packages (defects) 
		\tabularnewline
		\hline
		$C(G)$ &  Transitive closure of a graph in a graph 
		\tabularnewline
		
		\hline
		$G_K=(P, K)$ & Acyclic graph with only known dependencies
		\tabularnewline
		\hline
		$T(p)$ & Set of tests that tried to install package $p$  \tabularnewline
		\hline
		$S(p)$ & Set of successful installations that included $p$
		\tabularnewline
		\hline
		$r$ / $u$ / $c$ & Bounds on the number of root defects/unknown dependencies/conflicts  \tabularnewline
		\hline

	\end{tabular}
	\caption{List of Symbols}
	\label{tbl:notations}
	\ifdefined\TACASVersion
	\vspace*{-6mm}
	\fi
\end{table}

\subsection{Learning Objectives and Problem Definitions}
\ignore{
\textbf{TODO: see if we're sticking to only 2 problems}
In this paper we consider three learning problem variants:
\begin{enumerate}
	\item \maxProblem{}: Given $G_K$, the induced known dependency digraph, 
	and bounds $c, u, d$ such that the number of conflicts is at most $c$, the
	number of unknown dependencies is at most $u$, and the number of defective
	packages is at most $d$, find a \emph{maximum}-size set of packages
	$P_{max}\subseteq P$ that can be successfully installed.
	\item \label{obj:packageInstallation} \minProblem{}: Given $G_K$, a target package
	 $p\in P$ and bounds $c, u, d$ as above, find a \emph{minimum}-size successful
	 installation $I_p$ that contains $p$. \textbf{Itai: remove this ? }
	\item \learnAll{}: Given $G_K$ and bounds $c, u, d$ as above, return the 
	mixed graph $\bar G=(V, E, A)$ such that $V$ contains strongly connected
	components of all the packages, with the defective packages marked as such. 
	$E$ are all the known and unknown conflicts between the nodes in $V$, and $A$
	are all known and unknown dependencies.
\end{enumerate}
The first objective is motivated by Continous Deployment (CD) of
software~\cite{cd}.
In CD, we have multiple teams working separately on modules that be depend on
each other, conflict, or just cease to work. Thus, at each point in time we wish to 
enable as many modules as possible to bring the best user experience.
 \textbf{TODO: we need a real motivation here..}.

\textbf{Itai: remove, assuming we remove
the second problem.} Objective~\ref{obj:packageInstallation} addresses the
challenge a user faces when trying to install a software package. Automatic tools such as PIP,?? and ??, simplify the installation process by recursively installing all the modules that the target package is known to depend on. However, this does not always succeed due to hidden conflicts, dependencies, or defects.

The last objective allows the system administrators to learn the exact state
of the repository. As our main metric is the query complexity, a solution to this problem 
implies that we can also solve the first two problems by local computation. Hence, 
this problem is the hardest and any lower bound on the other problems is 
directly applicable to it as well.
}

In this paper our objective is to solve two learning problem variants:
\begin{enumerate}[leftmargin=*]
	\item \maxProblem{}: Given $G_K$, the induced known dependency digraph, 
	and bounds $c, u, d$, 
	\ifdefined\fullVersion such that the number of conflicts is at most $c$,  the
	number of unknown dependencies is at most $u$, and the number of defective
	packages is at most $d$,\fi  find a \emph{maximum}-size set of packages
	$P_{max}\subseteq P$ that can be successfully installed.
	\item \learnAll{}: Given $G_K$ and bounds $c, u, d$ as above, return the 
	mixed graph $\bar G=(V, E, A)$ such that $V$ contains strongly connected
	components of all the packages, with the defective packages marked as such. 
	$E$ are all the known and unknown conflicts between the nodes in $V$, and $A$
	are all known and unknown dependencies. A sample input and output of 
\learnAll{} is shown in Figure~\ref{fig:example}. Note that by solving
\learnAll{} one gets an answer to \maxProblem{}~as~well.
\end{enumerate}

The first objective is motivated by security updates, where one receives
updates from multiple sources, that may depend on each other, conflict,
or misbehave in the target system. 
\ifdefined\fullVersion
Thus, we wish to find the largest possible
subset of patches that can be safely installed, in order to make \mbox{the system as
secure as possible.}
\else
Thus, we wish to find the largest possible
subset of patches that can be safely installed, to make \mbox{the system as
secure as possible.}
\fi
\ignore{The first objective is motivated by Continuous Deployment (CD) of
software~\cite{cd}.
In CD, we have multiple teams working separately on modules that may depend on
each other, conflict, or just cease to work. Thus, at each point in time we wish to 
enable as many modules as possible to bring the best user experience.}

The second objective allows the system administrators to learn the exact state
of the repository. As our main metric is the query complexity, 
a solution to this problem 
implies that we can also solve the first problem by local computation. Hence, 
\ifdefined\fullVersion this problem is the hardest and \fi any lower bound on the first problem is 
directly applicable to it as well.

\begin{figure*}[]
	\begin{tabular}{c|c|c}
		\subfloat[Legend]{\includegraphics[width = 0.28\columnwidth]
			{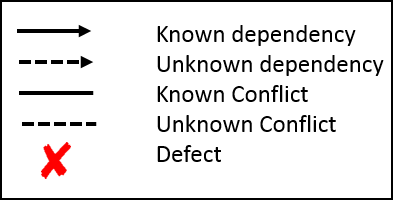}} &	
		\subfloat[Repository]{\includegraphics[width = 0.32\columnwidth]
		{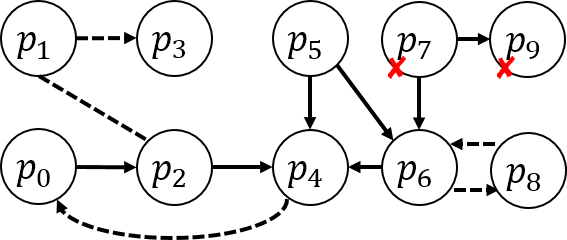}\label{subfig:input}} &			
		\subfloat[Output]{\includegraphics[width = 0.32\columnwidth]
			{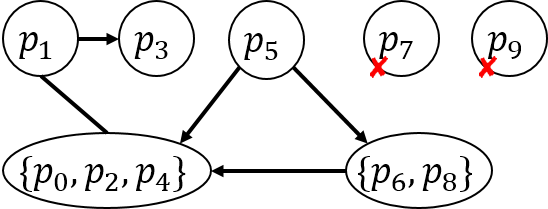}\label{subfig:output}
		} 
	\end{tabular}
	\caption{An example of \learnAll{}.  
	The input is depicted in~\ref{subfig:input}. The output is shown
	in~\ref{subfig:output}, and contains the strongly
	connected components of the actual dependency graph along 
	with the full specification of the dependencies, defects and conflicts.
	}
	\label{fig:example}
\end{figure*}
\section{Learning when all dependencies are known}
In this section, we assume that all the dependencies are known, no conflicts
exist, but the repository may contain some root defects that could fail an
installation. Specifically, we allow at most $r$ root defects, while there can be \mbox{as many as $n$ defects overall.}

\begin{theorem}\label{thm:GTLB}
\ifdefined\TACASVersion
\vspace*{-1.5mm}
\fi
	Denote $B\triangleq \ceil{\log_2\parentheses{\sum_{i=0}^{r}{n\choose
	i}}}$; any algorithm that solves group testing on $n$ items 
	and at most $r$ defects requires $B$ queries.
\ifdefined\TACASVersion
\vspace*{-1.5mm}
\fi
\end{theorem}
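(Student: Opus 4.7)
The plan is to prove this via a standard information-theoretic / decision-tree argument. The target lower bound $B=\lceil \log_2 \sum_{i=0}^{r} \binom{n}{i}\rceil$ is exactly the logarithm of the number of feasible configurations, so the strategy is to argue that the algorithm must be able to distinguish each of these configurations, and each query yields at most one bit of information.

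First I would count the number of possible defect sets. Since the set of defects $S\subseteq[n]$ satisfies $|S|\le r$, the number of feasible configurations is exactly $N\triangleq \sum_{i=0}^{r}\binom{n}{i}$. Any correct algorithm must, upon termination, output the correct $S$, and hence must be able to distinguish every pair of such sets.

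Next I would model an adaptive algorithm as a binary decision tree: each internal node is labeled by a query $T\subseteq[n]$, its two children correspond to the two possible Boolean feedbacks, and each leaf is labeled by the algorithm's output. The query complexity of the algorithm is the depth of the tree. For correctness, for every feasible $S$ the execution path determined by $S$ must terminate at a leaf whose label equals $S$; in particular, two distinct feasible defect sets $S\ne S'$ cannot route to the same leaf, for otherwise the algorithm would return the same answer on both and be wrong on one of them. Hence the tree must have at least $N$ leaves. Since a binary tree of depth $q$ has at most $2^q$ leaves, we get $2^q\ge N$, i.e., $q\ge \lceil\log_2 N\rceil = B$. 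This argument subsumes the non-adaptive case, where the decision tree is in fact a full balanced tree of depth equal to the number of queries.

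The only subtlety to be careful about is showing that two distinct feasible configurations $S\ne S'$ actually \emph{can} produce different answer sequences at some point along any execution, so that collapsing them at a single leaf is indeed a contradiction. This is immediate: pick any $i\in S\triangle S'$ and consider the singleton query $T=\{i\}$, which distinguishes them; hence no leaf can be reached by both $S$ and $S'$ under a correct algorithm, and the counting argument goes through. I do not expect any real obstacle beyond this routine verification; the proof is essentially a one-line information-theoretic bound once the decision-tree formalism is in place.
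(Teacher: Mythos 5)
Your proof is correct. Note, however, that the paper does not actually carry out this argument: its proof of Theorem~\ref{thm:GTLB} consists of the single observation that when $K=\emptyset$ the problem degenerates to classical group testing on $n$ items with at most $r$ defects, and then it invokes the known information-theoretic lower bound from the cited reference (Li, 1962). What you have written is a self-contained derivation of that cited bound via the standard decision-tree counting argument: $\sum_{i=0}^{r}{n\choose i}$ feasible defect sets, each of which must reach a distinct leaf of the binary decision tree, so a tree of depth $q$ correct on all instances needs $2^q\ge\sum_{i=0}^{r}{n\choose i}$, i.e., $q\ge B$. One small remark: your ``subtlety'' about exhibiting a distinguishing query is not actually needed --- if two distinct feasible sets $S\neq S'$ reach the same leaf, the algorithm outputs the same set for both and is therefore wrong on at least one of them, which already completes the contradiction. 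Your argument correctly covers both the adaptive and non-adaptive cases, matching the generality of the theorem statement.
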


\begin{proof}
We start by observing that if there are no known dependencies
($K=\emptyset$), the problem reduces to group testing over $n$ items and at most
$r$ defects. Thus, the lower bound proven in~\cite{li1962sequential} applies to our problem as well.
\vspace*{-1mm}
\end{proof}

We proceed with an algorithm for the case where $K \neq \emptyset$, which is
based on the Generalized Binary Splitting (GBS) method~\cite{Hwang} mentioned
above.
Specifically, we show that its routine can be implemented despite the constraints imposed by the dependencies.
We show that in this case, the number of tests required to learn the root defects and solve \learnAll{adaptively} is similar to that of group testing. 
Intuitively, the GBS algorithm arbitrarily chooses the sets to test while
determining only their size. Here, we use the set of known dependencies $K$ to
determine which packages to try at each point.
In order to find a defect in a set of $2^\alpha$ packages (for an $\alpha\in
\mathbb{N}$), we first compute a topological sort $L$ on its
vertices~\cite{Kahn}, whereas the vertices with no outgoing dependencies have the highest indexes.
That is possible as $K$ contains no cycles (as explained in Section~\ref{sec:terminology}). Then we
first test the $2^{\alpha-1}$ packages with the lowest indices in $L$. 
If the test fails, we recurse on the tested vertices. Otherwise, we recurse on
the remaining packages \emph{while adding the non-defective vertices} to all future installations. 
That is, since we know that these $2^{\alpha-1}$ packages are non-defective, we can 
safely add them to all other queries, thereby resolving dependencies of the
other packages. Next, we follow a similar procedure for the main GBS iteration.
Our algorithm starts by selecting the $2^\alpha$ packages with the highest index
in $L$ and thus ensures that they do not depend on other modules. If the test fails, we can use the above to find a defect using $\alpha$ queries.
Once a root defect has been identified, we remove all packages that depend on it
as they are considered as defects.
On the other hand, if the test succeeds, we repeat while adding these packages
to future installations. Finally, if $n\le 2r-2$ we can individually test 
each package according to their index in $L$. Throughout the algorithm, we 
maintain the reservoir that for any two packages $p,q\in P$ such that
$L[p]>L[q]$, $p$ is tested without $q$ only if $q$ is identified as a defect.
Thus, we never test a package without installing all modules it depends on.
\ifdefined\TACASVersion
We provide a detailed pseudo code of our method in the full version of the paper~\cite{fullVersion}.
\else
We provide a pseudo code of our method in Algorithm~\ref{alg:basic}.


\begin{algorithm}[t]
    \ifdefined\fullVersion
	   \footnotesize
	\else
	        \scriptsize
	\fi

	\begin{algorithmic}[1]
		\Function{FindDefects}{$P,K,r$}\Comment{Find at most $r$ root defects in
		$P$} \State $S\gets \emptyset$ \Comment{The set of identified non-defect packages}
		\State $R \gets \emptyset$ \Comment{The set of identified root defects}
		\State $\mathfrak r \gets r$ \Comment{A bound on the number of unidentified root defects} \State $X \gets P$ \Comment{The rest of untested packages}
		\State $L \gets \text{Topolocial Sort}(P,K)$
		\While {$|X| > 2\mathfrak r-2$}
		\State $l \gets |X|-\mathfrak r+1$
		\State $\alpha \gets \floor{\log_2 l/\mathfrak r}$ \Comment{As defined in GBS
		procedure~\cite{Hwang}} 
		\State $T \gets \mbox{A $2^\alpha$-sized subset of $X$ with minimal
		$L$ indices}$
		\If{$(T\cup S)$ fails} \Comment{Test $T\cup S$}
		\State $p \gets \mbox{\sc {FindSingleDefect}}(T)$ \Comment{Find a root defect using $\alpha$ tests}
		\State $R\gets R\cup \set{p}$
		\State $X\gets  X \setminus\set{p'\in P\mid p'\mbox{ is reachable from $p$ in
		}(P,K)}$ \Comment{Remove all packages that
		depend on $p$} \State $\mathfrak r\gets \mathfrak r-1$ \Else\Comment{If the test succeeded}
		\State $S \gets S \cup T$
		\EndIf
		\State $X\gets  X \setminus S$ \Comment{Remove from $X$ packages
		discovered as working}
		\EndWhile
		\For {$p\in X$, in an increasing order of $L$}
		\If {$(\set{p}\cup S)$ fails} \Comment{Test $\set{p}\cup S$}
		\State $R\gets R\cup \set{p}$
		\State $X\gets  X \setminus\set{p'\in P\mid p'\mbox{ is reachable from $p$ in
		}(P,K)}$\Comment{Can be computed using BFS}
		\Else
		\State $S\gets S \cup \set{p}$
		\EndIf
		\EndFor
		\State\Return $R$
		\EndFunction
		
		\Function{FindSingleDefect}{$T$}
		\State $A \gets T$ \Comment{The set of suspicious packages}
		\While {$|A|>1$}
		\State $B \gets \mbox{An $|A|/2$-sized subset of $A$ with minimal $L$ indices}$            
		\If{$(B\cup S)$ fails}\Comment{Test $B\cup S$}
		\State $A\gets B$
		\Else\Comment{If the test succeeded}
		\State $S \gets S \cup B$
		\State $A \gets A\setminus B$                
		\EndIf    
		\EndWhile
		\State\Return $p\in A$ \Comment{The remaining package is a root defect.}
		\EndFunction        
	\end{algorithmic}
	\normalsize
	\caption{Identifying root defects given all dependencies}
	\label{alg:basic}
\end{algorithm}
\fi
Since in each test all dependencies are satisfied, and as we follow GBS at each iteration, we conclude the correctness and query complexity of our algorithm. 
\ifdefined\TACASVersion
\begin{theorem}\label{thm:knownDependenciesUpperBound}\vspace*{-1mm}
	Our algorithm finds the root defects (and thus solves \learnAll{}) using at most $B+r-1$ queries, where $B$ is the lower bound from  Theorem~\ref{thm:GTLB}.\vspace*{-1mm}
\end{theorem}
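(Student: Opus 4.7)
\medskip

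\noindent\textbf{Proof plan.}
My plan is to establish (i) \emph{correctness} — every installation attempted by the algorithm is a valid test whose outcome reflects the defect status of the packages being considered — and (ii) the \emph{query count} — matching that of standard GBS.

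For correctness, I would maintain the following invariant throughout the execution: at every step, (a) every package in $S$ has been certified non-defective, (b) every package removed from $X$ is either a root defect or transitively depends on one in $(P,K)$, and (c) the dependencies of every package still in $S\cup X$ lie inside $S\cup X$. The base case is trivial ($S=\emptyset$, $X=P$, and by assumption $G_K$ is acyclic). For the inductive step, I would argue that each query set of the form $T\cup S$ is a valid installation: because $T$ is chosen as a set of \emph{sinks} of the induced subgraph $(X,K|_X)$ (the packages at the extreme end of the topological order of the current untested set), all of $T$'s dependencies lie inside $T\cup S$, using (c) and the fact that $S$ already satisfies its own dependencies. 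Consequently, if the query $T\cup S$ returns success, no member of $T$ is defective (so $T$ may be folded into $S$); if it fails, at least one root defect lies in $T$, which \textsc{FindSingleDefect} then isolates by the same sink-subset argument applied recursively inside $T$. Identifying a root defect $p$ justifies removing $p$ together with all packages reachable from $p$ in $(P,K)$, since by definition these cannot appear in any successful installation.

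For the query complexity, I would observe that our procedure simulates Hwang's GBS~\cite{Hwang} step-for-step: in each outer iteration, either one query of size $2^\alpha$ certifies $2^\alpha$ packages as non-defective, or $\alpha+1$ queries isolate a single root defect; the final phase, reached only when $|X|\le 2\mathfrak r -2$, performs individual tests. Because $S$ contains only certified non-defective packages, augmenting each query with $S$ does not alter the Boolean outcome with respect to the GBS view (the presence or absence of a defect in $T$), and pruning packages reachable from an identified defect only shrinks the remaining candidate pool. Hence the total number of queries is bounded by the GBS count on $n$ items with at most $r$ defectives, which Hwang showed to be at most $B+r-1$, with $B$ the information-theoretic lower bound of Theorem~\ref{thm:GTLB}.

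The main obstacle I anticipate is verifying the invariant cleanly across \emph{both} the outer loop and the recursion inside \textsc{FindSingleDefect}, since in the latter we must pick sink-closed halves of $T$ itself (a DAG obtained by restricting $K$ to $T$), while simultaneously arguing that successful sub-queries truly certify non-defectiveness of the newly added packages. I would handle this by noting that, since the induced subgraph on $T$ is still acyclic, any $|A|/2$-sized set of topological sinks of $A$ is a legal choice of $B$, so every recursive test is again of the form "sink-subset $\cup S$" and enjoys the same validity guarantee. Once all root defects are identified, the full defect set $D$ is their forward closure under $K$, and because all dependencies are known and no conflicts exist, the output graph $\bar G$ is fully determined, solving \learnAll{} within the claimed budget. \qed
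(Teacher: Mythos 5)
Your proposal is correct and takes essentially the same approach as the paper: show that every query $T\cup S$ is dependency-closed and hence a valid installation whose outcome coincides with the group-testing feedback ``does $T$ contain a root defect,'' then invoke Hwang's GBS accounting to get the $B+r-1$ bound. The paper compresses this into a single sentence (``since in each test all dependencies are satisfied, and as we follow GBS at each iteration\ldots''), so your invariant-based argument simply supplies the details the paper leaves implicit.
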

\else
This also proves the following theorem:
\begin{theorem}\label{thm:knownDependenciesUpperBound}
	Algorithm~\ref{alg:basic} finds the root defects (and thus solves \learnAll{}) using at most $B+r-1$ queries, where $B$ is the lower bound from  Theorem~\ref{thm:GTLB}.
\end{theorem}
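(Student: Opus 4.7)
The plan is to split the proof into two parts: correctness (the algorithm identifies all root defects, which then suffices for \learnAll{}) and query complexity (the number of installation attempts is at most $B + r - 1$).

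For correctness, the key invariant I would maintain is that every installation tested by the algorithm is \emph{dependency-closed under $K$}: for every queried set $I$ and every $p\in I$, all known dependencies $q$ with $(p,q)\in K$ also lie in $I$. This invariant holds vacuously at the start and is preserved by every query because (i) $S$ is dependency-closed at all times, since it is grown only by appending subsets that were just successfully tested together with $S$ (so each new addition already had its full dependency set present in the query); (ii) the topological order $L$ of $G_K$ is used to select $T$ so that all of $T$'s outgoing $K$-arcs land in $T \cup S$; (iii) whenever a root defect $p$ is identified, every module whose only installation route passes through $p$ is removed from $X$. Because tests are dependency-closed, any failing test must fail due to a defect in the queried set, so the binary-splitting routine \textsc{FindSingleDefect} pinpoints exactly one root defect per failed outer test. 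Induction on the remaining root-defect budget then gives correctness of $R$, and \learnAll{} follows: the complete defect set is $\{p \in P : p \text{ reaches some element of } R \text{ in } G_K\}$, and since $C = U = \emptyset$ in this setting, the rest of $\bar G$ is read off directly from $G_K$.

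For the query-complexity bound, I would show that the algorithm's test schedule is a faithful simulation of Hwang's Generalized Binary Splitting~\cite{Hwang}. In each outer iteration with $|X| > 2\mathfrak r - 2$ we test a block of size $2^\alpha$ with $\alpha = \lfloor \log_2((|X| - \mathfrak r + 1)/\mathfrak r)\rfloor$; a success uses one query and shrinks $X$ by $2^\alpha$, while a failure uses one query plus $\alpha$ additional queries inside \textsc{FindSingleDefect} to isolate a single root defect and decrease the residual budget by one. The fallback individual-testing phase contributes at most $2\mathfrak r' - 2$ queries for the final budget $\mathfrak r'$. Since padding every query by the fixed set $S$ does not change the Boolean outcome of the underlying group test, Hwang's potential argument applies verbatim and yields a total of at most $\clogp{\sum_{i=0}^{r}{n \choose i}} + r - 1 = B + r - 1$.

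The main obstacle is the rigorous verification of the dependency-closure invariant inside \textsc{FindSingleDefect}, where the suspicious set $A$ shrinks by halves while $S$ grows in lockstep. One must argue that each chosen half $B \subseteq A$ satisfies $B \cup S$ being dependency-closed, and also that when $B$ is subsequently added to $S$ on a successful test, the enlarged $S$ remains dependency-closed. The cleanest way to handle this is to prove by induction on the recursion that both $A$ and its complement within the original block are up-sets with respect to the topological order on $T$, so that peeling off an $L$-extreme half always yields a set whose $K$-dependencies land either in the tested half or in the already-accumulated $S$.
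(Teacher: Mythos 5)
Your proposal follows essentially the same route as the paper: maintain the invariant that every queried set is closed under the known dependencies (via the topological order $L$ and the ever-growing set $S$ of verified packages), and then invoke Hwang's GBS analysis unchanged, since padding every query by $S$ does not alter the Boolean group-testing semantics, yielding the $B+r-1$ bound. You are in fact more explicit than the paper about the one delicate point --- that the halving inside \textsc{FindSingleDefect} must itself preserve dependency closure --- which the paper only asserts via its stated invariant that $p$ is never tested without its lower-indexed prerequisites.
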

\fi
\section{Learning with Unknown~Dependencies}\label{sec:unknownDeps}
In the previous section, we assumed that all the dependencies are known and identified 
the defective packages. Here, we assume that some of the dependencies in the repository 
may not be documented. Thus, the GBS variant we proposed no longer works, and we need a different solution.

We now show that even if there exists no more than a single unknown dependency
and a single root defect, no algorithm with a sub-linear number of queries exists even when adaptivity is allowed.
Note that in this section we solve only \maxProblem{}. The more difficult
problem, \learnAll{}, needs to be solved using algorithms presented in
Section~\ref{sec:all}.

    \begin{theorem}\label{thm:oneRootOneUnknownDep}
	Any \textbf{adaptive} algorithm that solves \maxProblem{} must make at least $n$
	queries in presence of unknown dependencies \mbox{and root defects.}
	\end{theorem}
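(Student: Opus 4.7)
My plan is to prove the lower bound by an adversary argument that exploits the ambiguity introduced by unknown dependencies. For each pair of distinct packages $i \neq j$, consider the scenario $\sigma_{i,j}$ in which $p_i$ is the sole root defect and $(p_j, p_i) \in U$ is the sole unknown dependency. Since $p_j$ depends on the defective $p_i$, it becomes a cascaded defect, so $P_{\max} = P \setminus \{p_i, p_j\}$; distinct pairs yield distinct answers. I will also keep in the adversary's pool the scenarios $\tau_i$ where $p_i$ is the sole root defect with no unknown dependency, giving $P_{\max} = P \setminus \{p_i\}$.

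The adversary's strategy is to answer every non-empty query $Q$ with ``fail.'' In $\sigma_{i,j}$, a query $Q$ fails iff $\{p_i,p_j\}\cap Q \neq \emptyset$, so such an answer is consistent with $\sigma_{i,j}$ precisely when $\{p_i,p_j\}$ is a hitting pair for all queries asked so far. In $\tau_i$, a ``fail'' answer to $Q$ is consistent iff $p_i \in Q$. I will first verify that after any sequence of ``fail'' responses the adversary can commit to at least one scenario, since the responses are simultaneously consistent with every $\sigma_{i,j}$ whose pair hits the queries, and with $\tau_i$ for every $p_i$ in the intersection of the queries.

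Next I will argue the quantitative core: for any $k < n$ queries $Q_1,\ldots,Q_k$, one can find two consistent scenarios with different $P_{\max}$. The crucial step is to analyze the adversary's "commit to the dependency partner" phase: once the algorithm has narrowed the root defect to some $p_i$, each remaining scenario $\sigma_{i,j}$ corresponds to a different choice of $j \in P \setminus \{p_i\}$ and yields a different $P_{\max}$. For any query $Q$ with $p_i \notin Q$, answering ``success'' kills only $\sigma_{i,j}$ for $p_j \in Q$, while ``fail'' kills $\tau_i$ and $\sigma_{i,j}$ for $p_j \notin Q$; the adversary can always choose the answer keeping the larger of the two sides. I will show that, combined with the $\log n$ ``group-testing'' queries needed merely to pin down the index $i$, identifying the partner $j$ among the $n-1$ remaining possibilities cannot be shortcut below $n - O(\log n)$ queries, because the cascade-partner scenarios are not eliminable by the same ``balanced split'' trick that works for the defect index: the ``fail'' side always survives intact.

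The main obstacle I expect is the last step: showing that the adversary's ``fail''-favoring strategy really forces linear (rather than logarithmic) progress during the partner-identification phase. Concretely, I will need a careful accounting argument that tracks surviving pairs $(i,j)$ as a function of the query structure and shows that an information-theoretic halving argument fails here because ``fail'' answers are indistinguishable between ``$p_i$ is the culprit'' and ``$p_j$ is the cascaded partner,'' so the adversary can always blame the same query on multiple distinct pairs, yielding the claimed $n$-query lower bound.
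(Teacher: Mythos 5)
There is a genuine gap, and it is fatal to the approach rather than a fixable detail. Your entire adversary family --- the scenarios $\sigma_{i,j}$ (one root defect $p_i$ plus one unknown dependency $(p_j,p_i)$) and $\tau_i$ (one root defect) --- lives in the regime of at most $2$ defects and at most $1$ unknown dependency with $K=\emptyset$. By Theorem~\ref{thm:unknownDependenciesUB}, \maxProblem{} in that regime is solvable \emph{non-adaptively} with $\parameterizedCFFSize{2}{2}=O(\log n)$ queries, so no adversary argument over this scenario family can certify an $\Omega(n)$ lower bound. Concretely, the obstacle you flag at the end cannot be overcome: in the ``partner-identification'' phase, a query $Q$ with $p_i\notin Q$ containing exactly half of the remaining candidate partners is answered ``fail'' in $\sigma_{i,j}$ iff $p_j\in Q$, so \emph{either} answer eliminates half of the surviving pairs (a ``fail'' kills every $\sigma_{i,j}$ with $p_j\notin Q$, since those scenarios would have answered ``success''). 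The ``fail side'' does not survive intact, binary search identifies $j$ in $O(\log n)$ queries, and your hoped-for accounting argument does not exist. (Your opening ``always answer fail'' strategy also collapses immediately: after failing the two singletons $\set{p_1}$ and $\set{p_2}$, the only surviving scenarios are $\sigma_{1,2}$ and $\sigma_{2,1}$, which have the \emph{same} maximal installation, so the algorithm can stop.)

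The ingredient you are missing is the set of \emph{known} dependencies, which the paper uses in two essential ways. It takes $K$ to be a directed path $a_n\to a_{n-1}\to\cdots\to a_1$, so that (i) the only installations that can possibly succeed are the $n$ prefixes $\set{a_1,\ldots,a_i}$, and (ii) a single root defect $a_{i+1}$ cascades through the chain into unboundedly many defects (the total number of defects is \emph{not} bounded by a constant, which is exactly why the $O(\log n)$ upper bound does not apply). The adversary then picks an untested prefix $\set{a_1,\ldots,a_i}$, places the single unknown dependency as $(a_1,a_i)$ and the root defect at $a_{i+1}$, making that untested prefix the \emph{unique} viable installation; since every query the algorithm actually made fails under this instance and also under the instance where $a_1$ itself is the root defect, the algorithm cannot distinguish the answer $\set{a_1,\ldots,a_i}$ from $\emptyset$. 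Any correct proof of this theorem must similarly escape the bounded-$(d,u)$ regime.
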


\begin{proof}
	Denote $P\triangleq\set{a_1,\ldots,a_n}$ and consider the directed path graph 
	given by 
	$K\triangleq\{(a_i,a_{i-1})\mid $$ i\in\set{2,\ldots,n}\}$ (as illustrated in
	Figure~\ref{fig:rootLB}).
	Any installation 
	considered by the algorithm is either a prefix of the line, i.e., 
	$\set{a_1,a_2,\ldots,a_i}$ for some $i\in\set{1,2,\ldots n}$, or an
	installation that does not take all of the prerequisites into consideration
	(and thus fails).
	In the former case, let us assume, by contradiction, that there exists an
	$i\in\set{1,2,\ldots n}$ such that the $\set{a_1,a_2,\ldots,a_i}$ installation was not tested by the algorithm. We
	use an adversary argument and show that the algorithm cannot distinguish
	between two problem instances with a distinct solution.
	For this, consider 
	$U=\set{(a_1,a_i)}$ and assume that $a_{i+1}$ is a root
	defect, as illustrated in Figure~\ref{fig:rootLB} 
	. The only installation that could work
	is $\set{a_1,a_2,\ldots,a_i}$, which the algorithm did not test. Thus, all
	queries made by the algorithm came back negative. There is no way for 
	the algorithm to know whether the solution should be $\set{a_1,a_2,\ldots,a_i}$
	or $\emptyset$ which reflects the case where $a_1$ is a root~defect.
\end{proof}

\begin{figure}[]
	\center
	\includegraphics[width = 0.8\columnwidth]{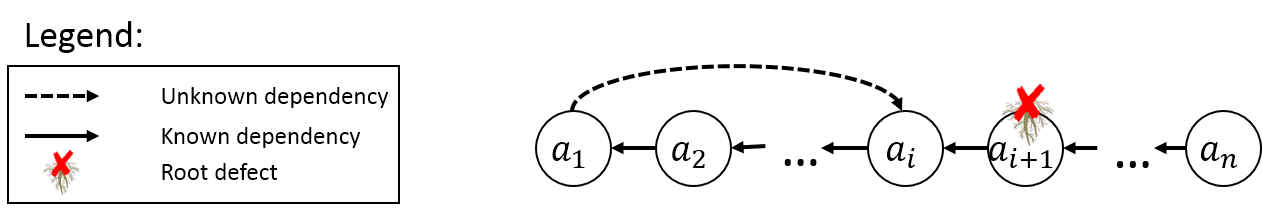}
	
	\caption{If the algorithm does not test the installation $\set{1,\ldots i}$, for some $i\in\set{1,\ldots n}$, then it cannot distinguish between the case where $a_1$ is a root defect and thus no installation succeeds and the case where $a_1$ has unknown dependency on $a_i$ and $a_i+1$ is a root defect.
	}
	\label{fig:rootLB}
\end{figure}

As shown in the theorem above, if we do not bound the number of defects, no algorithm can efficiently solve the problem even for a single root defect. Recall that a defect is a package that cannot be included in a successful installation. This can be either due to a bug in the package itself or due to a dependency on a corrupted package.
Thus, we hereafter consider a bound $d$ on the number of defects. 
Intuitively, we will show that if $d$ is small, the problem becomes 
tractable again. Note that in Theorem~\ref{thm:GTLB} no bound was imposed on the
number of defects, but rather on the quantity of \textit{root~defects}.

We proceed with a lower bound on the number of queries required by any non-adaptive algorithm when the number of defects is bound by $d$. Recall that
$\CFFSize{}$ is the size of a \nabCFF{} as described in Section~\ref{sec:CFFs}.

\ifdefined\TACASVersion
\begin{theorem}\label{thm:unknownDependenciesLB}(*)
Assume that the repository contains at most $d$ defects and $u$ unknown
dependencies.
Any \textbf{non-adaptive} algorithm for \maxProblem{} must make at least $\parameterizedCFFSize{d-1}{u+1}$ queries.\footnote{Proofs of the results marked (*) are omitted due to space limitation and appear in the full version~\cite{fullVersion}.}
\end{theorem}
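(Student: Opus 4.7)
The plan is an adversary / indistinguishability argument driven directly by the definition of a cover-free family. The strategy is to assume by contradiction that a non-adaptive algorithm solves \maxProblem{} with fewer than $\parameterizedCFFSize{d-1}{u+1}$ queries, view its queries as characteristic vectors in $\{0,1\}^n$ over $P$, and conclude that they cannot form an $(n, d-1, u+1)$-\CFF, since $\parameterizedCFFSize{d-1}{u+1}$ is by definition the minimum size of such a family. Hence there must exist disjoint sets $A, B \subseteq P$ with $|A| = d-1$ and $|B| = u+1$ such that no query $Q$ simultaneously satisfies $Q \cap A = \emptyset$ and $B \subseteq Q$.

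Next I would use this ``uncovered'' pair $(A, B)$ to exhibit two problem instances that the algorithm cannot tell apart but whose \maxProblem{} answers differ. Fix an arbitrary $p^* \in B$ and write $B \setminus \{p^*\} = \{q_1, \ldots, q_u\}$. In instance $\alpha$, let $A$ be the entire defect set (size $d-1$) and introduce the unknown dependencies $(p^*, q_1), \ldots, (p^*, q_u)$ (exactly $u$ of them), so $p^*$ is installable as long as all of the $q_i$ are also present. In instance $\beta$, let $A \cup \{p^*\}$ be the defect set (size $d$) and use no unknown dependencies. Both instances satisfy the bounds, and the maximum installations are $P \setminus A$ (which safely contains $p^*$ together with its prerequisites) for $\alpha$ and $P \setminus (A \cup \{p^*\})$ (which excludes $p^*$) for $\beta$; in particular the two correct answers differ on the element $p^*$.

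The heart of the proof is verifying that the Boolean feedback on every algorithm query $Q$ is identical in the two instances. If $Q \cap A \neq \emptyset$, both installations fail because $A$ consists of defects in both. If $Q \cap A = \emptyset$, the uncoverage of $(A, B)$ forces $B \not\subseteq Q$, which splits into two subcases: either $p^* \notin Q$, in which case both installations succeed (no dependency of $\alpha$ is triggered and $\beta$'s only extra defect is absent), or $p^* \in Q$ with some $q_i \notin Q$, in which case $\alpha$ fails because the unknown dependency $(p^*, q_i)$ is violated while $\beta$ fails because $p^*$ is a defect. I expect the main subtlety to be calibrating the construction so that the two CFF parameters match the two budgets in the problem: one slot of $B$ is reserved for the dependent package $p^*$ and the remaining $u$ slots for its prerequisites, while the defect budget is split so that $\beta$ may promote $p^*$ into a true defect, yielding exactly $d$ defects and $0$ unknown dependencies in $\beta$ versus $d-1$ defects and $u$ unknown dependencies in $\alpha$. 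Since the feedback agrees on every algorithm query but the correct \maxProblem{} outputs differ by $p^*$, the non-adaptive algorithm must be incorrect on at least one of the two instances, contradicting its assumed correctness.
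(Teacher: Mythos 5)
Your proposal is correct and follows essentially the same adversary argument as the paper: both extract an uncovered pair $(S_1,S_2)$ of sizes $d-1$ and $u+1$ from the failure of the query set to be a \parameterizedCFF{d-1}{u+1}, make one element of $S_2$ the package whose defectiveness is ambiguous, and hang $u$ unknown dependencies on it so that every attempted installation returns the same feedback in both instances. The only (immaterial) difference is that the paper keeps the dependencies $(p,a_i)$ present in both scenarios and toggles only whether $p$ is defective, whereas you drop them in the all-defect instance; your explicit case analysis of the feedback is a slightly more careful write-up of the same idea.
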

\else
\begin{theorem}\label{thm:unknownDependenciesLB}
Assume that the repository contains at most $d$ defects and $u$ unknown
dependencies.
Any \textbf{non-adaptive} algorithm for \maxProblem{} must make at least $\parameterizedCFFSize{d-1}{u+1}$ queries.
\end{theorem}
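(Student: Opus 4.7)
The plan is to prove the lower bound via an adversarial argument showing that the characteristic vectors of any sufficient query family must form an \parameterizedCFF{d-1}{u+1}. Suppose for contradiction that a non-adaptive algorithm has a query family $\CFFLetter \subseteq \set{0,1}^n$ (identifying each installation $I \subseteq P$ with its characteristic vector) of size smaller than $\parameterizedCFFSize{d-1}{u+1}$. Then $\CFFLetter$ fails the \parameterizedCFF{d-1}{u+1} property, so there exist disjoint $S_1, S_2 \subseteq P$ with $|S_1| = d-1$ and $|S_2| = u+1$ such that no query $I \in \CFFLetter$ satisfies both $I \cap S_1 = \emptyset$ and $S_2 \subseteq I$.

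Next, I would exhibit two instances of the repository that yield identical feedback on every query in $\CFFLetter$ yet have different optimal installations. Specializing to $K = \emptyset$ and fixing an arbitrary $s^{\star} \in S_2$, define \textbf{Instance A} by defects $D_A = S_1$ and unknown dependencies $U_A = \set{(s^{\star}, s) : s \in S_2 \setminus \set{s^{\star}}}$ (exactly $u$ of them), and \textbf{Instance B} by defects $D_B = S_1 \cup \set{s^{\star}}$ and $U_B = \emptyset$. Both instances respect the bounds $d$ and $u$. In Instance A, $P \setminus S_1$ is installable (no defect is present and $s^{\star}$'s hidden dependencies are all satisfied), so $|P_{\max}| = n-d+1$. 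In Instance B, any successful installation must avoid $S_1 \cup \set{s^{\star}}$, so $|P_{\max}| = n - d$.

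The key step is to verify that every query yields identical feedback on both instances unless it has $I \cap S_1 = \emptyset$ and $S_2 \subseteq I$. Indeed, if $I \cap S_1 \neq \emptyset$ both instances fail (as $S_1 \subseteq D_A \cap D_B$); if $s^{\star} \notin I$ both succeed (the unknown dependency is vacuous in A and no defect is present in B); and if $s^{\star} \in I$ but some $s \in S_2\setminus\set{s^{\star}}$ is missing, A fails because of the hidden dependency $(s^{\star}, s)$ while B fails because $s^{\star} \in D_B$. Queries that discriminate are precisely those with $I \cap S_1 = \emptyset$ and $S_2 \subseteq I$, which by assumption are absent from $\CFFLetter$. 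Hence the algorithm sees the same feedback vector in both instances and must output the same set, contradicting the fact that the maximal installations have different sizes $n-d+1$ and $n-d$.

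The main obstacle is designing the coupling between the two instances through $s^{\star}$: in Instance A the hidden dependencies ``tie'' $s^{\star}$ to the rest of $S_2$, disguising $s^{\star}$ as a non-defect that simply needs company, while in Instance B $s^{\star}$ is an outright defect. This is exactly the asymmetry that the \parameterizedCFF{d-1}{u+1} structure is designed to detect, and once the indistinguishability is established, the numerical gap between $n-d+1$ and $n-d$ yields the contradiction for free.
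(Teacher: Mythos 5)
Your proof is correct and follows essentially the same route as the paper: take $K=\emptyset$, extract from the failed CFF property disjoint sets $S_1$ (the $d-1$ defects) and $S_2=\set{s^\star}\cup(S_2\setminus\set{s^\star})$ with $s^\star$ hiddenly depending on the other $u$ packages, and observe that the algorithm cannot tell whether $s^\star$ is itself a defect. Your write-up is somewhat more explicit than the paper's in spelling out the two instances and checking feedback equality case by case, but the construction and the indistinguishability argument are identical.
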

\begin{proof}
Consider $K=\emptyset$, i.e., a repository with no known dependencies. Assume that an algorithm 
tries less than $\parameterizedCFFSize{d-1}{u+1}$ installations before its output. 
Then there exists a pair of disjoint package-sets $S_1,S_2$, of sizes $d-1$ and $u+1$, such that no 
attempted installation includes all of $S_2$ and none of the packages in $S_1$. 
We show that in 
this case, it cannot possibly find the maximal installation in the worst case.
We denote $S_1\triangleq\set{x_1,\ldots x_{d-1}}$ and 
$S_2\triangleq\set{a_1,\ldots a_u}\cup \set{p}$. 
The set of (unknown) dependencies is $U\triangleq\set{(p,a_i)\mid
i\in\set{1,\ldots,u}}$ and the set of defects includes $\set{x_1,\ldots x_{d-1}}$. An illustration of the setting appears in Figure~\ref{fig:unknownDependenciesLB}.

Note that thus far $|S_1|=d-1,|S_2|=u+1, |U|= u, |D|=d-1$. We now claim that the algorithm cannot 
possibly distinguish between the case where $p$ is a defect and the case where it is not. 
Notice that in order for $p$ to be a part of a successful installation, the
installation must contain $S_2$ and none of the packages in $S_1$. Thus, all
installations attempted by the algorithm were either unsuccessful or did not contain $p$. Since the same test results would be obtained regardless of whether $p$ is defective, we conclude that the algorithm cannot determine the maximal installation as it must contain $p$ if it is not defective.
\end{proof}

\begin{figure}[]
	\center
	\includegraphics[width =\columnwidth]{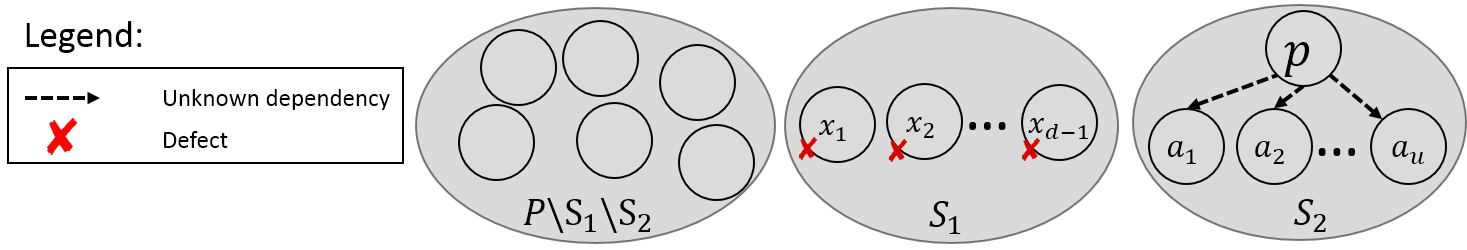}

	\caption{If the algorithm does not test an installation that includes
	 ${p,a_1,\ldots,a_u}$ and excludes ${x_1,\ldots,x_{d-1}}$ then it cannot
	  determine whether $p$ is defective and thus cannot solve \maxProblem{}.
	}
	\label{fig:unknownDependenciesLB}
\end{figure}

\fi
We now provide a non-adaptive algorithm for \maxProblem{} that requires 
$\parameterizedCFFSize{d}{u+1}$ queries. Notice that it is optimal up to the (-1) 
factor in the first parameter. We note that an algorithm for \learnAll{} is presented in the following section, but here we provide a more efficient algorithm for the simpler \maxProblem{} problem when no conflicts exist.

Intuitively, we construct a $\parameterizedCFF{d}{u+1}$, factor in the known dependencies, and get a  set of tests that will later allow us to infer the maximal installation. 
The improvement in query complexity over the \learnAll{} algorithm presented below is that when no conflicts exist, finding the maximal installation is equivalent to identifying defects.
\begin{observation}
When no conflicts exist, the set of non-defective packages is the maximal installation.
\end{observation}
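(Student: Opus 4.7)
The plan is to show two inclusions: every successful installation is contained in $P\setminus D$, and conversely $P\setminus D$ is itself a successful installation. Combined, this forces $P\setminus D$ to be the (unique) maximum-size successful installation.

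First I would handle the easy direction. By the definition of a defect, a package $p\in D$ cannot be part of any successful installation, so any valid installation $I$ satisfies $I\subseteq P\setminus D$. This immediately gives $|I|\le |P\setminus D|$ for every successful $I$.

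Next, the main step is to show that $I^\star\triangleq P\setminus D$ is successful when $C=\emptyset$. There are three constraint types to check: conflicts, defects, and dependencies. Conflicts are vacuous by assumption. By construction $I^\star$ contains no defective package. For dependencies, I would argue by contradiction: suppose some $p\in I^\star$ has a dependency $(p,q)\in K\cup U$ with $q\notin I^\star$. Then $q\in D$. But the model (Section~\ref{sec:terminology}) stipulates that any package depending on a defective module is itself a defect, hence $p\in D$, contradicting $p\in I^\star$. Therefore all dependencies out of $I^\star$ land back in $I^\star$, so the installation attempt succeeds.

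The only subtlety to double-check is that the transitive-defect property in the model applies to the full dependency set $K\cup U$ (not just $K$), which the preliminaries already make clear. Once that is noted, the observation follows immediately: $I^\star=P\setminus D$ is a successful installation of maximum possible size, so solving \maxProblem{} in the conflict-free setting reduces precisely to identifying the defect set~$D$.
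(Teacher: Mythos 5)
Your proof is correct and matches the paper's (implicit) reasoning: the paper states this as an unproved observation, and the intended justification is exactly your two inclusions — defects can never appear in a successful installation, and $P\setminus D$ itself succeeds because the model's closure property (any package depending on a defect is itself a defect, for dependencies in $K\cup U$ alike) guarantees no dependency of a non-defective package points into $D$. Nothing is missing.
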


Henceforth, we interchangeably refer to $n$-sized binary vectors as subsets of $P$. Fixing a canonical enumeration of $P$, we say that the $i^{\mathit{th}}$ package is in a vector $v\in\set{0,1}^n$ if its $i^{\mathit{th}}$ bit is set..
\ifdefined\fullVersion
\vbox{
\fi
Formally, we construct a $\parameterizedCFF{d}{u + 1}$ $\CFFLetter$ and define the test set as 
$\mathcal T \triangleq \set{c(v)\mid v\in \CFFLetter}$, 
where $c:\set{0,1}^n\to\set{0,1}^n$ \mbox{is defined as}
$$c(v)\triangleq v
\cup \set{p\in P\mid \exists p'\in v:
\mbox{$p$ is reachable from $p'$ via known edges}}.$$
\ifdefined\fullVersion
}
\fi
That is, given a vector $v\in\CFFLetter$ we create a test $c(v)$ that contains all 
packages whose bit is set, together with those that are prerequisites to some set-bit package. 
For example, if $P=\set{1,2,3,4}, K=\set{(2,1),(3,2)}$ and
$v=\langle0,1,0,0\rangle$ then we test the installation $c(v)=\set{1,2}$. Notice that we can compute the transitive 
closure of the dependencies graph once in $O(n^2)$ time and use it to compute
$c(v)$ in linear time for any vector $v\in\set{0,1}^n$. 

After testing all installations, for all tests $t\in \mathcal T$ we receive a
feedback $f(t)$ of whether $t$ succeeded. We now prove that given $\set{f(t)}_{t\in\mathcal T}$ we can identify all defects.
First, for each package $p$ we define its set of successful installations.
\begin{definition}
	Given $p\in P$ let $T(p)\triangleq \set{t\in\mathcal T\mid p\in t}$ be the set of tests that installed $p$ and $S(p)=\set{t\in T(p)\mid f(t)=1}$ be the \mbox{installations that were successful.}
\end{definition}
We first show that a package is defective only if it was not successfully installed in any of the~tests.
\ifdefined\TACASVersion
\begin{lemma}\label{lem:defectiveness}(*)
	A package $p\in P$ is defective if and only if $S(p)=\emptyset$.
\end{lemma}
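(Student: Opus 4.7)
The proof has two directions. The forward implication is immediate from the definition of a defect: if $p\in D$, then no installation that contains $p$ can succeed, so $f(t)=0$ for every $t\in T(p)$, giving $S(p)=\emptyset$. The substantive direction is the converse, which I would prove by contrapositive: assume $p\notin D$ and exhibit a successful test in $\mathcal T$ that contains $p$.

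The key preparatory step is to define a small ``pivot set'' that forces every unknown dependency relevant to the constructed test to be satisfied. Let
$Q^{*}\triangleq\set{y\in P:\exists\, x\notin D\text{ with }(x,y)\in U}$
be the collection of targets of all unknown edges whose source is non-defective. Since defects propagate backward along dependency edges (any package depending on a defect must itself be defective), every $y\in Q^{*}$ is non-defective, so $Q^{*}\cap D=\emptyset$ and $|Q^{*}|\le|U|\le u$. Now invoke the defining property of the $\parameterizedCFF{d}{u+1}$ $\CFFLetter$ with $S_2$ equal to $\set{p}\cup Q^{*}$ padded by arbitrary non-defective packages up to size $u+1$, and $S_1$ equal to $D$ padded by arbitrary packages disjoint from $S_2$ up to size $d$; this is possible whenever $n\ge d+u+1$, and the two sets are disjoint because $p\notin D$ and $Q^{*}\cap D=\emptyset$. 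The CFF then produces some $v\in\CFFLetter$ with $v_{|S_1}=0$ and $v_{|S_2}=1$, hence (viewing $v$ as a subset of $P$) $v\cap D=\emptyset$ and $v\supseteq\set{p}\cup Q^{*}$.

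It remains to verify that the test $c(v)$ succeeds. Since $v$ avoids $D$ and non-defective packages known-depend only on non-defective ones, $c(v)\cap D=\emptyset$; known dependencies are satisfied by the very definition of $c$; and for any unknown edge $(q,q')\in U$ with $q\in c(v)$, the non-defectiveness of $q$ (which follows from $c(v)\cap D=\emptyset$) places $q'$ in $Q^{*}\subseteq v\subseteq c(v)$, satisfying the dependency. Therefore $c(v)$ is a successful installation that contains $p$, so $c(v)\in S(p)$ and $S(p)\neq\emptyset$. The main subtlety, and where a naive argument fails, lies in the choice of $Q^{*}$: defining it as only the unknown-dependency targets reachable from $p$ would be too weak, because the CFF does not constrain the bits of $v$ outside $S_1\cup S_2$, and any extra non-defective package that slips into $v$ may itself carry an unknown dependency that must be met inside $c(v)$. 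Collecting the targets of every unknown edge from a non-defective source guarantees closure under $U$ regardless of which additional packages end up in $v$.
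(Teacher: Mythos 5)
Your proof is correct and follows essentially the same route as the paper's: take $S_1=D$ and $S_2=\{p\}$ together with the targets of the unknown dependencies, invoke the CFF property to obtain a vector $v$ with $v\cap S_1=\emptyset$ and $S_2\subseteq v$, and check that the propagated test $c(v)$ contains no defect and satisfies every known and unknown dependency. The only cosmetic difference is that you collect the targets of unknown edges with \emph{non-defective sources} while the paper collects the \emph{non-defective targets} of unknown edges; both sets have size at most $u$ and both suffice for the same reason.
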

\else
\begin{lemma}\label{lem:defectiveness}
	A package $p\in P$ is defective if and only if $S(p)=\emptyset$.
\end{lemma}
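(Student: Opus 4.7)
The plan is to prove the two directions of the biconditional separately. The forward direction is immediate from the definition of a defective package: if $p\in D$ then by definition $p$ cannot be part of any successful installation, so every test in $T(p)$ fails and $S(p)=\emptyset$.

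The substance of the lemma lies in the converse. I would proceed by contrapositive: assuming $p\notin D$, I will exhibit some $v\in\CFFLetter$ for which the test $c(v)$ both contains $p$ and succeeds. The only available lever is the CFF property, which guarantees, for any disjoint $S_1$ of size $d$ and $S_2$ of size $u+1$, a vector $v\in\CFFLetter$ that is $0$ on $S_1$ and $1$ on $S_2$. So the question is how to choose $S_1$ and $S_2$ to force success.

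My choice is the following. Let $T_U\triangleq\set{r\in P\mid \exists q\colon (q,r)\in U}$ be the set of targets of all unknown-dependency arcs; note $\abs{T_U}\le u$. Take $S_2\supseteq\set{p}\cup T_U$, padded if necessary to size exactly $u+1$ with packages from $P\setminus(D\cup\{p\}\cup T_U)$. Take $S_1\supseteq D$, padded to size exactly $d$ with packages from $P\setminus S_2$. Here I use the model assumption $(P\times D)\cap(K\cup U)=\emptyset$, which implies $T_U\cap D=\emptyset$ and $p\notin D$, so $(\{p\}\cup T_U)\cap D=\emptyset$; hence $S_1$ and $S_2$ can be made disjoint as required. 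The CFF then supplies a vector $v$ with $v_{|S_2}=1$ and $v_{|S_1}=0$, and I claim $c(v)\in S(p)$. Membership $p\in c(v)$ is immediate since $p\in S_2\subseteq v$. For success I verify two conditions: (i) $c(v)\cap D=\emptyset$, which follows because $c$ only adds packages reachable via known edges, and by the model there are no arcs at all into $D$, so $D$ is unreachable; (ii) every unknown dependency $(q,r)\in U$ is satisfied in $c(v)$, because $r\in T_U\subseteq S_2\subseteq c(v)$ regardless of whether $q$ lies in $c(v)$. With no conflicts, defects, or violated dependencies inside $c(v)$, the test succeeds, so $S(p)\ne\emptyset$.

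The main obstacle is a mild bookkeeping point: both $|D|<d$ and $|\{p\}\cup T_U|<u+1$ are possible, and the padding must produce disjoint sets of the exact sizes demanded by the CFF definition. This is possible whenever $n\ge u+d+1$; if $n<u+d+1$ the repository is tiny and the whole problem can be resolved trivially by enumeration, so I would handle that case separately in one sentence.
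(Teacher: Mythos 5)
Your proof is correct and follows essentially the same route as the paper's: both directions are handled identically, and for the converse the paper likewise takes $S_1=D$ and $S_2=\{p\}$ together with the targets of all unknown dependencies, then invokes the cover-free-family property to produce a witness test $c(v)$ that contains $p$, excludes all defects, and satisfies every known dependency by propagation and every unknown dependency by inclusion of its target. The only cosmetic difference is that the paper defines $S_2$ to contain only the \emph{non-defective} unknown-dependency targets, whereas you appeal to the stated model assumption $(P\times D)\cap(K\cup U)=\emptyset$ to conclude no such target is defective; under that assumption the two constructions coincide.
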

\begin{proof}
	Recall that by definition a package is defective if there exists no successful
	installation that contains it. Hence, $p\in D$ immediately implies that $S(p)=\emptyset$.
	Our goal here is to show the converse -- that $S(p)=\emptyset$ implies that $p$
	cannot be a part of any successful installation, including the ones that were
	not tested by the algorithm. Thus, we assume that $p\notin D$, and show a test
	that was necessarily included in $\mathcal T$ and succeeds.
	
	We now construct disjoint sets $S_1,S_2\subseteq P$ of sizes $d$ and $u + 1$. 
	Here, we choose $S_1\triangleq D$ to be the set of defects.
	Next, we define $S_2\triangleq\set{p}\cup\set{p'\in (P\setminus D)\mid \exists
	p''\in P, (p'',p')\in U}$.
	In other words, we add to $S_2$ the package $p$ and every \emph{non-defective}
	package $p'$ that is a prerequisite to a package $p''$ and the dependency
	$(p'',p')$ was missing from $K$.
	By adding to $S_2$ only non-defective packages we guarantee that $S_1\cap S_2=\emptyset$ as required.
	Also, we added at most $d$ packages to $S_1$ and at most $u+1$ packages to $S_2$.\footnote{We can add arbitrary packages to $S_1$ and $S_2$ to make their sizes exactly $d$ and $u+1$ if needed.}
	
	While we cannot determine $S_1$ and $S_2$ in advance,
	 from $\CFFLetter$'s properties, we are guaranteed that there exists a vector $v\in\CFFLetter$ such that $S_1\cap v=\emptyset$ and $S_2\subseteq v$. 
	Now, observe that if $p$ is not defective then $c(v)$ must pass as it contains no defects and satisfies every dependency. The known dependencies are satisfied due to the propagation of $K$ in $c(v)$, and the unknown dependencies are satisfied as they are included in $S_2$.
	Thus, we established that if $p$ is not defective then $c(v)$ must pass as all defects were excluded and all dependencies satisfied.
\end{proof}
\fi
The pseudo code of our method for \maxProblem{} solution when no conflicts
are present is given in Algorithm~\ref{alg:unknowns_no_conflicts}.
\begin{algorithm}[t]
    \ifdefined\fullVersion
	    \footnotesize
	\else
	\scriptsize
	\fi
	\begin{algorithmic}[1]
		\Function{FindMaxSubRepository}{$P,K,u,d$}
		\ifdefined\TACASVersion
    		\Comment{\hspace*{-1mm}At most $u$ unknown dependencies and $d$~defects}
    	\else
    		\Comment{At most $u$ unknown dependencies and $d$~defects}
    	\fi
		\State $D \gets \emptyset$ \Comment{The set of identified defects}
		\State $\CFFLetter \gets CFF(n,d,u+1)$ \Comment{CFF construction}
		\State $\mathcal T \gets \set{c(v)\mid v\in \CFFLetter}$
		\Comment{Test vectors generated based on all known dependencies}
		\For {$p \in P$ }
			\State $S(p) \gets \emptyset$ \Comment{Successful installations that
			included $p$}
		\EndFor
		\For {$t \in \mathcal T$}
		\State $f(t) \gets Feedback(t)$ \Comment{Get feedback from oracle}
		\If {$f(t)$ = 1} \Comment{Successful installation}
		\For {$p$ tested as part of $t$}
			\State $S(p) \gets S(p) \cup t$
		\EndFor        
		\EndIf
		\EndFor
		
		\For {$p \in P$ }
			\If {$S(p) = \emptyset$} \Comment{No successful installation exists for $p$}
			\State $D \gets D \cup p$
			\EndIf
		\EndFor
		
		\State\Return $P \setminus D$
		\EndFunction
		
	\end{algorithmic}
	\normalsize
	\caption{\maxProblem{} with defects, unkwnonwn dependencies, and no conflicts}
	\label{alg:unknowns_no_conflicts}
\end{algorithm}
We conclude an upper bound on the non-adaptive query complexity of \maxProblem{}. 
\begin{theorem}\label{thm:unknownDependenciesUB}
Assume that the repository contains at most $d$ defects and $u$ unknown
dependencies.
There exists a \textbf{non-adaptive} algorithm that solves \maxProblem{} using $\parameterizedCFFSize{d}{u+1}$ queries.
\end{theorem}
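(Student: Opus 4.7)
The plan is to assemble Theorem~\ref{thm:unknownDependenciesUB} as a direct consequence of the construction preceding the statement together with Lemma~\ref{lem:defectiveness} and the preceding Observation. Concretely, I would verify two claims about Algorithm~\ref{alg:unknowns_no_conflicts}: (i) it issues at most $\parameterizedCFFSize{d}{u+1}$ oracle queries, and (ii) the set it returns is a maximum-size successful installation.

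For the query-complexity bound I would observe that the algorithm makes exactly one call to the oracle per element of $\mathcal T = \set{c(v)\mid v\in\CFFLetter}$, where $\CFFLetter$ is a minimum-size \parameterizedCFF{d}{u+1}. Since $c$ is a function on binary vectors, $|\mathcal T|\le|\CFFLetter|=\parameterizedCFFSize{d}{u+1}$, which gives the claimed bound. (One can also just fix the bijection between iterations of the outer \textbf{for} loop and vectors in $\CFFLetter$.)

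For correctness I would first invoke the Observation above Algorithm~\ref{alg:unknowns_no_conflicts}: when no conflicts exist, the unique maximum-size successful installation is $P\setminus D$. The algorithm returns $P\setminus D'$ where $D'\triangleq\set{p\in P\mid S(p)=\emptyset}$, so it suffices to show $D'=D$. This is exactly the content of Lemma~\ref{lem:defectiveness}, which I would cite directly. Thus the output equals $P\setminus D$, the maximum successful installation.

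The substantive work has already been carried out inside Lemma~\ref{lem:defectiveness}, whose key step uses cover-freeness to produce, for every non-defective $p$, a vector $v\in\CFFLetter$ that excludes $D$ (of size at most $d$) and includes $p$ together with all $u$ source-endpoints of unknown dependencies (at most $u+1$ packages in total); the resulting $c(v)\in\mathcal T$ is then a successful installation containing $p$. The only residual obstacle for the theorem itself is bookkeeping: confirming that the algorithm truly records $t\in S(p)$ whenever $p\in t$ and $f(t)=1$, which is immediate from the pseudocode. With this, Theorem~\ref{thm:unknownDependenciesUB} follows.
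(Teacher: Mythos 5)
Your proposal is correct and follows exactly the paper's route: the theorem is stated in the paper as an immediate consequence of the test set $\mathcal T=\set{c(v)\mid v\in\CFFLetter}$ built from a \parameterizedCFF{d}{u+1}, the Observation that without conflicts the maximal installation is $P\setminus D$, and Lemma~\ref{lem:defectiveness} identifying $D$ as $\set{p\mid S(p)=\emptyset}$. (Only a cosmetic quibble: the packages added to $S_2$ in Lemma~\ref{lem:defectiveness} are the \emph{prerequisites} $p'$ with $(p'',p')\in U$, i.e.\ the targets of unknown dependency arcs, not their sources.)
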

\noindent\textbf{Adaptive Algorithms Complexity\\}
The method above is near-optimal with respect to non-adaptive algorithms. An important question is how much can we gain from adaptiveness in the test selection process.
We now show a lower bound of $\Omega\parentheses{{u+d\choose d}+d\logp{n/d}}$ for adaptive algorithms. The gap from the $\parameterizedCFFSize{d}{u+1}={{u+d\choose d}(u+d)^{O(1)}\log n}$ query complexity of our non-adaptive algorithm is left as future work. 
\ifdefined\TACASVersion
\begin{theorem}\label{thm:adaptiveMaxInstallation}(*)
	Assume that the repository contains at most $d$ defects and $u$ unknown dependencies.
	Any \textbf{adaptive} algorithm for \maxProblem{} must make $\Omega\parentheses{{u+d\choose d}+d\logp{n/d}}$ queries.
\end{theorem}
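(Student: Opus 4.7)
The plan is to establish the two terms of the bound by two separate adversary arguments on different instances, and then to combine them using $\max\set{x,y}\ge(x+y)/2$.

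For the $d\logp{n/d}$ term I would reduce from standard group testing. On an instance with $K=U=C=\emptyset$ the only uncertainty is which (at most $d$) of the $n$ packages are defective, and \maxProblem{} amounts to identifying the defect set; the classical adaptive group testing lower bound then gives $\Omega(d\logp{n/d})$ queries.

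For the ${u+d\choose d}$ term I would construct an adversary that hides the prerequisite set of a single ``sink'' package. Take packages $\set{p,a_1,\ldots,a_{u+d}}$ (padded with arbitrary isolated non-defect packages up to $n$) and known dependencies $K=\set{(a_i,p)\mid 1\le i\le u+d}$, so that every $a_i$ has $p$ as a known prerequisite. The adversary secretly picks a $u$-subset $S^*\subseteq\set{a_1,\ldots,a_{u+d}}$ and declares that $p$ has unknown dependencies on exactly the members of $S^*$, while $D\triangleq\set{a_1,\ldots,a_{u+d}}\setminus S^*$ is the defect set; this uses exactly $u$ unknown dependencies and $d$ defects. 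Observe that any test not containing $p$ but containing some $a_i$ fails by the known dependency, and any test $\set{p}\cup T'$ with $T'\subseteq\set{a_1,\ldots,a_{u+d}}$ succeeds iff $T'\supseteq S^*$ and $T'\cap D=\emptyset$, which together force $T'=S^*$. Hence the only queries whose answer depends on $S^*$ are those of the form $\set{p}\cup T'$ with $|T'|=u$, and each such query succeeds exclusively for $S^*=T'$.

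The adversary's strategy is: for queries whose answer does not depend on $S^*$, return the forced answer; for informative queries $\set{p}\cup T'$ with $|T'|=u$, always return ``fail,'' which remains consistent with any candidate $S^*\neq T'$ not yet ruled out. Since the maximum installable set equals $\set{p}\cup S^*$ together with the padding, it uniquely identifies $S^*$ among the ${u+d\choose u}={u+d\choose d}$ candidates, and each failing informative query eliminates exactly one candidate; hence the algorithm must make at least ${u+d\choose d}-1$ queries. The main obstacle is the case analysis needed to verify that no alternative query shape---$p$-queries with $|T'|\neq u$, $p$-free queries containing some $a_i$, or queries involving the padding packages---can provide any information about $S^*$. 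Once this is settled, taking the maximum of the two lower bounds yields $\Omega\parentheses{{u+d\choose d}+d\logp{n/d}}$.
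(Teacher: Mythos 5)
Your proof is correct, and while the first term is handled identically to the paper (the $K=U=\emptyset$ instance degenerates to group testing), your argument for the ${u+d\choose d}$ term uses a genuinely different construction. The paper keeps $K=\emptyset$, fixes an arbitrary $(u+d)$-subset $P'$, and observes that each query can ``cover'' at most one partition of $P'$ into a $u$-part and a $d$-part; after ${u+d\choose d}-2$ queries two uncovered partitions $(A_1,A_2)$ and $(B_1,B_2)$ remain, and the adversary realizes either one by making $A_2$ (resp.\ $B_2$) a cycle of unknown dependencies and $A_1$ (resp.\ $B_1$) the defect set, so that every query touching $P'$ fails in both scenarios and the algorithm cannot tell whether the maximal installation excludes $A_1$ or $B_1$. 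You instead build a gadget with nontrivial \emph{known} dependencies: a hub $p$ that all of $a_1,\ldots,a_{u+d}$ are known to require, with the adversary hiding which $u$-subset $S^*$ is $p$'s set of unknown prerequisites and which complementary $d$-subset is defective; a clean case analysis shows each query eliminates at most one of the ${u+d\choose d}$ candidates. Both are valid adaptive adversary arguments yielding the same ${u+d\choose d}-1$ bound. The paper's version is marginally stronger in that it shows the hardness persists even when the algorithm is promised $K=\emptyset$, whereas your instance relies on a known-dependency star (and, incidentally, creates two-cycles between $p$ and each member of $S^*$ in $K\cup U$ --- harmless for \maxProblem{}, since distinct $S^*$ still yield distinct maximal installations, but worth noting given the paper's SCC-contraction convention); your version buys a more transparent counting argument in which the ${u+d\choose d}$ candidates appear explicitly as possible prerequisite sets of a single package.
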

\else
\begin{theorem}\label{thm:adaptiveMaxInstallation}
	Assume that the repository contains at most $d$ defects and $u$ unknown dependencies.
	Any \textbf{adaptive} algorithm for \maxProblem{} must make $\Omega\parentheses{{u+d\choose d}+d\logp{n/d}}$ queries.
\end{theorem}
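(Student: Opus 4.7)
The plan is to establish the two additive terms separately, since the claim is $\Omega(\binom{u+d}{d} + d\log(n/d))$ and the maximum of two quantities is $\Omega$ of their sum. For the $\Omega(d\log(n/d))$ piece, I would invoke the classical group testing lower bound: restrict attention to repositories where $K=U=C=\emptyset$ and only $d$ defects may exist among the $n$ packages. Since the solution to \maxProblem{} is then exactly the set of non-defective packages, and there are $\binom{n}{d}$ possible defect sets yielding distinct outputs, any adaptive algorithm---viewed as a binary decision tree---must have depth at least $\log_2\binom{n}{d} = \Omega(d\log(n/d))$. This is essentially the same information-theoretic step underlying Theorem~\ref{thm:GTLB}, now with a bound $d$ on all defects rather than root defects.

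The core work is the $\Omega(\binom{u+d}{d})$ piece. I would exhibit a hard family of instances on $n = u+d+1$ packages (trivially embeddable into larger $n$ by adding isolated, non-defective packages). Let $P = \set{p, a_1,\ldots,a_{u+d}}$, take known dependencies $K = \set{(a_i,p) \mid 1\le i\le u+d}$ so that every $a_i$ requires $p$, and set $C=\emptyset$. The adversary's hidden choice is a subset $A\subseteq\set{a_1,\ldots,a_{u+d}}$ of size $u$, which fixes the unknown dependencies $U = \set{(p,a)\mid a\in A}$ and the defect set $D = \set{a_1,\ldots,a_{u+d}}\setminus A$ of size exactly $d$. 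A straightforward check shows that for the $A$-instance the unique maximum installation is $\set{p}\cup A$ (installations omitting $p$ must be empty by the known dependencies), so the $\binom{u+d}{u}=\binom{u+d}{d}$ choices of $A$ yield pairwise-distinct correct outputs that the algorithm must distinguish.

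The adversary strategy is to always answer ``failure''. To justify this, I would prove the key structural claim that each query carries at most one bit of $A$-information. A query $Q$ with $p\notin Q$ is either empty (trivially succeeds regardless of $A$) or contains some $a_i$, in which case it violates the known dependency $(a_i,p)$ and fails regardless of $A$; either way, the response is instance-independent. A query $Q$ with $p\in Q$ succeeds on instance $A$ iff $A\subseteq Q$ (so $U$ is satisfied) and $D\cap Q=\emptyset$ (no defect installed), which together force $A = Q\cap\set{a_1,\ldots,a_{u+d}}$; thus exactly one candidate $A$ could make this query succeed. Hence the ``always failure'' response is consistent as long as the candidate set $\set{A : Q_i \cap \set{a_1,\ldots,a_{u+d}} \ne A \text{ for all prior } Q_i}$ is nonempty, i.e., after $k$ queries at least $\binom{u+d}{d}-k$ candidates remain. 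If this count is $\ge 2$, the algorithm cannot output the correct $\set{p}\cup A$ for both surviving instances, so it is forced to make $k\ge\binom{u+d}{d}-1 = \Omega(\binom{u+d}{d})$ queries. I expect the main obstacle to be the careful verification of this structural claim, in particular ensuring that no clever interleaving of with-$p$ and without-$p$ queries can extract more than one bit of $A$-information per query; the choice of $K$ that makes every $a_i$ depend on $p$ is precisely what collapses the without-$p$ queries to uninformative and reduces each with-$p$ query to a one-instance confirm-or-reject test.
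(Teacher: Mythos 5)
Your proof is correct and follows essentially the same strategy as the paper's: the $d\log(n/d)$ term via the group-testing reduction, and the $\binom{u+d}{d}$ term via a counting/adversary argument over the $\binom{u+d}{d}$ ways to split a fixed $(u+d)$-set into $u$ dependency targets and $d$ defects, where a query can succeed for at most one candidate split (namely the one whose $u$-part equals the query's trace on that set) and hence eliminates at most one per round. The only substantive difference is the gadget --- the paper puts an unknown-dependency cycle on the $u$-sized part with no known dependencies, while you route everything through a hub package $p$ with known arcs $(a_i,p)$ and unknown arcs $(p,a)$ for $a\in A$ --- but both constructions induce the identical one-candidate-per-query structure and the same $\binom{u+d}{d}-1$ bound.
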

\begin{proof}
	First, note that a $d\logp{n/d}$ lower bound immediately follows from the group testing lower bound (as the case of $K=U=\emptyset$ degenerates to group testing).
	Fix an arbitrary package subset $P'\subseteq P$ of size $|P'|=u+d$. Consider an algorithm that makes at most ${u+d\choose d}-2$ queries.
	Clearly, there are ${u+d\choose d}$ subsets of size $u$ of $P'$.
	Thus, there exists two disjoint-subsets pairs $A_1,A_2\subseteq P'$ and $B_1,B_2\subseteq P'$ such that $|A_1|=|B_1|=d$ and $|A_2|=|B_2|=u$ (also, $A_1\cap A_2=B_1\cap B_2=\emptyset$) and that no tested installation contains $A_2$ but none of $A_1$ and no test includes $B_2$ but none of $B_1$.
	Consider the following scenarios:
	\begin{enumerate}
		\item The set of defects is $D=A_1$ and all of the packages in $A_2$ depend on each other (i.e., there is a cycle that contains all of $A_2$ in $(P, U)$).
		\item The set of defects is $D=B_1$ and all of the packages in $B_2$ depend on
		each other (there is a cycle that contains all of $B_2$ in $(P, U)$).
	\end{enumerate} 
	Notice that in case I, any installation that contains $A_2$ but none of the packages in $A_1$ passes and similarly for case II and $B_1, B_2$. As no such installations were tested by the algorithm, every test that contains at least one package of $P'$ fails, regardless of the actual scenario. Thus, the algorithm cannot determine whether $A_2$ or $B_2$ belongs to the maximal installation and thus fails to solve the~problem.
\end{proof}
\fi
\section{Learning Conflicts when All Dependencies are Known}

\ifdefined\TACASVersion
Previously, we assumed that the repository contained no conflicts, and
identified the defects. In this section, we return to the case where all
dependencies are known, but now the repository may have unreported conflicts.
We show here that learning conflicts is ``hard'', in the sense that even
when all dependencies are known and no defects exist -- identifying the exact conflicts requires a linear number of queries.

\begin{theorem}\label{thm:learningAStrongConflict}
Assume that all dependencies are known and no defects exist.
Any non-adaptive algorithm that solves  \learnAll{} must make at least $n-1$
queries. Note that this holds even when the repository \mbox{may have only up to $1$
conflict.}
\end{theorem}
\begin{proof}
	Consider the repository $P\triangleq\set{p,a_1,a_2,\ldots,a_{n-1}}$ and the 
	dependencies $K\triangleq\{(a_{i+1},a_i)\mid$ $ i \in\set{1,2,\ldots n-2}\}$.
	If the algorithm makes $n-2$ queries or less, then there exists some
	$\ell\in\set{2,\ldots,n-1}$, such that the algorithm does not query
	$(\set{p}\cup\set{a_i\mid i<\ell})$. In this case, the feedback for all 
	the queries is the same in both $C=\set{\set{p,a_\ell}}$ case and
	$C=\set{\set{p,a_{\ell-1}}}$ case. Thus, the algorithm cannot determine 
	which package is $p$ conflicting with and it fails to solve \learnAll{}. 
	Similarly, if the algorithm does not attempt to install $\set{p,a_1}$, it 
	cannot distinguish between the case where the packages are in conflict with
	each other and the case where no conflicts exist.
\end{proof}

In order to ``regain'' the logarithmic query complexity, we consider a weaker notion of conflicts. For convenience, we also define the \emph{weak dependency} notion. 
\begin{definition}
	Given two packages $p,q\in P$, we say that $p$ \textbf{weakly
	depends}\ignore{\footnote{When all dependencies are known the weak dependencies
	of $v$ are simply $c(v)$.}} on $q$ if there exists no successful installation
	that includes $p$ but not $q$. Also, $p$ \textbf{weakly conflicts} with $q$ if no successful installation includes both $p$ and $q$.
\end{definition}

Armed with the relaxed definition, we now analyze the query complexity of algorithms given bounds on the number of defects and weak conflicts.
We prove a lower bound for \learnAll{}. Upper bounds for the adaptive and non-adaptive cases are proven in~\cite{fullVersion}.

\begin{theorem}\label{thm:learnAllLBWithConflictsKnownDependenciesLB}
	Assume that the repository contains at most $d$ defects and $c$ unknown \textbf{weak} conflicts.
	Any \textbf{non-adaptive} algorithm that solves \learnAll{} must make at least $\parameterizedCFFSize{d+c-1}{2}$ queries.
\end{theorem}
\begin{proof}
	Assume by contradiction that the algorithm makes less than $\parameterizedCFFSize{d+c-1}{2}$ queries. 
	This means that there exist a set $S_1=\set{x_1,\ldots,x_{d}}\cup\set{b_1,\ldots,b_{c-1}}$ and  packages $p,q$ such 
	that some tested installation contains $p$ and $q$ but none of $S_1$'s members. 
	We will construct an input scenario for which \learnAll{} cannot be solved
	correctly by this algorithm. Consider the scenario where
	$x_1,\ldots,x_{d}$ are defective packages and $b_1,\ldots,b_{c-1}$ conflict
	with $p$.
	This implies that no installation that contains $p$ \textbf{and} $q$ is tested. Thus, the algorithm cannot determine whether $p$ conflicts with $q$ or not. 
	The setting is depicted in Figure~\ref{fig:learnAllLBWithConflictsKnownDependenciesLBV2}.
		Observe that the number of weak conflicts is at most $c$ as required and there are $d$ defects.
	Thus, the algorithm fails to solve \learnAll{}.
	\begin{figure}[]
		\centering
		\includegraphics[width =
		0.7\columnwidth]{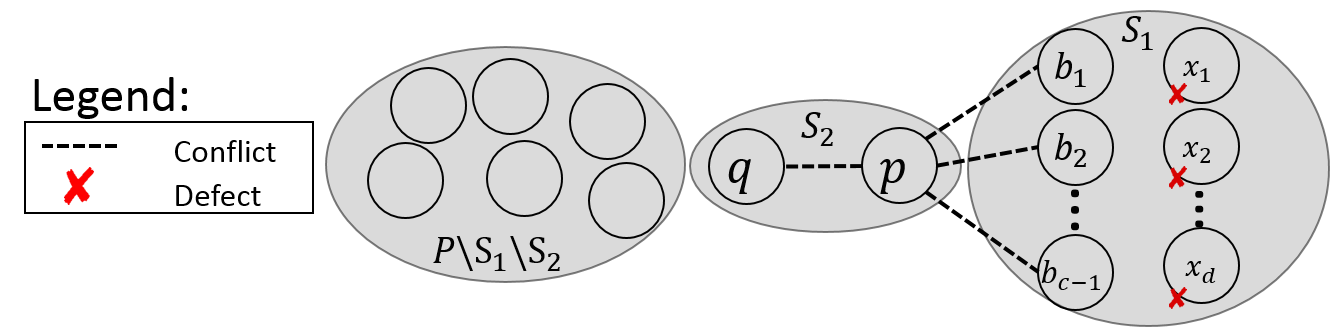}
		
		\caption{If the algorithm makes less than $\parameterizedCFFSize{d+c-1}{2}$
		queries, then there exists $p,q,\set{b_i}\set{x_i}$ such that no installation that contains both $p$ and $q$, but exclude $\set{b_i}\set{x_i}$, is tested; thus
		the algorithm \mbox{can not determine if they conflict.}
	 }
		\label{fig:learnAllLBWithConflictsKnownDependenciesLBV2}
	\end{figure}
\end{proof}

\else
Previously, we assumed that the repository contained no conflicts, and
identified the defects. In this section, we return to the case where all
dependencies are known, but now the repository may have unreported conflicts.
We start by showing that learning conflicts is ``hard'', in the sense that even
when all dependencies are known and no defects exist -- identifying the exact conflicts requires a linear number of queries.
\begin{theorem}\label{thm:learningAStrongConflict}
Assume that all dependencies are known and no defects exist.
Any non-adaptive algorithm that solves  \learnAll{} must make at least $n-1$
queries. Note that this holds even when the repository \mbox{may have only up to $1$
conflict.}
\end{theorem}
\begin{proof}
	Consider the repository $P\triangleq\set{p,a_1,a_2,\ldots,a_{n-1}}$ and the 
	dependencies $K\triangleq\{(a_{i+1},a_i)\mid$ $ i \in\set{1,2,\ldots n-2}\}$.
	If the algorithm makes $n-2$ queries or less, then there exists some
	$\ell\in\set{2,\ldots,n-1}$, such that the algorithm does not query
	$(\set{p}\cup\set{a_i\mid i<\ell})$. In this case, the feedback for all 
	the queries is the same in both $C=\set{\set{p,a_\ell}}$ case and
	$C=\set{\set{p,a_{\ell-1}}}$ case. Thus, the algorithm cannot determine 
	which package is $p$ conflicting with and it fails to solve \learnAll{}. 
	Similarly, if the algorithm does not attempt to install $\set{p,a_1}$, it 
	cannot distinguish between the case where the packages are in conflict with
	each other and the case where no conflicts exist.
\end{proof}

In order to ``regain'' the logarithmic query complexity, we consider a weaker notion of conflicts. For convenience, we also define the \emph{weak dependency} notion. 
Intuitively, $p$ weakly depends on $q$ if there exist packages 
$\set{q_1, q_2, ..., q_i}$,  such that $(p, q_1), (q_1, q_2),..., (q_i, q)$.
\begin{definition}
	Given two packages $p,q\in P$, we say that $p$ \textbf{weakly
	depends}\ignore{\footnote{When all dependencies are known the weak dependencies
	of $v$ are simply $c(v)$.}} on $q$ if there exists no successful installation
	that includes $p$ but not $q$. Also, $p$ \textbf{weakly conflicts} with $q$ if no successful installation includes both $p$ and $q$.
\end{definition}

Armed with the relaxed definition, we now analyze the query complexity of algorithms given bounds on the number of defects and weak conflicts.
We start with a lower bound for \learnAll{}.

\begin{theorem}\label{thm:learnAllLBWithConflictsKnownDependenciesLB}
	Assume that the repository contains at most $d$ defects and $c$ unknown \textbf{weak} conflicts.
	Any \textbf{non-adaptive} algorithm that solves \learnAll{} must make at least $\parameterizedCFFSize{d+c-1}{2}$ queries.
\end{theorem}
\begin{proof}
	Assume by contradiction that the algorithm makes less than $\parameterizedCFFSize{d+c-1}{2}$ queries. 
	This means that there exist a set $S_1=\set{x_1,\ldots,x_{d}}\cup\set{b_1,\ldots,b_{c-1}}$ and  packages $p,q$ such 
	that some tested installation contains $p$ and $q$ but none of $S_1$'s members. 
	We will construct an input scenario for which \learnAll{} cannot be solved
	correctly by this algorithm. Consider the scenario where
	$x_1,\ldots,x_{d}$ are defective packages and $b_1,\ldots,b_{c-1}$ conflict
	with $p$.
	This implies that no installation that contains $p$ \textbf{and} $q$ is tested. Thus, the algorithm cannot determine whether $p$ conflicts with $q$ or not. 

	The setting is depicted in Figure~\ref{fig:learnAllLBWithConflictsKnownDependenciesLBV2}.
		Observe that the number of weak conflicts is at most $c$ as required and there are $d$ defects.
	Thus, the algorithm fails to solve \learnAll{}.
	\begin{figure}[]
		\centering
		\includegraphics[width =
		0.6\columnwidth]{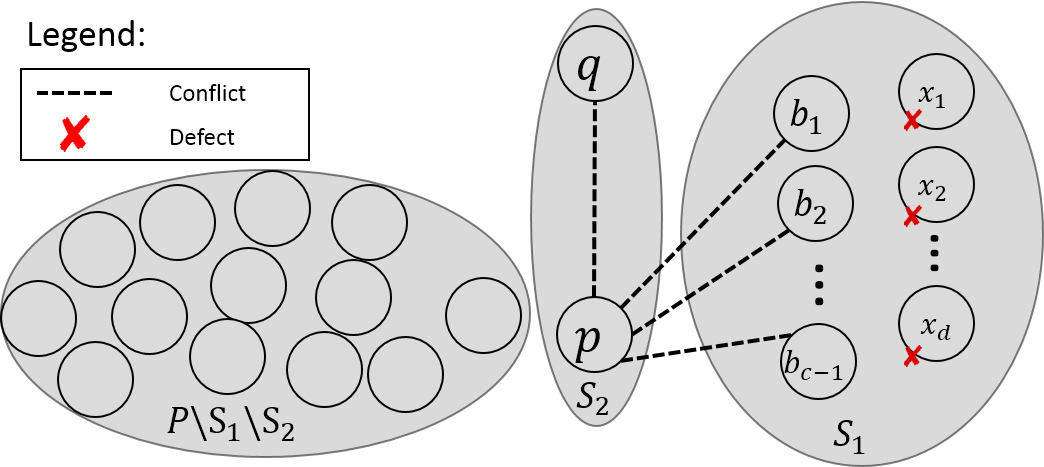}
		
		\caption{If the algorithm makes less than $\parameterizedCFFSize{d+c-1}{2}$
		queries, then there no installation that contains both p and q is tested, thus
		the algorithm can not determine if they are in conflict.
	 }
		\label{fig:learnAllLBWithConflictsKnownDependenciesLBV2}
	\end{figure}
\end{proof}

We proceed with a non-adaptive algorithm for the \learnAll{} problem. 
The test selection is similar to that of the previous section. Namely, we construct a \parameterizedCFF{d+c}{2} \CFFLetter{} and propagate the known dependencies so that the tests are $\mathcal T \triangleq \set{c(v)\mid v\in \CFFLetter}$ (see Section~\ref{sec:unknownDeps} for a formal definition of $c(v)$). 
Following are lemmas that show that from the feedback we can \mbox{infer all~constraints.}

\begin{lemma}\label{lem:defectIdentificationWithConflictsKnownDependencies}
	A package $p\in P$ is defective if and only if $S(p)=\emptyset$.
\end{lemma}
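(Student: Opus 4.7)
The plan is to establish the two directions separately. The forward direction, $p\in D\Rightarrow S(p)=\emptyset$, follows directly from the definition of a defect: no successful installation contains a defective package, so in particular no successful test in $\mathcal T$ contains $p$.

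For the converse I argue the contrapositive: assuming $p\notin D$, I exhibit a successful test $c(v)\in\mathcal T$ with $p\in c(v)$. Since $p$ is not defective, fix a \emph{maximal} successful installation $I$ with $p\in I$ (such $I$ exists because $P$ is finite). The overall strategy is to use $\CFFLetter$ to find a vector $v\subseteq I$ containing $p$. Then $c(v)\subseteq I$, because all dependencies are known in this section and $I$, being a successful installation, is therefore closed under $K$; consequently $c(v)$ inherits from $I$ the absence of defects and of internal weak conflicts while being closed under known dependencies by construction of $c$. This makes $c(v)$ a successful installation containing $p$, so $c(v)\in S(p)$.

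The key step is the counting bound $|P\setminus I|\le d+c$. The at most $d$ defects contribute $d$, so it suffices to bound by $c$ the set $Q$ of non-defective packages outside $I$. For each $q\in Q$, maximality of $I$ forces $I\cup c(\{q\})$ to fail successfulness; since $q$ is non-defective the set $c(\{q\})$ contains no defect and all known dependencies are satisfied, so the failure must be an internal weak conflict $\{a,b\}\subseteq I\cup c(\{q\})$. Because $I$ itself is successful, at least one endpoint lies in $c(\{q\})\setminus I$; and if both endpoints lay in $c(\{q\})$, every successful installation of $q$ (which must include all of $c(\{q\})$) would contain both $a$ and $b$, forcing $q\in D$---a contradiction. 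Hence the conflict has the form $\{a,b\}$ with $a\in c(\{q\})$ and $b\in I$, so $q$ itself weakly conflicts with $b$ (any installation of $q$ requires $a$, which in turn conflicts with $b$). Setting $b_q\triangleq b$, the map $q\mapsto\{q,b_q\}$ is injective because $q\notin I$ while $b_q\in I$ disambiguates the unordered pair, so $|Q|$ is bounded by the total number of weak conflicts, which is at most $c$.

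With the bound in hand, I pad $P\setminus I$ with elements of $I\setminus\{p\}$ to form a set $S_1$ of size exactly $d+c$ with $p\notin S_1$. Using the $(n,d+c,1)$-cover-free property implied by $\CFFLetter$ being an $\parameterizedCFF{d+c}{2}$, there exists $v\in\CFFLetter$ with $v\cap S_1=\emptyset$ and $p\in v$, so $v\subseteq I$ and the strategy above yields $c(v)\in S(p)$. The main obstacle is the counting bound $|P\setminus I|\le d+c$; once that is in place, the cover-free step is routine.
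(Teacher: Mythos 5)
Your proof is correct, and the key step is organized differently from the paper's. The paper proves the converse by \emph{directly constructing} the excluded set: it fixes an arbitrary order $\succ$ on $P$, picks one endpoint of each conflict to exclude (the one conflicting with a prerequisite of $p$, or the $\succ$-larger one otherwise), and then closes this set upward under weak dependency so that the propagation $c(\cdot)$ cannot re-introduce an excluded package; the bound $|S_1|\le d+c$ is then argued by charging each member of $S_1$ to a defect or a weak conflict. You instead take a \emph{maximal} successful installation $I\ni p$ and prove the cardinality bound $|P\setminus I|\le d+c$ via an explicit injection from the non-defective packages outside $I$ into the set of weak conflict pairs (using maximality to extract, for each such $q$, a conflict between a prerequisite of $q$ and a member of $I$, hence a weak conflict $\{q,b_q\}$ with one endpoint in $I$ and one outside, which makes the map injective). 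Your route buys a cleaner witness argument --- $v\subseteq I$ immediately gives $c(v)\subseteq I$ and hence success, with no need for the order $\succ$ or for separately verifying that the propagated test avoids the excluded packages --- at the cost of the maximality/counting lemma; the paper's route is more constructive and is reused almost verbatim for the conflict-learning lemma (where $S_2=\{p,q\}$), whereas your argument also extends there by choosing $I$ maximal among successful installations containing both $p$ and $q$. One cosmetic remark: the violated constraint you extract from the failure of $I\cup c(\{q\})$ is an actual conflict in $C$ (the only remaining constraint type once defects and dependencies are ruled out), not merely a weak one, though your argument goes through under either reading.
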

\begin{proof}
	The claim is similar to Lemma~\ref{lem:defectiveness} except that now we may
	have conflicts, and all the dependencies are known. Once again, if $p$ is
	defective, then clearly $S(p)=\emptyset$, and our goal is to show the converse.
	Given a non-defective $p$, we construct sets $S_1, S_2$ of the required sizes
	such that a test containing all of $S_2$ and none of $S_1$ succeeds, thus
	showing that $S(p) \neq \emptyset$.
	Intuitively, we satisfy each conflict by excluding (having in $S_1$) one of its packages along with all those that depend on it. 
	As for $S_2$, all we need is to include $p$ as the dependency propagation will ensure that all its prerequisites are installed as well.
	The setting is illustrated in Figure~\ref{fig:defectIdentificationIllustrationKnownDependencies}.
	
	Next, notice that if $p$ weakly conflicts with a prerequisite of itself, 
	then $p$ cannot be successfully installed and is thus a defect. Hence, we
	hereafter assume that $p$ does not conflict with any of its prerequisites.
	We now formally define a pair of disjoint 
	sets $S_1, S_2$ such that the corresponding test contains $p$ and passes.
	Here, $S_2 = \set{p}$.
	Next, we consider an arbitrary order $\succ$ on $P$ that will allow us to resolve the conflict constraints. 
	We define two package sets as follows:
	\begin{itemize}
		\item $X_1\triangleq \set{q\mid \exists p':\set{q,p'}\in C \wedge\mbox{$p$ weakly depends on $p'$}}$.
		\item $X_2\triangleq \set{q\mid \exists p':(\set{q,p'}\in C) \wedge\mbox{($p$ does not weakly depend on $q$)}\wedge(q\succ p')}$.
	\end{itemize} 
	
	Intuitively, $X_1$ contains all packages that conflict with a prerequisite of $p$; $X_2$'s packages are those that have no relation to $p$ and have an unknown conflict with another package with a lower index (according to $\succ$).
	If we make sure that we have a test that installs all of $p$'s prerequisites
	and excludes all packages in $E\triangleq X_1\cup X_2$, it will pass if $p$ is not defective. 
	However, due to the propagation of the known dependencies, it is not enough to 
	include \textit{just} $D\cup E$ in $S_1$. That is, if a package $q\in E$ is a member 
	of $S_1$, but a package that depends on it has `$1$` in the corresponding
	vector $v$ in the CFF, the propagation-result $c(v)$ will include $q$ as well.
	We circumvent this issue by adding to $S_1$ \textit{all} packages that weakly depend on $E$, i.e., 
	we set $S_1\triangleq D\cup \set{q\mid \exists q'\in E: \mbox{$q$ weakly depends on $q'$}}$. 
	Since every package in $S_1$ has an unknown weak conflict or defect we have that $|S_1|\le d+c$. 
	Figure~\ref{fig:defectIdentificationIllustrationKnownDependencies} illustrates the sets $S_1,S_2$ that allow the corresponding test $c(v)$ to pass if $p$ is not defective.
	\begin{figure}[h]
		\centering
		\includegraphics[width =
		0.8\columnwidth]{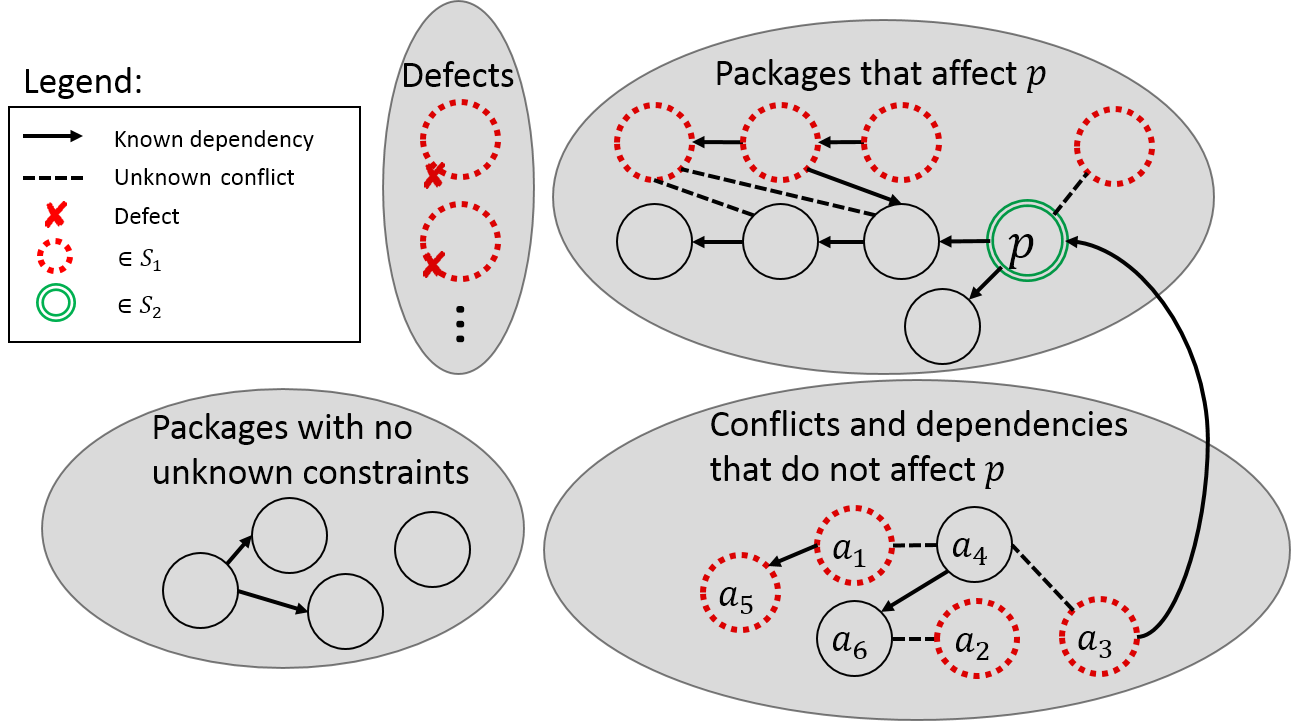}
		
		\caption{An example of a $S_1,S_2$ whose corresponding test 
		contains $p$ and passes. 
		$S_2$ contains $p$, while $S_1$ contains all packages that weakly 
		conflict with $p$ and a package from each conflict, 
		along with all those that weakly depend on it. The latter are selected based
		on the order $\succ$, such that $a_1$ is selected to be in $S_1$ first, along
		with its prerequisite $a_5$, followed by $a_2$ and $a_3$.
		}
		\label{fig:defectIdentificationIllustrationKnownDependencies}
	\end{figure}
	Since we have $|S_1|\le d+c$ and $|S_2|=1<2$ there exists a vector
	$v\in\CFFLetter$, such that $S_1\cap v=\emptyset$ and $S_2\subseteq v$. By the
	construction of $S_1$ and $S_2$ we are guaranteed that the test includes $p$ and satisfies all conflict and dependency constrains.
\end{proof}

As mentioned, in the presence of conflicts, identifying all defects is not enough even for solving \maxProblem{}. We therefore show that our algorithm can also learn the conflicts themselves. Intuitively, if two packages weakly conflict, there exists no installation that contains both.
Thus, we need to show that for every pair of \textbf{non-conflicting} packages
$p,q$ our algorithm has a witness -- a successful test that contains both.

\begin{lemma}\label{lem:learnConflictsForKnownDependencies}
	Packages $p,q$ weakly conflict if and only if $S(p)\cap S(q)=\emptyset$.
\end{lemma}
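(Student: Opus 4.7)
The forward direction is immediate from definitions: if $p,q$ weakly conflict, then by definition no successful installation contains both, so in particular no test $t\in\mathcal{T}$ with $f(t)=1$ can satisfy $p,q\in t$, hence $S(p)\cap S(q)=\emptyset$.

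For the converse, I would mirror the construction used in Lemma~\ref{lem:defectIdentificationWithConflictsKnownDependencies}, but with $S_2=\set{p,q}$ instead of $S_2=\set{p}$. Assuming $p,q$ do not weakly conflict, the goal is to exhibit a vector $v\in\CFFLetter$ whose corresponding test $c(v)$ contains both $p$ and $q$ and succeeds; this test would then lie in $S(p)\cap S(q)$. Set $S_2\triangleq\set{p,q}$. Build $S_1$ exactly as in Lemma~\ref{lem:defectIdentificationWithConflictsKnownDependencies}: start with the defect set $D$; then for every unknown weak conflict $\set{x,y\}$, place in $S_1$ the package that is \emph{not} a weak-prerequisite of $p$ or $q$, breaking ties by the fixed order $\succ$; finally close $S_1$ downward with respect to weak dependence so that the CFF propagation $c(v)$ cannot re-introduce any package of $S_1$.

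The key sub-claim I would verify carefully is that this construction is well-defined, i.e., that for every unknown conflict $\set{x,y}$ at least one of $x,y$ is not a weak-prerequisite of $\set{p,q}$. If both $x$ and $y$ were weak-prerequisites of some package in $\set{p,q}$, then any successful installation containing $p$ and $q$ would by definition also contain both $x$ and $y$, violating the conflict; this would force $p$ and $q$ to weakly conflict, contradicting our assumption. Hence the construction succeeds and each of the at most $c$ unknown conflicts contributes one package to $S_1$, giving $|S_1|\le d+c$ while $|S_2|=2$.

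By the defining property of the $\parameterizedCFF{d+c}{2}$ $\CFFLetter$, there exists $v\in\CFFLetter$ with $v_{|S_1}=0$ and $v_{|S_2}=1$. The propagated test $c(v)$ then includes $p$ and $q$ (and all their known prerequisites), contains no defects (as $D\subseteq S_1$ and $S_1$ is closed under weak dependence), and satisfies every unknown weak conflict (since for each such conflict at least one participant lies in $S_1$, and the downward closure guarantees it is excluded from $c(v)$). Thus $c(v)$ succeeds, witnessing $c(v)\in S(p)\cap S(q)$. The main obstacle is precisely the edge case above — guaranteeing feasibility of the exclusion choice for every conflict — and it is exactly here that the hypothesis ``$p,q$ do not weakly conflict'' is used.
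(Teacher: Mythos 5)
Your proof is correct and follows essentially the same route as the paper's: the forward direction from the definition of weak conflict, and the converse via $S_2=\set{p,q}$ with $S_1$ built from the defects, one excluded endpoint per conflict chosen by the order $\succ$, and downward closure under weak dependence, followed by the $\parameterizedCFF{d+c}{2}$ property. Your explicit check that every conflict has at least one endpoint that is not a weak prerequisite of $\set{p,q}$ is a point the paper leaves implicit, but it is the same argument.
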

\begin{proof}
	Notice that if $p,q$ weakly conflict then no test that contains both can pass 
	and $S(p)\cap S(q)=\emptyset$. In the remainder of the proof, we assume 
	that the two do not conflict and show that the algorithm tries a 
	successful installation that contains both. Similarly to the proof of 
	Lemma~\ref{lem:defectIdentificationWithConflictsKnownDependencies}, we 
	define $S_2\triangleq\set{p,q}$; the resulting test will include $p,q$ and all their prerequisites. 
	We also construct $S_1\triangleq D\cup \set{x\mid \exists q'\in E: \mbox{$x$ weakly depends on $q'$}}$ in a similar manner, 
	where $E\triangleq X_1\cup X_2$ and 
	\begin{itemize}
		\item $X_1\triangleq \set{x\mid \exists p':\set{x,p'}\in C \wedge\mbox{$p$
		\textbf{or} $q$ weakly depends on $p'$}}$
	\item $X_2\triangleq \set{x\mid \exists p':(\set{x,p'}\in C) \wedge\mbox{($p$
	\textbf{and} $q$ does not weakly depend on $x$)}\wedge(x\succ p')}$.
	\end{itemize}
	
	Observe that $|S_1|\le d + c$ and $|S_2|=2$. Thus, there exists 
	$v\in\CFFLetter$ such that $v\cap S_1=\emptyset$ and $S_2\subseteq v$.
	Since the resulting test $c(v)$ contains all of the prerequisites of $p,q$ and
	satisfies all other constraints, and since $p,q$ do \textbf{not} (weakly)
	conflict, this test~passes.\qedhere
\end{proof}
 
Putting the lemmas together, we conclude that our algorithm can identify all defects 
and conflicts.
The pseudocode for the algorithm is shown
in Algorithm~\ref{alg:conflicts_w_unknowns}.

\begin{algorithm}[H]
    \ifdefined\fullVersion
	    \footnotesize
	\else
	\scriptsize
	\fi
	\begin{algorithmic}[1]
		\Function{LearnAll}{$P,K,c,d$} \Comment{At most $c$ weak conflicts and $d$
		defects}
		\State $D \gets \emptyset$ \Comment{The set of identified defects}
		\State $C \gets \emptyset$ \Comment{The set of identified weak conflicts}
		\State $\CFFLetter \gets CFF(n,d+c,2)$ \Comment{CFF construction}
		\State $\mathcal T \gets \set{c(v)\mid v\in \CFFLetter}$ \Comment{Test vectors
		generated based on all known dependencies}
		\For {$p \in P$ }
			\State $S(p) \gets \emptyset$ \Comment{Successful installations that
			included $p$}
		\EndFor
		\For {$t \in \mathcal T$}
		\State $f(t) \gets Feedback(t)$ \Comment{Get feedback from oracle}
		\If {$f(t)$ = 1} \Comment{Successful installation}
		\For {$p$ tested as part of $t$}
			\State $S(p) \gets S(p) \cup t$
		\EndFor        
		\EndIf
		\EndFor
		
		\For {$p \in P$ }
			\If {$S(p) = \emptyset$} \Comment{No successful installation exists for $p$}
			\State $D \gets D \cup p$
			\EndIf
			\For {$q \in P, q \neq p$}
			\If {$S(p) \cap S(q) = \emptyset$} 
			\ifdefined\TACASVersion
			\Comment{\scriptsize No successful installation exists for $p$ and $q$ together} 
			\else
			\Comment{No successful installation exists for $p$ and $q$ together} 
			\fi
			\State $C \gets C \cup \{p,q\}$
			\EndIf
			\EndFor
		\EndFor

		\State\Return $D,C$
		\EndFunction
		
	\end{algorithmic}
	\normalsize
	\ifdefined\TACASVersion
	\caption{\small \learnAll{} with defects and conflicts, all dependencies are known}
	\else
	\caption{\learnAll{} with defects and conflicts, all dependencies are known}
	\fi
	\label{alg:conflicts_w_unknowns}
\end{algorithm}

\begin{theorem}\label{thm:learnAllLBWithConflictsUBKnownDependencies}
	Assume that the repository contains at most $d$ defects and $c$ weak conflicts.
	There exists a \textbf{non-adaptive} algorithm that solves \learnAll{} using
	$\parameterizedCFFSize{d+c}{2}=(d+c)^{3+o(1)}\log n$ queries.
\end{theorem}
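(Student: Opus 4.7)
The plan is to show that Algorithm~\ref{alg:conflicts_w_unknowns} both achieves the stated query bound and correctly solves \learnAll{} under the given assumptions, by stitching together the two preceding lemmas and the CFF size bound from Section~\ref{sec:CFFs}. The query count is immediate: the algorithm issues exactly one query per vector in the \parameterizedCFF{d+c}{2} $\CFFLetter$, and the deterministic construction cited in Section~\ref{sec:CFFs} yields $|\CFFLetter| = \parameterizedCFFSize{d+c}{2} = (d+c)^{3+o(1)}\log n$. So the only real content is correctness.

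For correctness, observe that since $K$ (and therefore $G_K$, including its transitive closure used in $c(\cdot)$) is given as input, the only unknown pieces of the output graph $\bar G=(V,E,A)$ are the defect set $D$ and the weak conflict set $C$. The inner loops of Algorithm~\ref{alg:conflicts_w_unknowns} declare $p \in D$ exactly when $S(p)=\emptyset$, which by Lemma~\ref{lem:defectIdentificationWithConflictsKnownDependencies} is an iff characterization of defectiveness; similarly they declare $\{p,q\} \in C$ exactly when $S(p) \cap S(q) = \emptyset$, which by Lemma~\ref{lem:learnConflictsForKnownDependencies} is an iff characterization of weak conflicts. Hence after processing the feedback of the $\parameterizedCFFSize{d+c}{2}$ tests, the algorithm has fully determined $D$ and $C$.

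Finally, to assemble the \learnAll{} output I would note that once $D$ and $C$ are known alongside $K$, the vertex set $V$ (SCCs of the full dependency graph), the arcs $A$ (which here coincide with $K$ since no unknown dependencies are present), and the edges $E=C$ can all be produced by local computation without further queries. The main obstacle is not really in this theorem itself, which is a corollary, but already absorbed into the two lemmas: verifying that the $S_1, S_2$ constructed there avoid the propagation pitfalls of $c(\cdot)$ (specifically, that closing under known dependencies in $S_1$ preserves $|S_1| \le d+c$, and that $S_2$ surviving propagation still yields a passing test when $p$ is non-defective or $\{p,q\}$ is non-conflicting). Given those lemmas, the theorem statement follows directly.
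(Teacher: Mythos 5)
Your proposal is correct and matches the paper's own treatment: the theorem is presented there as a direct consequence of Lemma~\ref{lem:defectIdentificationWithConflictsKnownDependencies} and Lemma~\ref{lem:learnConflictsForKnownDependencies} applied to the feedback of the $\parameterizedCFFSize{d+c}{2}$ tests issued by Algorithm~\ref{alg:conflicts_w_unknowns}, with the $(d+c)^{3+o(1)}\log n$ figure coming from the deterministic CFF construction of Section~\ref{sec:CFFs}. You correctly identify that all the substantive work (handling the propagation $c(\cdot)$ and bounding $|S_1|$) lives in the two lemmas rather than in the theorem itself.
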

\subsection{Using Adaptivity to Reduce Complexity}
The algorithm and lower bound presented above indicate a query dependency of $d^{3\pm o(1)}$ on the number of defects.
As the number of defects may be large, a natural question is whether this cubic dependency can be improved using adaptiveness. 
We now show that, indeed, adaptive algorithms have just a $d^{2+ o(1)}$ dependency on the number of defects. 

\begin{theorem}\label{thm:adaptiveConflictsWithKnownDependencies}
	Assume that the repository contains at most $d$ defects and $c$ weak conflicts.
	There exists an \textbf{adaptive} algorithm that solves \learnAll{} using $\parameterizedCFFSize{d+c}{1}+\parameterizedCFFSize{c}{2}$ queries.
\end{theorem}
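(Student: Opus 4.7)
My plan is to trade the single $(n,d+c,2)$-CFF used in Theorem~\ref{thm:learnAllLBWithConflictsUBKnownDependencies} for two smaller cover-free families executed in succession, where the second family is built only after the defect set $D$ has been fully determined by the first. This two-round structure is exactly what makes the algorithm adaptive: within each round the queries themselves are chosen non-adaptively, but Phase~2 depends on Phase~1's answers.

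\emph{Phase 1 --- identifying the defects.} I would construct an $(n,d+c,1)$-CFF $\CFFLetter_1$, propagate the known dependencies to form the tests $\mathcal{T}_1 \triangleq \{c(v)\mid v\in\CFFLetter_1\}$, and label $p$ as defective iff no test in $\mathcal{T}_1$ that contained $p$ succeeded. Correctness is a direct reuse of Lemma~\ref{lem:defectIdentificationWithConflictsKnownDependencies} specialized to $b=1$: for a non-defective $p$, taking $S_2=\{p\}$ and $S_1$ equal to $D$ together with the weak-dependency closure of one $\succ$-chosen endpoint of each weak conflict, the accounting from that lemma gives $|S_1|\le d+c$ (every non-defective element added to $S_1$ inherits a distinct weak conflict), so the CFF provides a witnessing vector $v$ whose propagated installation $c(v)$ passes.

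\emph{Phase 2 --- identifying the conflicts.} Having learnt $D$, I would restrict attention to $P'\triangleq P\setminus D$ and construct an $(|P'|,c,2)$-CFF $\CFFLetter_2$, whose size is at most $\parameterizedCFFSize{c}{2}$ by monotonicity in the universe. Propagating the known dependencies yields tests $\mathcal{T}_2$, and a pair $p,q\in P'$ is reported as weakly conflicting iff no test in $\mathcal{T}_1\cup\mathcal{T}_2$ successfully installs both packages. Correctness mirrors Lemma~\ref{lem:learnConflictsForKnownDependencies}: with $S_2=\{p,q\}$ and $S_1$ the weak-dependency closure of one representative per weak conflict, each element of $S_1\cap P'$ carries a distinct weak conflict derived from the representative it depends on, so $|S_1\cap P'|\le c$ and $\CFFLetter_2$ delivers the witnessing installation. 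Adding the two phases' query counts gives exactly $\parameterizedCFFSize{d+c}{1}+\parameterizedCFFSize{c}{2}$.

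\emph{Main obstacle.} The delicate step, as in the non-adaptive proofs, is bounding the size of $S_1$ \emph{after} the dependency-propagation $c(\cdot)$ is applied: naively, closing the raw exclusion set under weak dependencies can drag in many packages, so one must argue that every such newly added package either was already in $D$ or produces a fresh weak conflict distinct from any already counted. Once this closure accounting is in place in both phases, the two CFF witness arguments go through verbatim and the query bound is immediate.
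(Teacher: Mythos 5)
Your proposal is correct and follows essentially the same route as the paper: a first round with an $(n,d+c,1)$-CFF (justified by noting that $|S_2|=1$ in Lemma~\ref{lem:defectIdentificationWithConflictsKnownDependencies}) to find the defects, followed by an $(n,c,2)$-CFF restricted to the non-defective packages to learn the conflicts. The extra detail you supply on the $S_1$ accounting in Phase~2 is a sound elaboration of what the paper leaves implicit.
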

\begin{proof}
	The idea behind our algorithm is first to identify the defects, remove them from the repository and then follow the non-adaptive procedure when no defects exist.
	Recall that in the proof of
	Lemma~\ref{lem:defectIdentificationWithConflictsKnownDependencies}, the size of
	$S_2$ was just one (only the package for which we wish to assert
	defectiveness).
	Hence, Lemma~\ref{lem:defectIdentificationWithConflictsKnownDependencies} also
	holds for \parameterizedCFFSize{d+c}{1}. Therefore, by taking a
	\parameterizedCFF{d+c}{1} and propagating the known dependencies, we can find all defects.
	The adaptivity then allows us to compute the subsequent \parameterizedCFF{c}{2}
	only on the non-defective packages and learn all conflicts.
\end{proof}

\fi
\ifdefined\fullVersion
\section{Learning with Unknown Defects, Dependencies, and Conflicts}
\else
\section{\mbox{Learning Unknown Defects, Dependencies, and Conflicts}}
\fi

\label{sec:all} 
Heretofore, we considered cases where either unknown dependencies exist or unknown conflicts exist, but not both.
In this section, we discuss the query complexity of the most difficult scenario in which the repository may have defects, hidden dependencies, and unknown conflicts.

We start by proving that learning unknown dependencies requires a linear number
of queries if the number of known dependencies is not bounded, even if no
conflicts or defects exist. 
\ifdefined\fullVersion
This will lead us to bound the \mbox{overall number of dependencies.}
\fi

\ignore{
\ifdefined\TACASVersion
\vspace*{-1mm}
\begin{theorem}\label{thm:learningADependency}(*)
	Assume that no defects or conflicts exist.
	Any non-adaptive algorithm that solves  \learnAll{} must make at least $n-1$
	queries. Note that this holds even when the repository may have only \textit{up
	to} 1 unknown dependency.
\end{theorem}
\vspace*{-1mm}
\else
}
\begin{theorem}\label{thm:learningADependency}
	Assume that no defects or conflicts exist.
	Any non-adaptive algorithm that solves  \learnAll{} must make at least $n-1$
	queries. Note that this holds even when the repository may have only \textit{up
	to} 1 unknown dependency.
\end{theorem}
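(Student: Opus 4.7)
The plan is to run essentially the same adversary argument used in Theorem~\ref{thm:learningAStrongConflict} for conflicts, replacing the unknown conflict with an unknown dependency. I would take $P \triangleq \set{p, a_1, a_2, \ldots, a_{n-1}}$ and let $K$ be the directed chain $\set{(a_{i+1}, a_i) \mid i = 1, 2, \ldots, n-2}$, so that any installation containing $a_j$ must also contain $a_1, \ldots, a_{j-1}$ simply to satisfy the known dependencies. The adversary will keep $D = \emptyset$ and $C = \emptyset$, and choose at most one unknown dependency from a pool consisting only of edges out of $p$ into the chain.

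Next I would classify the possibly informative queries. Any query that omits $p$ depends only on $K$ (which is fixed) and on defects/conflicts (absent), so its feedback is the same in every adversarial scenario. Any query that contains $p$ together with some $a_j$ but misses one of $a_1, \ldots, a_{j-1}$ fails outright because of a known dependency violation, regardless of the unknown edge. Thus the only queries whose feedback can depend on the adversary's choice are the $n$ chain prefixes $I_j \triangleq \set{p, a_1, \ldots, a_j}$ for $j = 0, 1, \ldots, n-1$.

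I would then set up $n$ candidate instances: $U_0 \triangleq \emptyset$ and, for $\ell = 1, \ldots, n-1$, $U_\ell \triangleq \set{(p, a_\ell)}$. Each is a valid instance with no defects, no conflicts and at most one unknown dependency. Under $U_\ell$ the prefix query $I_j$ succeeds iff $j \ge \ell$, so the feedback pattern is strictly monotone in $\ell$. Consequently $U_{\ell-1}$ and $U_\ell$ agree on every prefix query \emph{except} $I_{\ell-1}$, where $U_{\ell-1}$ succeeds and $U_\ell$ fails. Because the transitive closure of $p$ differs between these two instances, any algorithm that solves \learnAll{} must distinguish them.

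A pigeonhole step then closes the argument: there are exactly $n-1$ critical installations $I_0, I_1, \ldots, I_{n-2}$, and each is the \emph{unique} witness separating some adjacent pair $(U_{\ell-1}, U_\ell)$. If a non-adaptive algorithm poses at most $n-2$ queries, at least one critical $I_{\ell-1}$ is absent from its query set, so the algorithm returns identical output on $U_{\ell-1}$ and $U_\ell$ and must err on one of them. The main obstacle to formalize carefully is the classification step above, namely ruling out clever non-prefix queries involving $p$ (such as $\set{p, a_3, a_5}$) as potentially distinguishing; the chain structure of $K$ is chosen precisely so that every such query is forced to fail for known-dependency reasons alone, making its feedback scenario-independent.
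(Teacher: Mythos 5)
Your proof is correct and uses essentially the same adversary argument as the paper: a chain of known dependencies forces every informative query to be a prefix installation, and the single unknown dependency is placed so that only one specific prefix can separate two admissible instances. The only cosmetic difference is that you vary the \emph{target} of an unknown edge leaving a fresh vertex $p$ (and count $n-1$ critical prefixes by pigeonhole over $n$ candidate instances), whereas the paper varies the \emph{source} of an unknown edge into $a_n$ and exhibits one untested prefix with two indistinguishable scenarios.
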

\begin{proof}
	Denote $P\triangleq\set{a_1,\ldots,a_n}$ and consider the directed path graph 
	given by $K\triangleq\{(a_i,a_{i-1})\mid i\in\set{2,\ldots,n}\}$. 
	Assume by contradiction that there exists an algorithm that makes 
	less than $n$ queries. In such case, there exists some
	$i\in\set{1,\ldots,n-1}$, such that the installation $\set{1,\ldots, i}$ was not tested. Next, consider two
	possible scenarios -- $U=\set{(a_i,a_n)}$ and $U=\set{(a_{i+1},a_n)}$. That
	is, there is an undocumented dependency of either $a_i$ or $a_{i+1}$ on $a_n$, as illustrated in Figure~\ref{fig:learningDependencyLB}.
		The \emph{only} way to distinguish between the two cases is to try
	the installation $\set{1,\ldots i}$, as all other installations will result in
	the same outcome for both scenarios. Thus, the algorithm must try at least $n-1$ queries.
	\begin{figure}[b]
		\centering
		\includegraphics[width = 0.6\columnwidth]{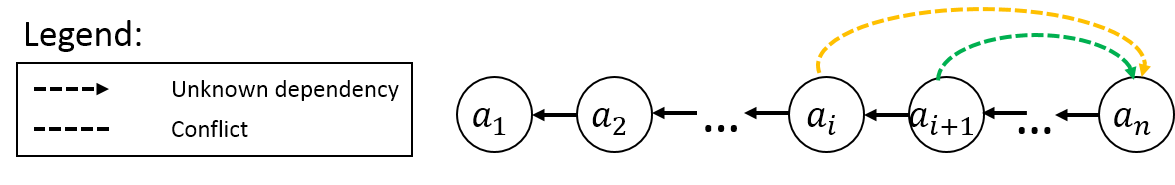}
		
		\caption{An algorithm that tries to learn a single unknown dependency must differentiate between the case where $a_n$ depends on $a_i$ and the case where it depends on $a_{i+1}$. This is only possible by trying the installation $\set{1,\ldots i}$.
		}
		\label{fig:learningDependencyLB}
	\end{figure}
\end{proof}
To circumvent the problem above, we consider a bound on the overall number of dependencies in the repository. 
That is, we henceforward assume that there are at most $u$ unknown dependencies. 
Formally, we set $K=\emptyset$ and provide algorithms 
that learn all constraints without prior knowledge of the dependencies. 
An interesting implication of this assumption is that our upper bound now
depends on the number of conflicts and not the, potentially larger, number of weak conflicts. 
\ifdefined\fullVersion
Thus, the gap in query complexity might not be as large as it seems.
We start with lower bounds where the first considers the maximal \mbox{installation problem.}
\else
We start with a lower bound for \maxProblem{}.
\fi
\ifdefined\TACASVersion
\begin{theorem}\label{thm:maxLowerBoundWithConflicts}(*)
	Assume that the repository contains at most $d$ defects, $u$ unknown 
	dependencies, and $c$ unknown \textbf{weak} conflicts.
	For the case that $c\le 2u$, any \textbf{non-adaptive} algorithm that solves
	\maxProblem{} must use at least $\parameterizedCFFSize{d+\ceil{c/2}-1}{u+1}$ queries.
\end{theorem}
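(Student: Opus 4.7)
The plan is to extend the adversarial argument of Theorem~\ref{thm:unknownDependenciesLB} by replacing some of the ``pure defect'' slots in $S_1$ with \emph{conflict blockers}, where each blocker produces two weak conflicts while consuming only one unknown dependency arc from the budget of $u$. Suppose for contradiction that some non-adaptive algorithm uses strictly fewer than $\parameterizedCFFSize{d+\ceil{c/2}-1}{u+1}$ installation queries. Then its set of queried installations is not a \parameterizedCFF{d+\ceil{c/2}-1}{u+1}, so there exist disjoint sets $S_1,S_2\subseteq P$ with $|S_1|=d+\ceil{c/2}-1$ and $|S_2|=u+1$ such that \emph{no} queried installation contains all of $S_2$ while excluding all of $S_1$.

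Next, I would partition $S_1=\set{x_1,\ldots,x_{d-1}}\cup\set{b_1,\ldots,b_{\ceil{c/2}}}$ and $S_2=\set{a_1,\ldots,a_{u-\floor{c/2}}}\cup\set{b'_1,\ldots,b'_{\floor{c/2}}}\cup\set{p}$. The hypothesis $c\le 2u$ is exactly what makes this partition feasible, since it guarantees $\floor{c/2}\le u$ so enough ``$a$-slots'' remain in $S_2$. I would then build the adversarial instance with $K=\emptyset$, defect set $D=\set{x_1,\ldots,x_{d-1}}$, unknown dependency set $U=\set{(p,a_i)}_{i}\cup\set{(p,b'_j)}_{j\le\floor{c/2}}$ (so $|U|=u$), and conflict set $C=\set{\set{b_j,b'_j}\mid j\le\floor{c/2}}$, adding the single direct conflict $\set{b_{\ceil{c/2}},p}$ when $c$ is odd. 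Each pair $\set{b_j,b'_j}\in C$ yields two weak conflicts, namely $\set{b_j,b'_j}$ itself and $\set{b_j,p}$ via the unknown arc $(p,b'_j)$, so the weak-conflict count is exactly $c$ in both the even and odd cases.

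The final step is a standard two-scenario adversary. Consider Scenario~A as above, and Scenario~B obtained from it by additionally placing $p$ itself in $D$ (still valid: defects may have outgoing dependency arcs). In either scenario, a query succeeds only if it contains $S_2$ and avoids $S_1$; by the choice of $S_1,S_2$ no such query was made, so every query fails and the oracle answers are bit-identical across the two scenarios. However, the maximum installable set in Scenario~A contains $p$ (together with all of $S_2$ and possibly more), whereas in Scenario~B the target $p$ is defective and cannot appear in any successful installation. Hence the algorithm's output cannot be correct in both scenarios, contradicting its assumed correctness.

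The main subtlety will be the bookkeeping when $c$ is odd: verifying that the single extra direct conflict $\set{b_{\ceil{c/2}},p}$ does not spuriously enlarge the set of queries that the construction must forbid, and checking that the weak-conflict tally comes out to exactly $c$ rather than $c+1$. Beyond this, the argument mirrors the defects-only lower bound, with the $\ceil{c/2}-1$ extra slots in $|S_1|$ accounted for purely by the conflict budget.
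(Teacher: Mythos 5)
Your construction is essentially the paper's: the paper likewise extracts, from the failure of the query set to be a \parameterizedCFF{d+\ceil{c/2}-1}{u+1}, disjoint sets $S_1=\set{b_1,\ldots,b_{\ceil{c/2}}}\cup\set{x_1,\ldots,x_{d-1}}$ and $S_2=\set{a_1,\ldots,a_u}\cup\set{p}$, and runs the same two-scenario adversary on whether $p$ is defective. The only substantive difference is that the paper reuses the dependency targets $a_i$ themselves as the conflict partners of the blockers (conflicts $\set{b_i,a_i}$ for $i\le\ceil{c/2}-1$ plus one direct conflict $\set{b_{\ceil{c/2}},p}$), whereas you introduce fresh partners $b'_j$ inside $S_2$; both give two weak conflicts per blocker and both use $c\le 2u$ for the same reason, so this is cosmetic.

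Two points need repair. First, the claim that ``every query fails'' is false: queries not containing $p$ (e.g.\ $\set{a_1}$) can succeed. The correct statement is that every query containing $p$ fails in both scenarios, and every query not containing $p$ is answered identically in both, so the transcripts coincide. Second, the odd-$c$ ``bookkeeping'' you defer is not a tally issue but a real hole in the adversary: once $b_{\ceil{c/2}}$ conflicts \emph{directly} with $p$, excluding $p$ from an installation frees $b_{\ceil{c/2}}$, and the maximum installation size is $n-d-\ceil{c/2}+1$ whether or not $p$ is included. A single $p$-free maximum set is then a valid \maxProblem{} output in \emph{both} scenarios, so indistinguishability no longer forces an error. (The paper's own proof uses the direct conflict $\set{b_{\ceil{c/2}},p}$ for every $c$ and is exposed to exactly this tie; your even-$c$ construction, which routes all $c$ weak conflicts through prerequisites of $p$, avoids it, since there every maximum installation of Scenario~A must contain $p$.) So for even $c$ your argument is actually cleaner than the paper's; for odd $c$ you need a different gadget for the last blocker, or must settle for the slightly weaker bound $\parameterizedCFFSize{d+\floor{c/2}-1}{u+1}$.
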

\else
\begin{theorem}\label{thm:maxLowerBoundWithConflicts}
	Assume that the repository contains at most $d$ defects, $u$ unknown 
	dependencies, and $c$ unknown \textbf{weak} conflicts.
	For the case that $c\le 2u$, any \textbf{non-adaptive} algorithm that solves
	\maxProblem{} must use at least $\parameterizedCFFSize{d+\ceil{c/2}-1}{u+1}$ queries.
\end{theorem}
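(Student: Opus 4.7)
The plan is to extend the adversarial construction from Theorem~\ref{thm:unknownDependenciesLB} by enlarging the ``blocking'' set $S_1$ with $\ceil{c/2}$ conflict witnesses in addition to $d-1$ defects. Suppose the algorithm issues fewer than $\parameterizedCFFSize{d+\ceil{c/2}-1}{u+1}$ queries; then the defining CFF property yields disjoint $S_1,S_2\subseteq P$ with $|S_1|=d+\ceil{c/2}-1$ and $|S_2|=u+1$ such that no issued query both contains all of $S_2$ and excludes all of $S_1$. I would write $S_2=\set{p,a_1,\ldots,a_u}$ and partition $S_1 = D_1\cup B_1$ with $|D_1|=d-1$ and $B_1=\set{b_1,\ldots,b_{\ceil{c/2}}}$.

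Both scenarios I construct take $K=\emptyset$, declare the unknown dependencies $U=\set{(p,a_i):i\in[u]}$, and endow each $b_j\in B_1$ with a strong conflict to a distinct prerequisite $a_j\in S_2\setminus\set{p}$; the hypothesis $c\le 2u$ supplies enough distinct prerequisites to host these conflicts. The scenarios differ only in whether $p$ is defective: \emph{Scenario A} takes $D=D_1$ (so $p$ is not defective), while \emph{Scenario B} takes $D=D_1\cup\set{p}$. Because each strong conflict $\set{b_j,a_j}$ propagates through $p\to a_j$ to a derived weak conflict $\set{b_j,p}$, the total number of weak conflicts is $2|B_1|$, which is precisely why only $\ceil{c/2}$ witnesses fit within the $c$-budget.

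It then remains to check (i) that each query yields identical feedback in both scenarios and (ii) that the correct answers to \maxProblem{} differ. For (i), any query containing $p$ but missing some $a_i$ fails in A by dependency violation and in B because $p$ is defective; any query containing all of $S_2$ must, by the CFF property, meet $S_1$ at some $s$, which is either a defect in $D_1$ or a $b_j$ whose partner $a_j$ is forced to lie in the query, so the query fails in A by defect or by conflict and in B by $p$'s defectiveness; queries avoiding $p$ see identical defect and conflict environments in A and B. For (ii), $P\setminus S_1$ is a valid successful installation in A of size $n-d-\ceil{c/2}+1$ that contains $p$, and a direct size count against the best installation excluding $p$ (which must sacrifice either $p$ itself or a vertex cover of the conflict pairs in addition to the defects) shows a strict gap, whereas in B every successful installation excludes~$p$.

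The principal obstacle is the weak-conflict accounting: each strong conflict $\set{b_j,a_j}$ buys two weak conflicts via the $p\to a_j$ propagation, so only $\ceil{c/2}$ conflict witnesses may be placed. For odd $c$ the budget leaves a single slot that cannot be filled with a fully derived pair; I would handle this by either trimming $B_1$ by one and absorbing the slack into a strictly-smaller Scenario~B optimum, or by routing the extra witness's conflict directly to $p$ so its propagated pair does not duplicate. I also rely on the convention---already implicit in Theorem~\ref{thm:learnAllLBWithConflictsKnownDependenciesLB}---that the $c$-bound counts only conflict-induced, not defect-induced, weak-conflict pairs in Scenario~B, so that turning $p$ into a defect does not inflate the count past~$c$.
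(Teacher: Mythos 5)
Your proposal is correct and follows essentially the same route as the paper: the paper's construction is exactly your ``patched'' variant, with $\ceil{c/2}-1$ of the $b_j$ conflicting with distinct prerequisites $a_j$ and the last one conflicting directly with $p$ (which keeps the weak-conflict count at $2\ceil{c/2}-1\le c$ for all parities), and the indistinguishability argument between the $p$-defective and $p$-non-defective scenarios is the same. Your extra care about not counting defect-induced weak conflicts in Scenario~B is a convention the paper indeed uses implicitly but does not spell out.
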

\begin{proof}
	The setting for the proof is similar to that of
	Theorem~\ref{thm:unknownDependenciesLB} except that there are $\ceil{c/2}-1$
	packages, each conflicting with one of $a_1,\ldots,
	a_{\ceil{c/2}-1}$, therefore they all indirectly conflict with $p$. Another
	package directly conflicts with $p$.
	Formally, if we assume that the algorithm tests less than $\parameterizedCFFSize{d+\ceil{c/2}-1}{u+1}$ installations, 
	there exists a pair of package sets $S_1\triangleq\set{b_1,\ldots,b_{\ceil{c/2}}}\cup\set{x_1,\ldots,x_{d-1}}$ and 
	$S_2\triangleq\set{a_1,\ldots,a_u}\cup\set{p}$ such that no test contains all of $S_2$ but none of the packages in $S_1$. 
	In this case, we prove that the algorithm cannot determine whether $p$ is defective in the following scenario.
	Consider the dependencies $(p,a_1),(p,a_2),\ldots,(p,a_u)$, defects
	$x_1,\ldots,x_{d-1}$ and conflicts $\set{b_1,a_1}, \ldots,\set{b_{\ceil{c/2}-1},a_{\ceil{c/2}-1}},\allowbreak\set{b_{\ceil{c/2}},p}$. 
	The setting is illustrated in Figure~\ref{fig:maxLowerBoundWithConflicts}.
	
	\begin{figure}[]
		\centering
		\includegraphics[width = 0.6\columnwidth]{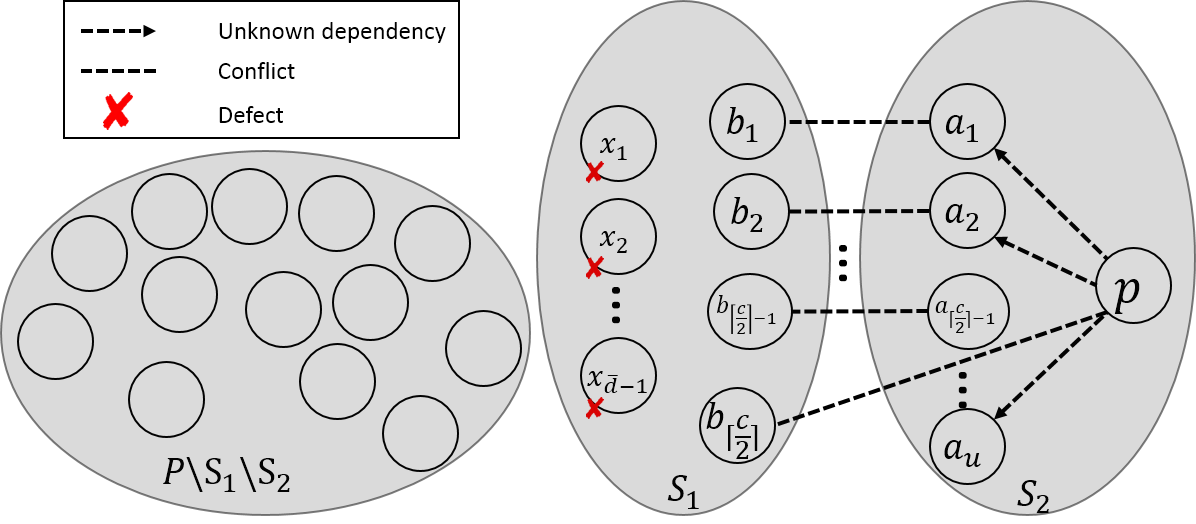}
		
		\caption{If the algorithm makes less than 
		$\parameterizedCFFSize{d+\ceil{c/2}-1}{u+1}$
		queries, it can not determine if $p$ is defective.
		}
		\label{fig:maxLowerBoundWithConflicts}
	\end{figure}
	
	Notice that the number of dependencies is $u$ and that the number of 
	weak conflicts is $c$ (as $\set{b_i,a_i}$ and $\set{b_i,p}$ are 
	weak conflicts for $i\in\set{1,\ldots,\ceil{c/2}-1}$). As in Theorem~\ref{thm:unknownDependenciesLB}, the algorithm 
	has no way to distinguish between the case where $p$ is defective and the case that it is not, 
	as no test that contains $p$ has passed. Finally, notice that the maximal 
	installation contains $a_1,\ldots, a_u$ and potentially $p$, and thus the algorithm fails to solve the problem.
\end{proof}
\fi

\ifdefined\TACASVersion
We also prove a stronger bound for the \learnAll{} problem.
\else
Using a slightly different construction, we prove a stronger bound for the \learnAll{} problem.
\fi
%
%
%

\ignore{
\ifdefined\TACASVersion
\begin{theorem}\label{thm:learnAllLBWithConflictsV2}(*)
	Assume that the repository contains at most $d$ defects, $u$ unknown
	dependencies and $c$ unknown \textbf{weak} conflicts.
	Any \textbf{non-adaptive} algorithm that solves \learnAll{} must use at least $\parameterizedCFFSize{d+c-1}{u+2}$ queries.
\end{theorem}
\else
}

\begin{theorem}\label{thm:learnAllLBWithConflictsV2}
	Assume that the repository contains at most $d$ defects, $u$ unknown
	dependencies and $c$ unknown \textbf{weak} conflicts.
	Any \textbf{non-adaptive} algorithm that solves \learnAll{} must use at least $\parameterizedCFFSize{d+c-1}{u+2}$ queries.
\end{theorem}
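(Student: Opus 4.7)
The plan is to mirror the adversary arguments used for Theorem~\ref{thm:unknownDependenciesLB} and Theorem~\ref{thm:learnAllLBWithConflictsKnownDependenciesLB}, but combined so that the two ``undistinguishable'' packages of the latter theorem also require weak-dependency padding of size $u+1$ as in the former. First, assume for contradiction that some non-adaptive algorithm uses fewer than $\parameterizedCFFSize{d+c-1}{u+2}$ queries. By the minimality property of \CFFs{}, there must then exist disjoint sets $S_1, S_2 \subseteq P$ with $|S_1|=d+c-1$ and $|S_2|=u+2$ such that \emph{no} tested installation contains all of $S_2$ while excluding every package in~$S_1$.

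Next, I will label $S_1 = \set{x_1,\ldots,x_d} \cup \set{b_1,\ldots,b_{c-1}}$ and $S_2 = \set{p,q} \cup \set{a_1,\ldots,a_u}$, and exhibit two problem instances that coincide on every answered query but have different \learnAll{} outputs. In both instances $K=\emptyset$, the defect set is $D = \set{x_1,\ldots,x_d}$, the unknown dependencies are $U = \set{(p,a_1),\ldots,(p,a_u)}$, and the conflicts include $\set{b_i,p}$ for $i\in[c-1]$; the single difference is whether $\set{p,q}$ is also a conflict. In the instance \emph{without} the $\set{p,q}$ conflict, the installation $\set{p,q,a_1,\ldots,a_u}$ is successful (no defect, no conflict, every dependency satisfied), so $\set{p,q}$ is not a weak conflict; in the instance \emph{with} the $\set{p,q}$ conflict it is, so any correct algorithm must produce different outputs on these two inputs.

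It then remains to argue that no queried installation distinguishes the two instances. The only way a query can yield a different feedback in the two scenarios is if it succeeds in one and fails in the other, which forces it to simultaneously (i) contain both $p$ and $q$, (ii) avoid every defect $x_i$, (iii) avoid every conflicting $b_i$ (since $p$ is present), and (iv) satisfy the unknown dependencies of $p$, hence include $a_1,\ldots,a_u$. Together these conditions say exactly that the query contains all of $S_2$ and none of $S_1$, contradicting the defining property of $S_1,S_2$. Thus no tested query separates the two instances and the algorithm must err.

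The main obstacle is making sure the counts remain within the prescribed bounds in both instances: there are exactly $d$ defects, $u$ unknown dependencies, and at most $c$ weak conflicts, so the disjointness $S_1\cap S_2=\emptyset$ plays a crucial role in preventing the construction from accidentally making any $a_i$ defective or any $b_i$ coincide with $p$ or $q$. A minor subtlety to double-check is that each declared conflict $\set{b_i,p}$ is genuinely a \emph{weak} conflict (which follows because $b_i\notin D$ can be installed alone, while $p$ has a successful installation $\set{p,a_1,\ldots,a_u}$ that excludes $b_i$), and that the lower bound holds no matter whether the algorithm is required to output the dependency set exactly or only the transitive closure.
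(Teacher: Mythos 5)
Your proposal is correct and follows essentially the same construction as the paper's proof: the same split $S_1=\set{x_1,\ldots,x_d}\cup\set{b_1,\ldots,b_{c-1}}$, $S_2=\set{p,q}\cup\set{a_1,\ldots,a_u}$, with the $b_i$ conflicting with $p$, the $a_i$ as unknown prerequisites of $p$, and indistinguishability of the two instances that differ only in whether $\set{p,q}$ is a conflict. Your write-up is in fact somewhat more explicit than the paper's about why only a query containing all of $S_2$ and none of $S_1$ could separate the two instances.
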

\begin{proof}
	Unlike the \maxProblem{} 
	bound discussed earlier, an algorithm solving
	\learnAll{} must determine for every $p,q$ whether the two conflict.
	This allows us to consider the case where $c-1$ packages $b_1,\ldots,b_{c-1}$
	directly conflict with $p$ that also has $u$ unknown dependencies on
	$a_1,\ldots,a_u$, in addition to $x_1,\ldots,x_d$ defective packages.
	The setting is depicted in Figure~\ref{fig:learnAllLBWithConflictsV2}.
	
	\begin{figure}[]
		\centering
		\includegraphics[width = \columnwidth]{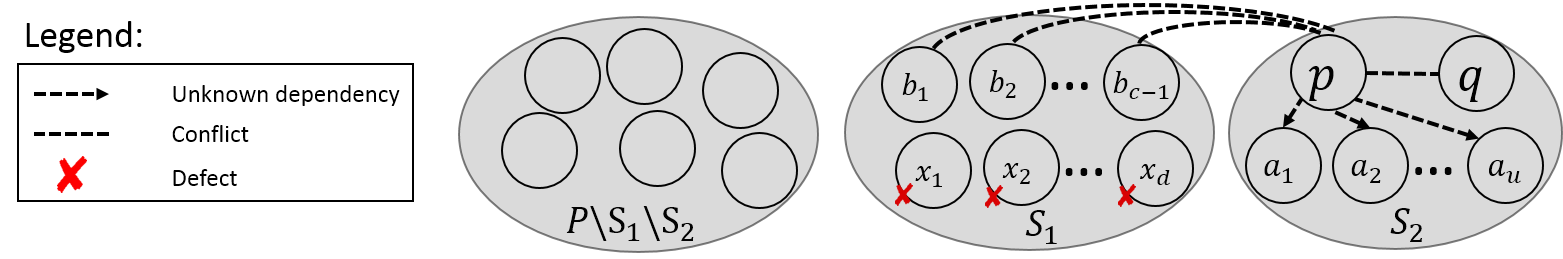}
		
	\caption{In the scenario $x_1,\ldots,x_d$ are defective and 
		$b_1,\ldots,b_{c-1}$ conflict with $p$ that also has prerequisites $a_1,\ldots,a_u$, the only way to determine if there is a conflict between $p$ and $q$ is to have a test that includes all of $S_2$ and excludes $S_1$.
		}
		\label{fig:learnAllLBWithConflictsV2}
	\end{figure}
	
	Observe that the number of unknown dependencies is $u$, the number of defects is $d$, and the number of weak conflicts is at most $c$, as required.
	Since no installation that contains $\set{a_1,\ldots,a_u}\cup\set{p,q}$ and
	excludes $\set{b_1,\ldots,b_{c-1}}\cup\set{x_1,\ldots,x_d}$ is tested, the
	algorithm cannot determine whether $p$ conflicts with $q$ or not, as the
	feedback for all other tests is identical for the two cases.
\end{proof}


We proceed with a non-adaptive algorithm for the \learnAll{} problem. 
The test selection is similar to that of the previous section, except that now we have no known dependencies to propagate. Namely, we construct a \parameterizedCFF{d+c+u+1}{u+2} \CFFLetter{} and the tests are simply $\mathcal T\triangleq \CFFLetter{}$.
\ifdefined\TACASVersion
In the full version of this paper~\cite{fullVersion}, we prove lemmas that show that (i) A package $p\in P$ is defective if and only if $S(p)=\emptyset$. (ii) Packages $p,q$ weakly conflict if and only if $S(p)\cap S(q)=\emptyset$. (iii) Package $p$ weakly depends on $q$ if and only if $S(p)\subseteq S(q)$.
\else
Following are lemmas that show that from the feedback of the tests 
we can infer all constraints. 
\ifdefined\TACASVersion
\begin{lemma}\label{lem:defectIdentificationWithConflicts}(*)
	A package $p\in P$ is defective if and only if $S(p)=\emptyset$.
\end{lemma}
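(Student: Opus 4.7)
The plan is to extend the proof of Lemma~\ref{lem:defectIdentificationWithConflictsKnownDependencies} to the setting where unknown dependencies are also present. The easy direction is immediate: if $p\in D$ then by definition no successful installation contains $p$, hence $S(p)=\emptyset$. For the converse I would assume $p$ is not defective and use the \parameterizedCFF{d+c+u+1}{u+2} property of $\CFFLetter{}$ to exhibit a vector $v\in\CFFLetter{}$ whose corresponding test -- which is simply $v$ itself, since $K=\emptyset$ throughout this section -- contains $p$ and succeeds, thereby certifying $v\in S(p)$.

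The core of the argument is to construct two disjoint sets $S_2,S_1\subseteq P$ with $\abs{S_2}\le u+2$ and $\abs{S_1}\le d+c+u+1$ such that any vector $v$ with $S_2\subseteq v$ and $v\cap S_1=\emptyset$ yields a successful installation. Let $R(p)$ denote the set of packages that $p$ transitively depends on via unknown dependencies; since at most $u$ unknown dependency edges exist overall, $\abs{R(p)}\le u$, and I set $S_2\triangleq\set{p}\cup R(p)$. For $S_1$, I first assemble $E_0$ by selecting, for every weak conflict $\set{a,b}\in C$, the side not in $S_2$ when exactly one of $a,b$ lies in $S_2$, or an arbitrary side otherwise; the case of both sides lying in $S_2$ cannot arise because it would force $p$ to be a defect, contradicting the assumption. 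I then close $E_0$ under the \emph{reverse} unknown-dependency relation: $E^*$ is the set of packages that transitively depend on some member of $E_0$. Because every package in $E^*\setminus E_0$ witnesses a distinct unknown-dependency edge and there are at most $u$ such edges, we get $\abs{E^*}\le c+u$, so $S_1\triangleq D\cup E^*$ satisfies $\abs{S_1}\le d+c+u$.

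Disjointness of $S_2$ and $S_1$ holds because any prerequisite of the non-defective $p$ is itself non-defective (as $D$ has no incoming arcs), and no element of $S_2$ can land in $E^*$ without either contradicting $p$'s installability or the construction of $E_0$. The CFF property then yields a vector $v\in\CFFLetter{}$ with $S_2\subseteq v$ and $v\cap S_1=\emptyset$. To argue the test $v$ succeeds I would verify three points: (i) $v$ contains no defects since $D\subseteq S_1$; (ii) every unknown dependency among packages in $v$ is satisfied, because the prerequisites of $p$ are in $R(p)\subseteq v$ and, by the reverse-closure of $E^*$, no package in $v$ depends on any member of $S_1$; and (iii) no weak conflict is realised inside $v$, since for every weak conflict at least one side belongs to $E_0\subseteq S_1$ and is therefore absent from $v$. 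I expect the main obstacle to be step (ii): one must ensure that the reverse-closure $E^*$ absorbs every package whose inclusion in $v$ would indirectly force an excluded prerequisite into the test, and simultaneously argue that this absorption enlarges $E_0$ by at most $u$ elements. Both facts hinge critically on the global bound $\abs{U}\le u$ rather than on any per-package degree bound.
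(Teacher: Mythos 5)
Your overall architecture is the paper's: the same easy direction, the same $S_2=\set{p}\cup R(p)$ of size at most $u+1$, and the same device of resolving each conflict by excluding the endpoint that is not forced into $S_2$ (the paper fixes the tie-breaking with an arbitrary order $\succ$ where you say ``arbitrary side''). The genuine gap is in your step (ii). The CFF property constrains the witness vector $v$ only on the coordinates of $S_1\cup S_2$; on every other package, $v$ is arbitrary. Now take an unknown dependency $(s,t)\in U$ in which $s$ is not a weak prerequisite of $p$ and $s$ does not transitively depend on any member of $E_0$. Then $s\notin E^{*}$ and $s\notin S_2$, and $t$ need not lie in $S_1\cup S_2$ either. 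The vector $v$ guaranteed by the \parameterizedCFF{d+c+u+1}{u+2} may therefore contain $s$ while omitting $t$, and the installation fails. Your reverse closure $E^{*}$ only absorbs dependers whose chains lead \emph{into} the conflict-resolution set $E_0$; it does nothing about unknown edges that are entirely disjoint from the conflict and defect structure, and those are exactly the ones that can kill the test. (Note also that, since $K=\emptyset$ here, there is no propagation that could ``force an excluded prerequisite into the test''; the danger is the opposite one, of a needed prerequisite simply being absent.)

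The paper's construction closes this hole, within the same budget, by excluding \emph{every} package that has an outgoing unknown dependency and is not a weak prerequisite of $p$ --- not merely those whose dependency chains reach $E_0$. With that set in $S_1$, each unknown edge $(s,t)$ is satisfied for a trivial reason: either $p$ weakly depends on $s$, in which case $t\in R(p)\subseteq S_2\subseteq v$, or $s\in S_1$ and hence $s\notin v$. Each such excluded package accounts for at least one distinct edge of $U$, so this adds at most $u$ elements to $S_1$; moreover your $E^{*}\setminus E_0$ is contained in this set (any member of it has an outgoing unknown edge and cannot be a weak prerequisite of $p$, since $E_0$ is disjoint from $S_2$), so the bound $|S_1|\le d+c+u$ is unaffected and the same CFF suffices. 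Substituting this exclusion rule for your reverse closure makes your argument coincide with the paper's proof.
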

\else
\begin{lemma}\label{lem:defectIdentificationWithConflicts}
	A package $p\in P$ is defective if and only if $S(p)=\emptyset$.
\end{lemma}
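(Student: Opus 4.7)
The $(\Rightarrow)$ direction is immediate from the definition of a defect: if $p$ is defective, no successful installation contains it and in particular $S(p) = \emptyset$. The substance lies in the converse, whose plan mirrors the proof of Lemma~\ref{lem:defectIdentificationWithConflictsKnownDependencies} but is tailored to the regime $K = \emptyset$ in which the tests $\mathcal T = \CFFLetter$ undergo no dependency propagation. Given a non-defective $p$, my goal is to construct disjoint sets $S_1, S_2 \subseteq P$ with $|S_1| \le d+c+u+1$ and $|S_2| \le u+2$ such that \emph{every} vector $v \in \CFFLetter$ satisfying $S_2 \subseteq v$ and $v \cap S_1 = \emptyset$ is a successful test, and then invoke the defining property of the $(n,d+c+u+1,u+2)$-CFF to produce such a $v$.

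The construction I have in mind is the following. Set $S_2$ to be the direct dependency closure of $p$ under $U$, namely $S_2 \triangleq \set{p} \cup \set{q : q \text{ is reachable from } p \text{ via arcs of } U}$; since $|U| \le u$, this has at most $u+1$ elements. For $S_1$ I take the union of three sets: the defect set $D$ (of size at most $d$); the set $Q \triangleq \set{q \notin S_2 : \exists r,\ (q,r) \in U}$ of sources of unknown dependencies not already in $S_2$ (of size at most $u$); and a set $E_C$ containing, for each conflict $\set{x,y} \in C$, a single endpoint that does not lie in $S_2$ (of size at most $c$). Summing gives $|S_1| \le d+c+u \le d+c+u+1$, and by construction each of the three contributing sets is disjoint from $S_2$, so $S_1 \cap S_2 = \emptyset$.

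The step that makes $E_C$ well-defined — and that I view as the key observation — is that no conflict can have \emph{both} endpoints inside $S_2$. Indeed, if $\set{x,y} \in C$ with $x, y \in S_2$, then direct dependency propagation forces any installation containing $p$ to contain both $x$ and $y$, so no such installation can succeed, contradicting the assumption that $p$ is non-defective. The same reasoning shows $D \cap S_2 = \emptyset$, since every element of $S_2$ lies in any successful installation of $p$ and hence is non-defective.

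Granted the above, the CFF property supplies some $v \in \CFFLetter$ with $S_2 \subseteq v$ and $v \cap S_1 = \emptyset$, and I then verify that $v$ is a successful installation containing $p$: we have $p \in v$ since $p \in S_2$; no defect is installed because $D \subseteq S_1$; no conflict is violated because each conflict has an endpoint in $S_1$; and every unknown dependency $(q,r) \in U$ with $q \in v$ is satisfied because either $q \in S_2$, in which case $r \in S_2 \subseteq v$ by closure of $S_2$ under $U$, or $q \notin S_2$, in which case $q \in Q \subseteq S_1$, contradicting $q \in v$. The delicate point I expect to require the most care is precisely this last verification: the CFF does not constrain the coordinates of $v$ outside $S_1 \cup S_2$, and the whole purpose of placing $Q$ inside $S_1$ is to forbid any ``free'' package with an unknown outgoing dependency from sneaking into $v$ and breaking~it.
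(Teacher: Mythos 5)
Your proof is correct and follows essentially the same strategy as the paper's: the same $S_2$ (the closure of $p$ under its unknown dependencies), an $S_1$ assembled from the defects, one excluded endpoint per conflict, and the dependency-related exclusions, the same key observation that two conflicting prerequisites of $p$ would make $p$ defective, and the same invocation of the $(n,d+c+u+1,u+2)$-CFF property to produce the witness test. If anything, your choice to place in $S_1$ the \emph{sources} of unknown dependencies lying outside $S_2$ is the cleaner rendering of the paper's stated intent (its formal $X_1$ is phrased in terms of dependency targets), and it is precisely what makes your final verification of the dependency constraints airtight against ``free'' packages that the CFF leaves unconstrained.
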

\begin{proof}
	The difference between the model here and that of Lemma~\ref{lem:defectIdentificationWithConflictsKnownDependencies} is that we now allow unknown dependencies and assume that $K=\emptyset$.
	As before, a defective package $p$ yields $S(p)=\emptyset$; we show that if $p\notin D$ then we have a successful test that includes $p$ to witness that.
	Intuitively, we satisfy each conflict by excluding (having in $S_1$) one of its 
	packages along with all those that depend on it. Unlike before, we split our unknown dependency resolution to cases. We wish to have all packages that $p$ 
	weakly depends on active (i.e., in $S_2$). All other packages that have an
	unknown dependency are excluded (placed in $S_1$), and so are all the packages
	that depend on them.
	The setting is illustrated in Figure~\ref{fig:defectIdentificationIllustration}.
	
	%
	%
	Next, notice that if $p$ weakly depends on two packages $p',p''$ that conflict, then $p$ cannot be successfully installed and is thus a defect. Therefore, we hereafter assume that no two packages that $p$ weakly depends on conflict.
	
	Following are formal definitions of sets $S_1, S_2$ such that the test that
	contains $S_2$ and avoids $S_1$ succeeds, thus serving as a witness to the
	non-defectiveness of $p$.
	We define $S_2$ as $p$ and all of the the packages it depends on:
	$S_2\triangleq\set{p}\cup\set{p'\mid\mbox{$p$ has an \textbf{unknown}
	weak dependency on $p'$}}$.
	Once again, we consider some order $\succ$ on $P$ for resolving the conflicts. 
	We define three package sets as follows:
	\begin{itemize}
		\item $X_1\triangleq \set{q\mid \mbox{($p$ does not weakly depend on $q$)}\wedge(\exists p'\in P: (p',q)\in U)}$.
		\item $X_2\triangleq \set{q\mid \exists p':(\set{q,p'}\in C) \wedge(\mbox{$p$ weekly depends on $p'$})}$.
		\item $X_3\triangleq \set{q\mid \exists p':(\set{q,p'}\in C) \wedge\mbox{($p$ does not weekly depends on $q$)}\wedge(q\succ p')}$.
	\end{itemize} 
	
	Intuitively, $X_1$ is the set of packages that have an unknown dependency and are not a prerequisite to $p$. $X_2$ contains all packages that conflict with a prerequisite of $p$. $X_3$'s packages are those that have no relation to $p$ and has an unknown conflict with another package with a lower index according to $\succ$.
	If we make sure that we have a test that installs all of $p$'s prerequisites and excludes all packages in $X_1\cup X_2\cup X_3$, it will pass unless $p$ is defective. 
	Thus, we define $S_1\triangleq D\cup X_1\cup X_2\cup X_3$.
	\begin{figure}[]
		\centering
		\includegraphics[width =
		0.75\columnwidth]{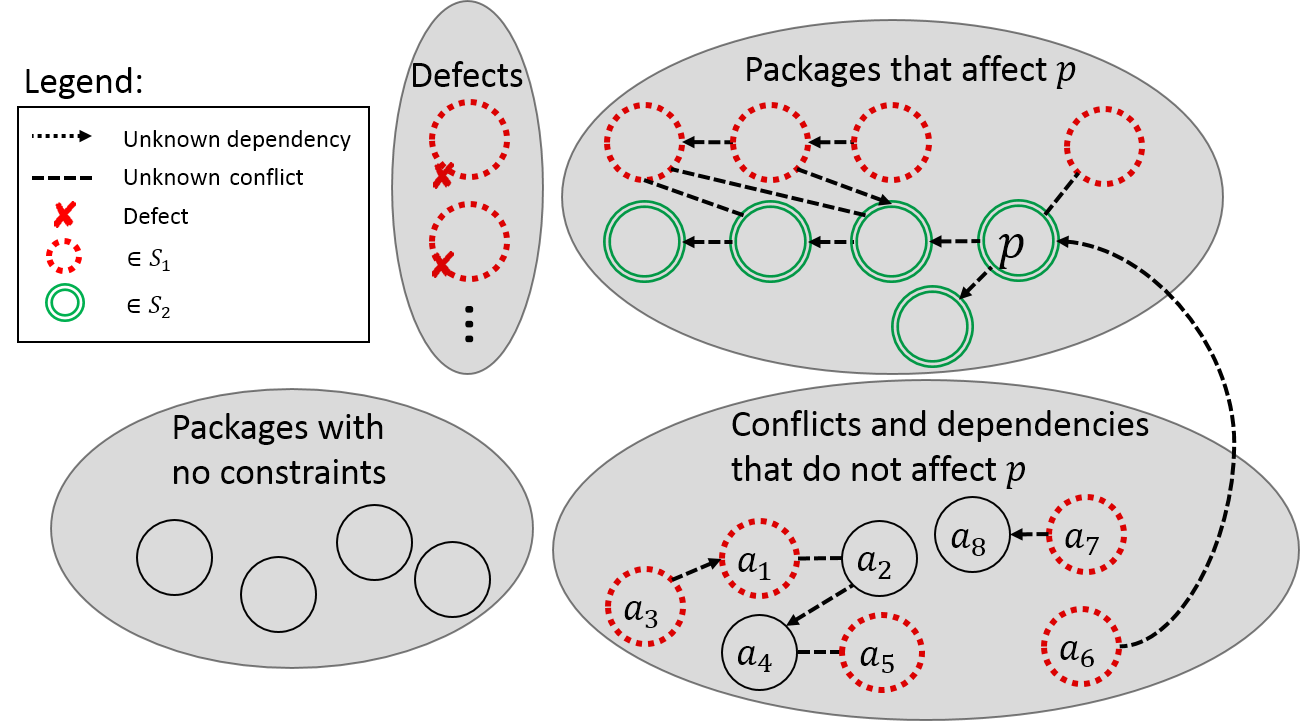}
		
		\caption{An example of a $S_1,S_2$ whose corresponding test contains 
		$p$ and passes. $S_2$ contains $p$ and all of its prerequisites. $S_1$
		contains all packages that weakly conflict with $p$, all packages $q$ that are not a 
		prerequisite to $p$ and have an unknown dependency, and a package from each
		conflict, chosen according to $\succ$.
		}
		\label{fig:defectIdentificationIllustration}
	\end{figure}
	Observe that every package in $S_1$ can be uniquely associated with a defect, conflict or dependency.
	Thus, we have $|S_1|\le d+c+u<d+c+u+1$ and $S_2\le u+1<u+2$ and hence there exists a vector $v\in\CFFLetter$ (and subsequently, a test) such that $S_1\cap v=\emptyset$ and $S_2\subseteq v$. By the construction of $S_1$ and $S_2$ we are guaranteed that the test includes $p$ and satisfies all conflict and dependency constrains.
\end{proof}
\fi

As mentioned, in the presence of conflicts, identifying all defects is not enough even for solving \maxProblem{}. We therefore show that our algorithm can also identify the dependencies and conflicts. 
As in the previous section, we show that if two packages do not conflict, 
then there exists a successful test that installs both and serves as a witness. 
\ifdefined\TACASVersion
\begin{lemma}\label{lem:learningConflictsWithUnknownDependencies}(*)
	Packages $p,q$ weakly conflict if and only if $S(p)\cap S(q)=\emptyset$.
\end{lemma}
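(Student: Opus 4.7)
The plan is to mirror the structure of Lemma~\ref{lem:defectIdentificationWithConflicts}. The forward direction is immediate: if $p$ and $q$ weakly conflict then by definition no successful installation contains both of them, so in particular $S(p) \cap S(q) = \emptyset$. The substance of the proof is the converse, for which I need to exhibit a successful test in $\mathcal T$ that contains both $p$ and $q$ whenever the two do not weakly conflict.

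First I would handle the degenerate subcase: if $p$ weakly depends on some package $p'$ that weakly conflicts with $q$ (or symmetrically with $q$ weakly depending on something that conflicts with $p$), or if $p,q$ together weakly depend on a pair of mutually conflicting packages, then no installation containing both $p$ and $q$ could ever succeed, contradicting the assumption that $p,q$ do not weakly conflict. So I can assume this does not happen, which makes the witness-construction consistent.

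Next, following the same recipe as before, I would build disjoint sets $S_2, S_1 \subseteq P$, fix an arbitrary order $\succ$ on $P$, and set
$$S_2 \triangleq \{p,q\} \cup \{p' \mid \text{$p$ or $q$ weakly depends on $p'$}\},$$
and $S_1 \triangleq D \cup X_1 \cup X_2 \cup X_3$, where $X_1$ is the set of packages having an unknown incoming dependency that are \emph{not} weakly depended on by $p$ or $q$; $X_2$ is the set of packages that weakly conflict with some prerequisite of $p$ or $q$; and $X_3$ is the set of packages that have no relation to $p$ or $q$ and appear as the $\succ$-larger endpoint of an unknown conflict. Each package in $S_1$ can be charged to a distinct defect, weak conflict, or unknown dependency, giving $|S_1| \le d+c+u < d+c+u+1$, while $|S_2| \le u+2$ since there are at most $u$ unknown dependencies over which the prerequisites of $p$ and $q$ range (and $K=\emptyset$).

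The key step is then invoking the \parameterizedCFF{d+c+u+1}{u+2} property to obtain a vector $v \in \CFFLetter$ with $S_2 \subseteq v$ and $S_1 \cap v = \emptyset$; by construction the corresponding test contains $p$ and $q$ together with every prerequisite they weakly depend on, excludes every defect, and for every unknown conflict leaves at least one endpoint out. Hence it satisfies every dependency and conflict constraint, and since $p,q$ do not weakly conflict the installation succeeds, witnessing $v \in S(p) \cap S(q)$. The main obstacle I anticipate is bookkeeping: verifying that the charging argument in $|S_1| \le d+c+u$ really assigns each package to a unique defect, conflict, or dependency (so that overlap between $X_1$, $X_2$, and $X_3$ and $D$ does not cause double-counting), and checking that $S_1 \cap S_2 = \emptyset$, which comes down to ruling out cases like a prerequisite of $p$ lying in some $X_i$ — precisely the degenerate cases excluded at the start.
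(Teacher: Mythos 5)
Your proposal matches the paper's proof essentially step for step: the same trivial forward direction, and for the converse the same witness construction with $S_2=\set{p,q}$ plus all (unknown) weak prerequisites of either, and $S_1=D\cup X_1\cup X_2\cup X_3$ with the same three exclusion sets, the same bounds $|S_1|\le d+c+u$ and $|S_2|\le u+2$, and the same appeal to the \parameterizedCFF{d+c+u+1}{u+2} property. Your explicit handling of the degenerate case (a prerequisite of $p$ conflicting with $q$, etc.) and your worry about the disjointness/charging bookkeeping are sensible additions, but they do not change the argument, which is the one the paper gives.
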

\else
\begin{lemma}\label{lem:learningConflictsWithUnknownDependencies}
	Packages $p,q$ weakly conflict if and only if $S(p)\cap S(q)=\emptyset$.
\end{lemma}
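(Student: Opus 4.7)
The plan is to mirror the structure of Lemma~\ref{lem:defectIdentificationWithConflicts}, but with $S_2$ enlarged to witness a test that contains \emph{both} $p$ and $q$. The forward direction is immediate: if $p,q$ weakly conflict then no successful installation can contain both, hence $S(p)\cap S(q)=\emptyset$. For the converse I will assume $p,q$ do \emph{not} weakly conflict and exhibit a specific test in $\mathcal T$ that contains both and passes.

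First, I would rule out a degenerate case: if there exist packages $p',p''$ that conflict, such that $p$ weakly depends on one and $q$ weakly depends on the other (or both weakly depend on packages that conflict), then every installation containing $\{p,q\}$ must also contain both $p'$ and $p''$, contradicting non-conflicting of $p,q$. So I can assume this does not occur.

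Next, define $S_2\triangleq\{p,q\}\cup\{p'\mid p\text{ or }q\text{ weakly depends on }p'\}$. Since $K=\emptyset$, every weak dependency arises from an element of $U$, and the number of distinct targets of dependencies in $U$ is at most $u$, so $|S_2|\le u+2$. For $S_1$, fix an arbitrary total order $\succ$ on $P$ and set $S_1\triangleq D\cup X_1\cup X_2\cup X_3$, where
\begin{itemize}
  \item $X_1\triangleq\{x\mid\text{neither }p\text{ nor }q\text{ weakly depends on }x,\ \exists p''\in P:(p'',x)\in U\}$,
  \item $X_2\triangleq\{x\mid\exists p':\{x,p'\}\in C\wedge(p\text{ or }q\text{ weakly depends on }p')\}$,
  \item $X_3\triangleq\{x\mid\exists p':\{x,p'\}\in C\wedge(\text{neither }p\text{ nor }q\text{ weakly depends on }x)\wedge(x\succ p')\}$.
\end{itemize}
Each element of $X_1$ is charged to a distinct unknown dependency, so $|X_1|\le u$; each element of $X_2\cup X_3$ is charged to a distinct unknown conflict, so $|X_2\cup X_3|\le c$. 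Together $|S_1|\le d+c+u\le d+c+u+1$, and $S_1\cap S_2=\emptyset$ by construction (defects are excluded because $p,q\notin D$, $X_1$ excludes weak prerequisites of $p,q$, and the case ruled out above guarantees no weak prerequisite of $p$ or $q$ lands in $X_2$).

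By the $\parameterizedCFF{d+c+u+1}{u+2}$ property of $\mathcal F$, there exists $v\in\mathcal F$ with $v\cap S_1=\emptyset$ and $S_2\subseteq v$; the corresponding test $t=v\in\mathcal T$ contains $p$ and $q$. All unknown dependencies out of packages in $v$ land inside $S_2$ (otherwise the target would be in $X_1\subseteq S_1$, contradicting $v\cap S_1=\emptyset$), so every dependency constraint is satisfied; every conflict is broken because for each conflict $\{x,p'\}$ either one endpoint is a weak prerequisite of $p$ or $q$ (then the other lies in $X_2$) or neither is (then the $\succ$-larger endpoint lies in $X_3$); and $v$ contains no defects since $D\subseteq S_1$. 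Hence $t$ succeeds, $t\in S(p)\cap S(q)$, and the claim follows. The main obstacle is verifying that the three types of conflicts are \emph{simultaneously} resolvable without accidentally forcing a weak prerequisite of $p$ or $q$ into $S_1$; this is exactly what the preliminary case-ruling and the split into $X_2,X_3$ via $\succ$ are designed to guarantee.
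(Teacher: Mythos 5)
Your construction is, up to notation, the paper's own proof: the same $S_2=\set{p,q}\cup\set{\text{weak prerequisites of }p\text{ or }q}$, the same three exclusion sets $X_1,X_2,X_3$ resolved via an arbitrary order $\succ$, the same counting $|S_1|\le d+c+u$, $|S_2|\le u+2$, and the same appeal to the \parameterizedCFF{d+c+u+1}{u+2} property. Your preliminary case-ruling (that $p$ and $q$ cannot jointly require both endpoints of a conflict) is a nice touch that the paper leaves implicit; it is exactly what makes $S_1\cap S_2=\emptyset$ for the $X_2$ part.

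One step in your verification does not hold up as written: the claim that every unknown dependency out of a package in $v$ is satisfied ``otherwise the target would be in $X_1\subseteq S_1$, contradicting $v\cap S_1=\emptyset$.'' If $(s,t)\in U$ with $t$ not a weak prerequisite of $p$ or $q$, then $t\in X_1$ indeed forces $t\notin v$ --- but that is precisely the failure mode, not a contradiction, because nothing prevents the \emph{source} $s$ from lying in $v$ (for packages outside $S_1\cup S_2$ the CFF vector is unconstrained). The set you must place in $S_1$ is the set of sources: $X_1=\set{x\mid (\text{neither $p$ nor $q$ weakly depends on }x)\wedge \exists t:(x,t)\in U}$. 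Then every $(s,t)\in U$ is handled: either $s$ is (or is a weak prerequisite of) $p$ or $q$, in which case $t\in S_2\subseteq v$, or $s\in X_1\subseteq S_1$ so $s\notin v$ and the constraint is vacuous. The bound $|X_1|\le u$ survives since distinct sources come from distinct arcs of $U$. This is a local fix and matches the paper's informal description of $X_1$ (``packages that have an unknown dependency and are not a prerequisite''), even though the paper's displayed formula has the same coordinate in the second position that you copied.
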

\begin{proof}
	Similarly to Lemma~\ref{lem:learnConflictsForKnownDependencies}, a pair of conflicting packages $p,q$ cannot be included in a successful test. Here, we show that if no successful installation was tried by the algorithm, then such does not exist and the two packages conflict.
	In the remainder of the proof, we assume that the two do not conflict and show that the algorithm tries a successful installation that contains both. 
	We define $S_2\triangleq\set{p,q}\cup\set{p'\mid\mbox{$p$ or $q$ has an \textbf{unknown} weak dependency on $p'$}}$ to include $p,q$, and any of their unknown weak prerequisites. We also construct  $S_1\triangleq D\cup X_1\cup X_2\cup X_3$ in a similar manner, where
	\begin{itemize}
		\item $X_1\triangleq \set{x\mid \mbox{($p$ \textbf{and} $q$ do not weakly depend on $x$)}\wedge(\exists p'\in P: (p',x)\in U)}$.
		\item $X_2\triangleq \set{x\mid \exists p':(\set{x,p'}\in C) \wedge(\mbox{$p$ \textbf{or} $q$ weakly depend on $p'$)}}$.
		\item $X_3\triangleq \set{x\mid \exists p':(\set{x,p'}\in C) \wedge\mbox{($p$ \textbf{and} $q$ do not weakly depend on $x$)}\wedge(x\succ p')}$.
	\end{itemize}
	Observe that $|S_1|\le d + c + u<d+c+u+1$ and that $|S_2|\le u+2$. Therefore, there exists a test $v\in\mathcal T$ such that $v\cap S_1=\emptyset$ and $S_2\subseteq v$. The test contains all prerequisites of $p,q$ and satisfies all other constraints. Thus, it passes unless $p,q$ conflict.
\end{proof}
\fi

We are left with showing that the algorithm can also learn the unknown dependencies. 
If $p$ depends on $q$, no test that contains $p$ but not $q$ can pass. We show that if no successful installation with $p$ and 
not $q$ is tested by the algorithm, then such does not exist and $p$ depends on $q$. 
\ifdefined\TACASVersion
\begin{lemma}\label{lem:learningDependencies}(*)
	Package $p$ weakly depends on $q$ if and only if $S(p)\subseteq S(q)$.
\end{lemma}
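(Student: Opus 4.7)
The forward direction is immediate: if $p$ weakly depends on $q$, then by definition every successful installation that contains $p$ must also contain $q$, so in particular every $t\in S(p)$ lies in $S(q)$, whence $S(p)\subseteq S(q)$.

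For the converse I would contrapose, assuming that $p$ does not weakly depend on $q$ and exhibiting a test $t\in S(p)\setminus S(q)$. The plan is to mirror the CFF-based constructions of Lemmas~\ref{lem:defectIdentificationWithConflicts} and~\ref{lem:learningConflictsWithUnknownDependencies}: I would set $S_2\triangleq\set{p}\cup\set{p'\mid p\text{ weakly depends on }p'}$ (note $q\notin S_2$ by hypothesis, and $|S_2|\le u+1<u+2$) and $S_1\triangleq\set{q}\cup D\cup X_1\cup X_2\cup X_3$, where $X_1,X_2,X_3$ are defined exactly as in Lemma~\ref{lem:defectIdentificationWithConflicts}: the packages not in $p$'s weak closure that are targets of unknown dependencies, the conflict-mates of $p$'s weak prerequisites, and the $\succ$-maximum side of every remaining unknown conflict. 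Each element of $X_1\cup X_2\cup X_3$ can be charged to a distinct unknown dependency or unknown conflict, so $|S_1|\le 1+d+u+c=d+c+u+1$.

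Invoking the $\parameterizedCFF{d+c+u+1}{u+2}$ property of $\CFFLetter$ then yields a vector $v\in\CFFLetter$ with $S_1\cap v=\emptyset$ and $S_2\subseteq v$. The corresponding test contains $p$ together with all its weak prerequisites, excludes $q$ and all defects, and, by exactly the $\succ$-based argument used in Lemma~\ref{lem:defectIdentificationWithConflicts}, satisfies every unknown conflict as well. Hence it is a successful installation witnessing $v\in S(p)\setminus S(q)$, so $S(p)\not\subseteq S(q)$.

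The one point that will need care — and the main obstacle — is the degenerate case in which two packages in $S_2$ already conflict (e.g.\ $p$ weakly depends on two packages that weakly conflict). In that case $p$ is itself defective, so $S(p)=\emptyset\subseteq S(q)$ holds vacuously and the converse is trivially true; thus this pathology can be ruled out at the start of the argument, precisely as in the previous two lemmas. Modulo this bookkeeping, the proof is a direct parallel of the earlier CFF witness arguments and no new combinatorial ingredient is needed.
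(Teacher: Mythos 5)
Your forward direction, the overall CFF-witness strategy, and the dismissal of the degenerate case (if $p$ weakly depends on two conflicting packages it is itself defective, so $S(p)=\emptyset$) all match the paper's proof. However, there is a genuine gap in your construction of $S_1$: you exclude $q$ itself but not the packages that \emph{weakly depend on} $q$. The CFF property only constrains the witness vector $v$ on the coordinates of $S_1\cup S_2$; on every other package $v$ is arbitrary. So if some package $q'$ weakly depends on $q$ (say $(q',q)\in U$, with $q'$ neither defective nor involved in any conflict nor the target of any unknown dependency), then $q'\notin S_1\cup S_2$ under your definitions, and the vector $v$ guaranteed by the CFF may well install $q'$ while excluding $q$. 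That installation violates the dependency of $q'$ on $q$ and fails, so it is not a witness in $S(p)\setminus S(q)$, and nothing forces the existence of a successful one. None of your sets catches $q'$: the $X_1$ you import from Lemma~\ref{lem:defectIdentificationWithConflicts} collects \emph{targets} of unknown dependencies outside $p$'s weak closure, whereas $q'$ here is a \emph{source}; $X_2$ and $X_3$ concern only conflicts; and $q'$ need not lie in $D$.

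The paper closes exactly this hole by adding to $S_1$ the further set $\set{q'\mid \mbox{$q'$ has a weak dependency on $q$}}$ alongside $\set{q}$, $D$, the dependency-target set, and the two conflict-resolution sets, and then charges this set together with the dependency-target set against the budget of $u$ unknown dependencies, so that $|S_1|\le d+1+u+c$ and the \parameterizedCFF{d+c+u+1}{u+2} still supplies the required vector. Once you add that set and redo the size accounting accordingly, the remainder of your argument goes through as written.
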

\else
\begin{lemma}\label{lem:learningDependencies}
	Package $p$ weakly depends on $q$ if and only if $S(p)\subseteq S(q)$.
\end{lemma}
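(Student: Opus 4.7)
The forward direction is immediate: if $p$ weakly depends on $q$, then any successful installation containing $p$ must contain $q$, so every $t \in S(p)$ also belongs to $S(q)$.

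For the converse, I would argue the contrapositive. Assume $p$ does not weakly depend on $q$; the goal is to produce a test $t \in \mathcal{T}$ with $p \in t$, $q \notin t$, and $t$ successful, which witnesses $S(p) \not\subseteq S(q)$. Following the strategy of Lemmas~\ref{lem:defectIdentificationWithConflicts} and \ref{lem:learningConflictsWithUnknownDependencies}, I set $S_2 \triangleq \{p\} \cup \{p' : p \text{ weakly depends on } p'\}$. Since $p$ does not weakly depend on $q$ we have $q \notin S_2$, and because $K = \emptyset$ forces all weak dependencies to arise from the at most $u$ edges of $U$, we get $|S_2| \le u + 1 \le u + 2$.

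The set $S_1$ is built by adding $q$ to the exclusion construction used in the defect lemma: $S_1 \triangleq \{q\} \cup D \cup X_1 \cup X_2 \cup X_3$, where $X_1, X_2, X_3$ are defined exactly as in Lemma~\ref{lem:defectIdentificationWithConflicts} (capturing, respectively, packages outside $S_2$ that participate in unknown dependencies, conflict-partners of $p$'s weak prerequisites, and one side of every remaining unknown conflict chosen via an arbitrary order $\succ$). Each additional element of $S_1$ can be charged to a distinct member of $\{q\}$, $D$, $U$, or $C$, giving $|S_1| \le 1 + d + u + c$. The $(n, d+c+u+1, u+2)$-CFF $\CFFLetter$ then guarantees some $v \in \CFFLetter$ with $S_2 \subseteq v$ and $S_1 \cap v = \emptyset$, and because no propagation is needed (since $K = \emptyset$) the corresponding test $t = v$ lies in $\mathcal{T}$ and contains $p$, excludes $q$, installs all weak prerequisites of $p$, and avoids every defect and at least one endpoint of every unknown conflict.

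The main obstacle, as in the two preceding lemmas, is verifying that no package in $t$ has an unsatisfied unknown dependency. For any $p' \in t$ with outgoing unknown dependency $(p', p'') \in U$, $p'$ weakly depends on $p''$; by the construction of $X_1$ together with $S_2$, either $p' \in S_2$, in which case transitivity of weak dependency forces $p'' \in S_2 \subseteq t$, or $p'$ falls into the exclusion set and hence $p' \notin t$, so the bad case does not arise. Combined with the defect-and-conflict arguments already made in Lemmas~\ref{lem:defectIdentificationWithConflicts} and \ref{lem:learningConflictsWithUnknownDependencies}, this shows that $t$ is a successful installation, yielding $t \in S(p) \setminus S(q)$ and completing the proof.
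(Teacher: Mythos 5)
Your overall route is the same as the paper's: identical $S_2$, the same $(n,d+c+u+1,u+2)$-CFF, and the same charging argument giving $|S_1|\le d+c+u+1$ and $|S_2|\le u+1$. The one place you diverge is the crux of this lemma, and it is where your argument is currently not airtight. The paper's $S_1$ contains, as an explicit additional set, $\set{q'\mid q'\mbox{ weakly depends on }q}$ (its $X_1$), precisely so that excluding $q$ cannot be sabotaged by the CFF vector switching on some package that needs $q$. You omit this set and instead assert in your last paragraph that every such package is already swept up by the unknown-dependency exclusion set inherited from Lemma~\ref{lem:defectIdentificationWithConflicts}. That assertion is only true under one reading of that set: it must consist of the \emph{sources} of unknown dependencies that are not weak prerequisites of $p$. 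As that set is literally written in the defect lemma ($\exists p':(p',q)\in U$), it collects the \emph{targets}; under that reading, a package $q'$ with $(q',q)\in U$ that is not itself the target of any unknown dependency, is in no conflict, and is not defective lies outside $S_1\cup S_2$, so the CFF vector may set its bit to $1$, and the resulting test contains $q'$ but not $q$ and fails --- your claimed witness evaporates. Your gloss ``packages that participate in unknown dependencies'' does not resolve this, and taking both endpoints would give $|X_1|\le 2u$ and break your count against the CFF parameter $d+c+u+1$.

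The fix is cheap but must be stated: either add the paper's extra set $\set{q'\mid q'\mbox{ weakly depends on }q}$ to $S_1$ and observe (as the paper does) that it and the other dependency-related exclusions together charge to at most $u$ distinct unknown dependencies, or prove the claim you are implicitly using --- that since $K=\emptyset$, every non-defective package that weakly depends on $q$ is the source of some edge of $U$ and cannot be a weak prerequisite of $p$ (else $p$ would weakly depend on $q$ by transitivity), hence lies in the sources-version of the exclusion set. With that claim made explicit your proof is correct and matches the paper's bound; without it, the step ``the bad case does not arise'' is exactly the new difficulty this lemma introduces over the defect and conflict lemmas, and it is being assumed rather than shown.
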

\begin{proof}
	As in the other lemmas, one direction here is straightforward -- if $p$ weakly depends on $q$ then $S(p)\subseteq S(q)$.
	We now construct sets $S_1,S_2$, such that the corresponding test will include $p$, exclude $q$, and pass unless $p$ weakly depends on $q$.
	This is achieved by excluding from the test $q$, and all the packages that weakly depend on it, including $p$ and all its prerequisites, and resolving all unrelated conflicts, dependencies, and defects.
	Formally we set $S_2\triangleq\set{p}\cup\set{p'\mid \mbox{$p$ has a weak dependency on $p'$}}$ to include $p$ and all its prerequisites.
	We define $S_1\triangleq D\cup \set{q}\cup X_1\cup X_2\cup X_3\cup X_4$, where:
	\begin{itemize}
		\item $X_1\triangleq\set{q'\mid \mbox{$q'$ has a weak dependency on $q$}}$
		\item $X_2\triangleq\set{x\mid (\mbox{$p$ does not weakly depend on
		$x$})\wedge(\exists y:(x,y)\in U)}$
		\item $X_3\triangleq \set{x\mid \exists p':(\set{x,p'}\in C) \wedge(\mbox{$p$ weekly depends on $p'$})}$.
		\item $X_4\triangleq \set{x\mid \exists p':(\set{x,p'}\in C) \wedge\mbox{($p$ does not
		 weakly depend on $x$)}\wedge(x\succ p')}$.
	\end{itemize}
	In other words, $X_1$ excludes from the test all packages that weakly depend on $q$; $X_2$ removes all packages with unknown dependencies that are not a prerequisite to $p$; $X_3$ contains all packages that conflict with a prerequisite of $p$; and $X_4$ is the set of packages, chosen according to $\succ$, that is used to resolve conflicts unrelated to $p$.
	An illustration of the setting is depicted in Figure~\ref{fig:learningDependencies}.
	\begin{figure}[]
	\centering
	\includegraphics[width = 0.8\columnwidth]{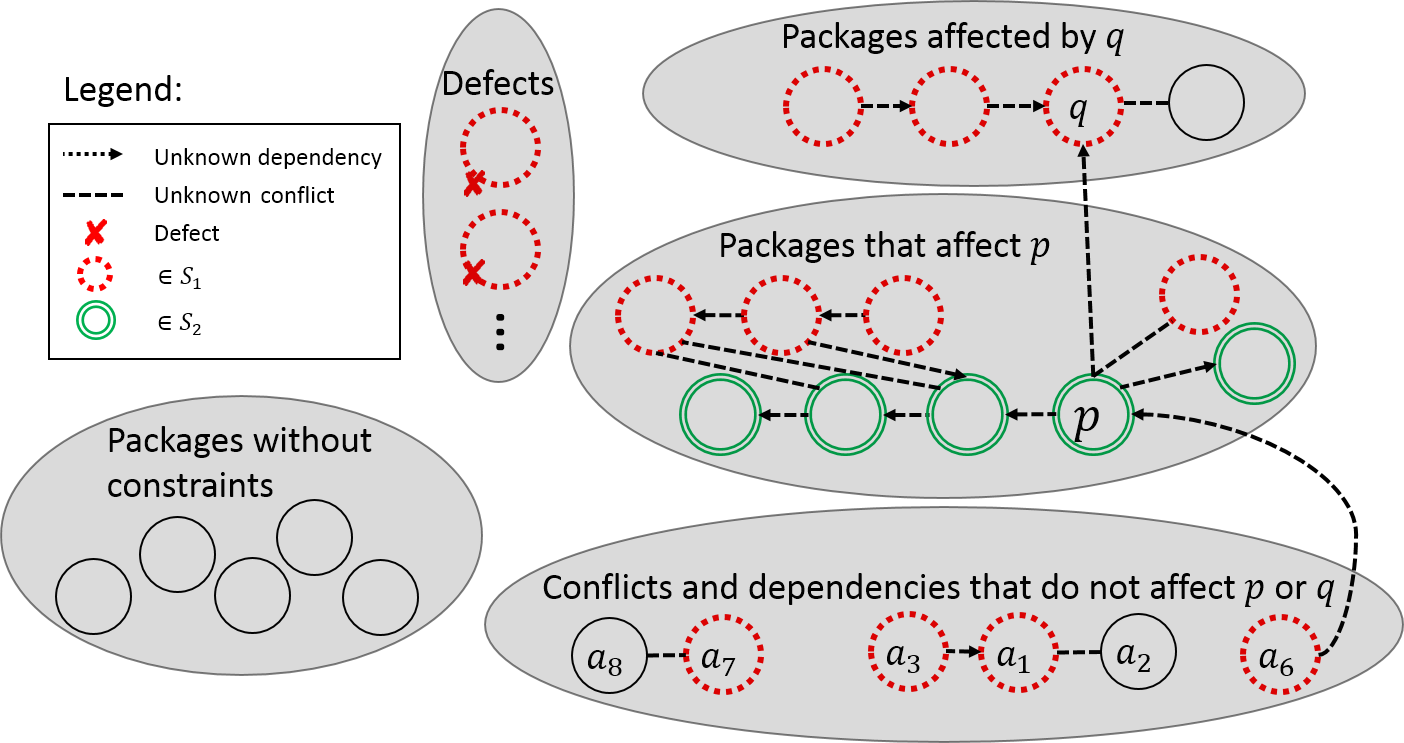}
	
	\caption{An example of a $S_1,S_2$ whose corresponding test 
		contains $p$ and passes. 
		$S_2$ contains $p$ and its prerequisites, while $S_1$ contains all packages
		that weakly conflict with $p$, $q$ and all those affecting q, and a
		package from each conflict, along with all those that weakly depend on it. 
	}
	\label{fig:learningDependencies}
	\end{figure}	
	Observe that since $|X_1|+|X_2|\le u$, $|X_3|+|X_4|\le c$, we have $|S_1|\le d + 1 + u + c$. Finally, since $|S_2|\le u+1<u+2$ we have that there exists a witness test that addresses all other constraints and passes unless $p$ weakly depends on $q$.
\end{proof}
\fi

\fi
As our algorithm learns all missing constraints using the tests in $\CFFLetter$, we conclude the~following.
\begin{corollary}\label{cor:learnAllLBWithConflictsUB}
	Assume that the repository contains at most $d$ defects, $u$ unknown dependencies and $c$ conflicts.
	There exists a \textbf{non-adaptive} algorithm for \learnAll{} that uses
	$\parameterizedCFFSize{d+c+u+1}{u+2}$~queries.
\end{corollary}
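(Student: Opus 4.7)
The plan is to verify that the three feedback relations summarized before the corollary statement suffice to reconstruct the strongly connected component graph of $C(P,K\cup U)$ together with its defects and conflicts, and that each relation is correctly captured by the \parameterizedCFF{d+c+u+1}{u+2} construction. Concretely, I would argue that once the algorithm has computed $S(p)$ for every $p\in P$, Lemmas on defect, conflict, and dependency identification tell it (i) whether $p\in D$ (iff $S(p)=\emptyset$), (ii) whether $\{p,q\}$ is a weak conflict (iff $S(p)\cap S(q)=\emptyset$), and (iii) whether $p$ weakly depends on $q$ (iff $S(p)\subseteq S(q)$). Combining these, I would recover the weak dependency preorder, contract its strongly connected components into vertices of the output graph $\bar G=(V,E,A)$, mark the components containing defects, place an arc in $A$ for every remaining weak dependency between distinct components, and place an undirected edge in $E$ for every weak conflict. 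Since there are no known dependencies to propagate in this section, the test set is simply $\mathcal T = \mathcal F$, yielding $|\mathcal T| = \parameterizedCFFSize{d+c+u+1}{u+2}$.

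The main step is justifying the ``if'' direction in each of the three biconditionals, that is, showing that whenever the relation does not hold there is a witness test in $\mathcal F$. Each witness is constructed by exhibiting disjoint sets $S_1,S_2\subseteq P$ with $|S_1|\le d+c+u+1$ and $|S_2|\le u+2$ such that every test containing all of $S_2$ and none of $S_1$ necessarily succeeds; the CFF property then guarantees that at least one such test is in $\mathcal F$. For defect identification I would take $S_2=\{p\}\cup\{\text{weak prerequisites of }p\}$ and place into $S_1$ the defects $D$, every package that has an incoming unknown dependency but is not a prerequisite of $p$, and, for each unknown conflict, one of its two endpoints together with everything that weakly depends on that endpoint, choosing the endpoint either because $p$ weakly depends on the other side of the conflict or, otherwise, by a fixed tie-breaking order~$\succ$.

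For the conflict and dependency identification, I would reuse the same template with small adjustments: for conflicts take $S_2=\{p,q\}\cup\{\text{weak prerequisites of }p\text{ or }q\}$ and inflate $S_1$ by endpoints of conflicts involving a prerequisite of either package; for dependencies take $S_2=\{p\}\cup\{\text{weak prerequisites of }p\}$ and add $q$ together with all packages that weakly depend on $q$ into $S_1$. In each case, careful bookkeeping shows $|S_2|\le u+2$ (because $p$, and possibly $q$, each contribute at most $u+1$ weak prerequisites in the worst case jointly bounded by $u+2$) and $|S_1|\le d+c+u+1$ (assigning each element of $S_1$ to a distinct defect, unknown dependency, conflict, or, in the dependency lemma, the single package $q$).

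The hard part will be the accounting in the construction of $S_1$: the inflation used to resolve conflicts and unknown dependencies can cascade along weak dependency chains, and we must show that despite this cascade every element of $S_1$ can be charged to a unique defect, unknown dependency, or conflict. This is exactly the role played by the $\succ$-ordering for breaking symmetry in conflicts and by the distinction between ``prerequisite of $p$'' and ``non-prerequisite with an incoming unknown dependency'' in controlling how many packages must be excluded. Once those charges are established, the size bounds $|S_1|\le d+c+u+1$ and $|S_2|\le u+2$ fall out, the CFF gives the required witness test, and summing the queries over all $v\in\mathcal F$ yields the claimed $\parameterizedCFFSize{d+c+u+1}{u+2}$ upper bound.
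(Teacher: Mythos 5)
Your proposal follows essentially the same route as the paper: take $\mathcal T=\CFFLetter$ for a \parameterizedCFF{d+c+u+1}{u+2} (no propagation since $K=\emptyset$), establish the three characterizations $p\in D\iff S(p)=\emptyset$, weak conflict $\iff S(p)\cap S(q)=\emptyset$, weak dependency $\iff S(p)\subseteq S(q)$ via witness sets $S_1,S_2$ built and charged exactly as in the paper's Lemmas~\ref{lem:defectIdentificationWithConflicts}--\ref{lem:learningDependencies} (including the $\succ$-tie-breaking for conflicts and the $+1$ for $q$ in the dependency case), and read off the output graph. The cascading-inflation worry you flag resolves just as you anticipate -- every package that weakly depends on an excluded endpoint is itself the source of an unknown dependency and is charged to one of the $u$ arcs -- so the accounting closes and the proof is correct.
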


\ifdefined\TACASVersion
Finally, we discuss a test-design that allows us \mbox{to reduce the complexity.}
\begin{theorem}\label{thm:learnAllLBWithConflictsUB}(*)
	Assume that the repository contains at most $d$ defects, $u$ unknown dependencies and $c$ conflicts.
	There exists a \textbf{non-adaptive} algorithm for \learnAll{} that uses
	 $\sum_{i=0}^{u}\parameterizedCFFSize{d+c+(u-i)+1}{i+2}$ queries.
\end{theorem}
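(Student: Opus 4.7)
My approach would be to refine the single-CFF construction behind Corollary~\ref{cor:learnAllLBWithConflictsUB} by replacing it with a family of $u+1$ smaller CFFs indexed by the number of relevant unknown-dependency targets. Concretely, for each $i\in\{0,1,\ldots,u\}$ I would build a $\parameterizedCFF{d+c+(u-i)+1}{i+2}$, and define the test collection $\mathcal{T}$ to be the union of the tests drawn from these $u+1$ families. By construction, $|\mathcal T|=\sum_{i=0}^{u}\parameterizedCFFSize{d+c+(u-i)+1}{i+2}$, matching the claimed upper bound, so it remains only to show that this $\mathcal T$ still solves \learnAll{}.

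For correctness, I would revisit the three identification claims underlying the algorithm of Corollary~\ref{cor:learnAllLBWithConflictsUB}, namely the analogues of Lemmas~\ref{lem:defectIdentificationWithConflicts}, \ref{lem:learningConflictsWithUnknownDependencies}, and \ref{lem:learningDependencies}. Each lemma needs, in the negative case, a witness test that contains some set $S_2$ and avoids some set $S_1$; the coarse bounds used there were $|S_2|\le u+2$ and $|S_1|\le d+c+u+1$. The key refinement is to let $i$ be the number of targets of unknown dependencies that are weak prerequisites of the queried package (or pair) and to observe that $|S_2|\le i+2$ while at most $u-i$ further unknown-dependency targets remain to be placed into $X_1\subseteq S_1$, yielding $|S_1|\le d+c+(u-i)+1$. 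For this instance-specific $i$, the $\parameterizedCFF{d+c+(u-i)+1}{i+2}$ in the collection produces a vector $v$ simultaneously containing $S_2$ and excluding $S_1$, and the corresponding test witnesses the required property. Since $i$ always lies in $\{0,\ldots,u\}$, the union $\mathcal T$ covers every instance and every query, so defects, weak conflicts, and unknown dependencies are all identified as before.

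The main obstacle is the careful bookkeeping of unknown-dependency targets under the refined parameter $i$. One must verify in each of the three lemmas that (i) the construction of $S_2$ puts into it exactly the targets weakly-reached by the queried package(s), giving $|S_2|\le i+2$; and (ii) every package placed in $X_1\subseteq S_1$ can be uniquely charged against one of the at most $u-i$ still-uncovered unknown-dependency targets, so that $|S_1|$ does not exceed $d+c+(u-i)+1$. This is subtle because in the dependency-identification lemma $S_1$ also carries $q$ together with its weak dependents, and in the conflict-identification lemma two queried packages share the budget for $S_2$; however, these extra packages are either already absorbed into the defect/conflict budget or correspond to weak-dependency closures of packages already counted, and do not inflate the unknown-dependency count. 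Once this accounting is in place, the rest of the argument duplicates the proof of Corollary~\ref{cor:learnAllLBWithConflictsUB}, and the query-complexity bound follows.
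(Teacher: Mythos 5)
Your proposal matches the paper's own proof: the paper likewise constructs $u+1$ Cover-Free Families $\{\CFFLetter_i\}_{i=0}^u$ with $\CFFLetter_i$ a \parameterizedCFF{d+c+(u-i)+1}{i+2}, takes $\mathcal T=\bigcup_i\CFFLetter_i$, and selects the witness family by the number $i$ of unknown-dependency prerequisites of the queried package (or pair), exactly as you do. Your bookkeeping of $|S_2|\le i+2$ and $|S_1|\le d+c+(u-i)+1$ is in fact spelled out more carefully than in the paper, which only asserts that ``similar arguments'' to the three lemmas go through.
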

\else
Finally, we discuss a more careful test-design that allows us to reduce the number of queries needed.
\begin{theorem}\label{thm:learnAllLBWithConflictsUB}
	Assume that the repository contains at most $d$ defects, $u$ unknown dependencies and $c$ conflicts.
	There exists a \textbf{non-adaptive} algorithm for \learnAll{} that uses
	 $\sum_{i=0}^{u}\parameterizedCFFSize{d+c+(u-i)+1}{i+2}$ queries.
\end{theorem}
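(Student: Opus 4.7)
The plan is to refine the test construction used in Corollary~\ref{cor:learnAllLBWithConflictsUB} by partitioning the work across a family of smaller CFFs rather than relying on one monolithic $\parameterizedCFF{d+c+u+1}{u+2}$. Specifically, for each $i\in\{0,1,\ldots,u\}$ I would construct an $(n,d+c+(u-i)+1,i+2)$-CFF $\CFFLetter_i$ and take the test set to be $\mathcal T \triangleq \bigcup_{i=0}^{u}\CFFLetter_i$. The total number of queries is then exactly $\sum_{i=0}^{u}\parameterizedCFFSize{d+c+(u-i)+1}{i+2}$, matching the stated bound.

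To establish correctness, I would revisit the three identification lemmas from Section~\ref{sec:all}, namely those for defects (Lemma~\ref{lem:defectIdentificationWithConflicts}), weak conflicts (Lemma~\ref{lem:learningConflictsWithUnknownDependencies}), and unknown dependencies (Lemma~\ref{lem:learningDependencies}), and re-examine the sizes of the witness sets $S_1,S_2$ constructed in each proof, parameterized now by the actual number of unknown weak dependencies involved. The key observation is that if the relevant package $p$ (or pair $p,q$) has exactly $k$ distinct unknown weak dependencies, then the witness set $S_2$ (which consists of $p$, optionally $q$, and their weak prerequisites) satisfies $|S_2|\le k+2$, while at the same time $|S_1|\le d+c+(u-k)+1$, because the remaining $u-k$ unknown dependencies are precisely what populates the set $X_1$ (or its $X_2$ analogue) responsible for excluding unrelated prerequisite chains, with $d$ coming from defects and $c$ from conflicts. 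Hence the CFF $\CFFLetter_k$ is guaranteed to contain the required witness vector, and since $\CFFLetter_k\subseteq\mathcal T$, this witness is actually tested by the algorithm.

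The main obstacle will be the careful bookkeeping for the pair-learning cases and especially for dependency learning. When $p$ and $q$ share a weak dependency, or when prerequisite chains overlap with conflicting packages or with the excluded package $q$ in the dependency-learning lemma, I need to verify that no package is double-counted: each package placed in $S_1$ must be uniquely chargeable either to one of the $d$ defects, one of the $c$ conflicts, one of the $u-k$ unrelated unknown dependencies, or (in the dependency-learning case) to the single distinguished package $q$ itself. Once this charging argument is confirmed in all three lemmas, the proof becomes a direct reparameterization of the earlier witness arguments, and Lemmas~\ref{lem:defectIdentificationWithConflicts}--\ref{lem:learningDependencies} apply verbatim to $\mathcal T$, yielding the claimed query complexity.
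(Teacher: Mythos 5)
Your proposal is correct and follows essentially the same route as the paper: construct the $u+1$ families $\CFFLetter_i$, each an $\parameterizedCFF{d+c+(u-i)+1}{i+2}$, take their union as the test set, and reparameterize the witness sets $S_1,S_2$ of Lemmas~\ref{lem:defectIdentificationWithConflicts}--\ref{lem:learningDependencies} by the actual number $k$ of unknown weak prerequisites, so that the witness lives in $\CFFLetter_k$. Your explicit charging argument for $|S_1|\le d+c+(u-k)+1$ is in fact more careful than the paper's one-line appeal to ``similar arguments,'' but it is the same proof.
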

\begin{proof}
	Intuitively, our construction so far was wasteful in the sense that when a test needed to include $p$, we added all its prerequisites to $S_2$ and added every other package with an unknown dependency into $S_1$. While there can be at most $u$ prerequisites to $p$, and at most $p$ other packages with a dependency, we have not used the fact that their \textbf{sum} is also bounded by $u$. That is, assume that $p$ has $i\in\set{0,1,\ldots u}$ prerequisites; then we add $x$ packages to $S_2$ and at most $u-i$ packages to $S_1$. We can use this in the following way -- we construct a set of $u+1$ Cover-Free Families $\set{\CFFLetter_i}_{i=0}^u$ such that $\CFFLetter_i$ is a \parameterizedCFF{d+c+(u-i)+1}{i+2}. We then set the tests of the algorithm to be $\mathcal T\triangleq \bigcup_{i=0}^u \CFFLetter_i$. Similar arguments to those of lemmas~\ref{lem:defectIdentificationWithConflicts},~\ref{lem:learningConflictsWithUnknownDependencies}, and~\ref{lem:learningDependencies} show that we still have witnesses for every non-defective package, every non-conflicting pair of packages and every non-dependent pair. The actual $\CFFLetter_i$ that the relevant test belongs to is the number of prerequisites of the package $p$ in the case of defectiveness and dependency-learning, and the total number of prerequisites of either $p$ or $q$ in the case of determining whether the two conflict.
\end{proof}
\fi
\ifdefined\TACASVersion
\noindent In the full version~\cite{fullVersion} we discuss how adaptivity improves the complexity further.
\else
\textbf{Adding Adaptiveness:} As evident by the proofs of
lemmas~\ref{lem:defectIdentificationWithConflicts},
\ref{lem:learningConflictsWithUnknownDependencies}, and~\ref{lem:learningDependencies}, we can use a \parameterizedCFF{u+c+d}{u+1} to learn all defects, a \parameterizedCFF{u+c+d}{u+2} to learn the conflicts and a \parameterizedCFF{u+c+d+1}{u+1} to learn the dependencies.
Thus, we can start by learning all defects using \parameterizedCFFSize{u+c+d}{u+1} queries 
for learning just the defects and then proceed with \parameterizedCFFSize{u+c+1}{u+2} tests 
to determine the conflicts and dependencies. This allows us to shave a $d$ factor in the dependency on the number of defects. That can also be integrated with the union of \CFFs{} argument presented in Theorem~\ref{thm:learnAllLBWithConflictsUB} to \mbox{reduce the complexity further.}
\fi

\ignore{	
	The instance, depicted 
	in~\ref{subfig:input} consists of the input that 
	contains $P=\set{p_0,\ldots,p_9}$,
	$K=\set{(p_0,p_2),(p_2,p_4),(p_4,p_6),(p_4,p_5),(p_6,p_5),(p_6,p_7)}$, and
	$U=\set{(p_1,p_3),(p_4,p_0),(p_6,p_8),(p_8,p_6)}$ and $C=\set{\set{p_1,p_2}}$. 
	The \learnAll{} output, as shown in~\ref{subfig:output}, contains the set of defects 
	and the strongly connected components of the actual 
	dependency graph (including $U$) along with the full 
	specification of the dependencies and conflicts. Namely, 
	the output is the set of defects $D=\set{p_7,p_9}$ and the 
	graph $G=(V,E,A)$ where 
	$V=\set{\set{p_0,p_2,p_4},\set{p_1},\set{p_3},\set{p_6,p_8},\set{p_3},\set{p_7},
	\set{p_9}}, E=\set{\set{\set{p_0,p_2,p_4},\set{p_1}}},\break A=\set{(\set{p_0,p_2,p_4},
	\set{p_6,p_8}),(\set{p_0,p_2,p_4},\set{p_5}),(\set{p_6,p_8},\set{p_5}),(\set{p_1},
	\set{p_3})}$.
	}

\section{Related Work}
\label{sec:related}

Our work is related to the problem of software upgrades that has been widely
studied by the 
community~\cite{Mancinelli2006,DiCosmo2010,burrows,Trezentos,Tucker}. 
As many open source products such as Debian and Ubuntu operating systems are built from packages, 
some practical solutions for installing these products have been developed in the past. 
These solutions try to find a large subset of packages that are installable together. Most of them either use 
SAT solvers or pseudo-boolean
optimizations~\cite{Mancinelli2006,DiCosmo2010,Tucker} or develop greedy
algorithms~\cite{burrows} to derive a solution to that problem, i.e., find an
installable subset of packages that need to be installed (or upgraded). 
The Mancoosi~\cite{mancoosi} EU research project has been involved in solving
the problem of open source packages distributions and organized competitions for finding SAT solver based
solutions that maximize the number of installable package. 
\ifdefined\fullVersion
As part of the
project the authors of~\cite{milp} investigated the capabilities of MILP solvers
to handle the upgradeability problem and showed some improvements upon
pseudo-boolean~optimizations.
\else
\cite{milp} also investigated the capabilities of MILP solvers
to handle the upgradeability problem and showed some improvements upon
pseudo-boolean~optimizations.
\fi

All of the aforementioned techniques assume that the
dependencies and the conflicts are known and that the packages are indeed installable and do not 
contain defects. Our approach complements the existing techniques by learning
the real structure of the dependencies graph.

Techniques for detecting bugs in software using tests is another related research area~\cite{Martinez,aldaco,Segall2015,Yilmaz}. They are used do discover an unknown number of bugs. However, since there are dependencies between packages in our problem, the combinatorial algorithms developed for detecting bugs may not be used directly and require adjustments.

Our choice of cover-free families for tests selection is based on its
applications studied
in~\cite{Bshouty2017,hwang1987non,BERGER2001518,BUSH1984335,Chor1994,Yamada2012}. Hwang et. al. 
were the first to define cover-free families for non-adaptive group
testing~\cite{hwang1987non}. Their work has been followed by others that were using CFFs for group testing~\cite{BERGER2001518,BUSH1984335}.
\ifdefined\fullVersion

Researchers also used cover-free families for finding an $r$-simple
$k$-path~\cite{Bshouty2017} in a graph and solving
cryptographic problems~\cite{Chor1994,Yamada2012}.\fi Our approach leverages the
efficiency achieved by other researchers in computing
CFFs~\cite{CFFLB,NewestCFFLB,NewCFFLB}, thus providing efficient algorithms for
learning the entire relations graph representing packages in a software system.

\section{Discussion}

\label{sec:discussion}
\newcommand{\specialcell}[2][c]{%
	\begin{tabular}[#1]{@{}l@{}}#2\end{tabular}}
\ifdefined\TACASVersion
\begin{table}[t]
	\centering
	
	\scriptsize
	\else
	\begin{table}[htp]
	\centering
	\small
	\fi
	\begin{tabular}{|l|c|l|l|c|}
		\hline
		\textbf{Scenario}& \textbf{Adpt}& \textbf{Problem}& \textbf{Query Complexity Bounds} &  \textbf{Thm} \#		 \\
		\hline\hline

		\multirow{3}{*}{\specialcell[l]{\noindent At most
		$r$ root\\ 
		defects and no unknown\\ 
		dependencies or conflicts}} & \multirow{3}{*}{\cmark} &
	{\sc Full}&
		Lower: $\ceil{\log_2\parentheses{\sum_{i=0}^{r}{n\choose
		i}}}=\Omega(r\logp{\frac{n}{r}})$ & \ref{thm:GTLB} \\&& 	{\sc Learning} &&\\
		&&& Upper: $r-1 + \ceil{\log_2\parentheses{\sum_{i=0}^{r}{n\choose
		i}}}$  &
		\ref{thm:knownDependenciesUpperBound}\\\hline

		\multirow{3}{*}{\specialcell[l]{\noindent At most $1$
		unknown \\
		dependency and \\ $1$ root defect}}& \multirow{3}{*}{\cmark} & {\sc Maximal} & \multirow{3}{*}{Lower:
		$n=\Omega(n)$} & \multirow{3}{*}{\ref{thm:oneRootOneUnknownDep}}\\
		&&{\sc Sub-}&  &  \\&& {\sc repository} &&\\\hline

		\multirow{5}{*}{\specialcell[l]{\noindent At most $d$
		defects and $u$\\ unknown dependencies,\\ no conflicts}}& \multirow{3}{*}{\xmark} &&
		Lower: $\parameterizedCFFSize{d-1}{u+1}=\Omega^*(\log n)$ &
		\ref{thm:unknownDependenciesLB}   \\&&{\sc Maximal}&&\\& &{\sc Sub-}&
		Upper: $\parameterizedCFFSize{d}{u+1}=O^*(\log n)$ &
		\ref{thm:unknownDependenciesUB}\\\hhline{~-~--}
		 & \multirow{2}{*}{\cmark}& {\sc repository} &
		\multirow{2}{*}{Lower: $\Omega\parentheses{{u+d\choose d}+d\logp{n\over d}}=\Omega^*(\log n)$} &
		\multirow{2}{*}{\ref{thm:adaptiveMaxInstallation}}\\&&&&\\\hline	

		\multirow{3}{*}{\specialcell[l]{\noindent At most $1$ conflict and \\ no defects or \\unknown dependencies}} & \multirow{3}{*}{\xmark}&{\sc Full} &
		\multirow{3}{*}{Lower: $n-1=\Omega(n)$} &
		\multirow{3}{*}{\ref{thm:learningAStrongConflict}}\\&& {\sc Learning}&& \\
		&&&& \\\hline	
		
		\ifdefined\fullVersion
		
		\multirow{5}{*}{\specialcell[l]{\noindent At most
				$c$ weak\\ conflicts, $d$ defects and\\ no unknown dependencies}}& \multirow{3}{*}{\xmark} &
		&
		Lower: $\parameterizedCFFSize{d+c-1}{2}=\Omega^*(\log n)$ & \ref{thm:learnAllLBWithConflictsKnownDependenciesLB} \\&&
		{\sc Full}&&\\
		&& {\sc Learning} & Upper: $\parameterizedCFFSize{d+c}{2}=O^*(\log n)$  &
		\ref{thm:learnAllLBWithConflictsUBKnownDependencies}\\\hhline{~-~--}	
		& \multirow{2}{*}{\cmark}&&
		\multirow{2}{*}{Upper: $\parameterizedCFFSize{d+c}{1}+\parameterizedCFFSize{c}{2}=O^*(\log n)$} &
		\multirow{2}{*}{\ref{thm:adaptiveConflictsWithKnownDependencies}}\\
		&&&& \\\hline	
		
		\else
		
		\multirow{3}{*}{\specialcell[l]{\noindent At most
				$c$ weak\\ conflicts, $d$ defects and\\ no unknown dependencies}}& \multirow{3}{*}{\xmark} &
		{\sc Full}
		&
		\multirow{3}{*}{Lower: $\parameterizedCFFSize{d+c-1}{2}=\Omega^*(\log n)$} & \multirow{3}{*}{\ref{thm:learnAllLBWithConflictsKnownDependenciesLB}} \\&&{\sc Learning}&&\\
		&&&& \\\hline	
	
		\fi
		
		\multirow{3}{*}{\specialcell[l]{\noindent At most $1$ \\unknown
		dependency  and\\ no conflicts or defects}} & \multirow{3}{*}{\xmark}&{\sc Full} &
	\multirow{3}{*}{Lower: $n-1=\Omega(n)$} &
	\multirow{3}{*}{\ref{thm:learningADependency}}\\&&{\sc Learning}&&\\
	&&&& \\\hline				
		\multirow{8}{*}{\specialcell[l]{\noindent At most $u$ unknown\\
				dependencies, $c$ conflicts,\\ and $d$  defects.\\All dependencies are \\ unknown}} & \multirow{8}{*}{\xmark}&{\sc Maximal} &
		\multirow{3}{*}{Lower:
		$\parameterizedCFFSize{d+\ceil{c/2}-1}{u+1}=\Omega^*(\log n)$} & \multirow{3}{*}{\ref{thm:maxLowerBoundWithConflicts}} \\&&{\sc
		Sub-}&& \\&&{\sc
		repository}&& \\\hhline{~~---}		
		&
		& & Lower:
		$\parameterizedCFFSize{d+c-1}{u+2}=\Omega^*(\log n)$ &\ref{thm:learnAllLBWithConflictsV2} \\&&{\sc Full}&&\\
		&&{\sc Learning}&
		Upper: 
		$\!\begin{aligned} 
		&\sum_{i=0}^{u}\parameterizedCFFSize{d+c+(u-i)+1}{i+2} \\    
		&< \parameterizedCFFSize{d+c+u+1}{u+2}=O^*(\log n) \end{aligned}$
		& 
		\ref{thm:learnAllLBWithConflictsUB} \\\hline \end{tabular}\smallskip
	\normalsize
	\caption{Summary of our results learning undocumented constraints. Here, $\CFFSizeLetter_{\CFFParams}$ is the size of a \CFF{} as described in Section~\ref{sec:CFFs}. Recall that every algorithm for \learnAll{} also solves \maxProblem{} and every lower bound for \maxProblem{} holds for \learnAll{}.}
	\label{tbl:summary}
	\vspace{-.5cm}
\end{table} 

In this paper, we formalized a stylized model that allows us to reason about the
complexity of learning undocumented software constraints in a given repository.
We presented algorithms for four cases: where the entire dependencies structure
is known, and we are interested in finding the defects; where we have up to $u$
unknown dependencies, and we wish to find all the defects and the dependencies;
where we have up to $c$ unknown conflicts, and we wish to find all the defects
and conflicts; and where up to $u$ dependencies and up to $c$ conflicts are
unknown.
We proved lower and upper bounds on the complexity of both adaptive algorithms
for solving the problems of \learnAll{} and \maxProblem{}.
Table~\ref{tbl:summary}  gives the summary of our results.

\ifdefined\fullVersion
The first column in the table describes the scenario in which the algorithm is
applied, the second states whether the algorithm is adaptive, while the third
one states which problem is being solved.
The fourth column states the query complexity bounds. There, we present the
lower bound for all of the scenarios and problem variants. For the scenarios
where the lower bound for query complexity was proved to be linear, we do not
state the upper bound as it is at least linear. For simplicity of comparison
between the different cases, we introduce the following notation.  We say that
$O^*(f(n))$  is the asymptotic complexity for constant-many unknowns.
Specifically, a function $g(n,u,c,d)$ is in $O^*(f(n))$  if it is in $O(f(n)
\cdot h(u,c,d))$ for some function $h(u,c,d)=2^{O(u+c+d)}$. The upper and lower
bounds for query complexity can then be presented using $O^*$ and $\Omega^*$
notations for small $d$, $c$, and $u$. The last column gives a reference to the
theorem that proves the result.
We note that the problems are easily solvable in $O^*(\mbox{poly}(n))$, as all
constraints are local. For example, to determine whether $p$ has an unknown
dependency on $q$ we need to test $p$ with all its prerequisites and exclude $q$
and its dependents. As there are $O(n^2)$ vertex pairs to test, this is doable
in the mentioned query complexity. Alas, as $n$ is large and testing is costly,
we wish to determine which scenarios are learnable in logarithmic~complexity.
\else
The table describes the scenario in which the algorithm is
applied, whether the algorithm is adaptive, and which problem is being solved.
Then it gives the query complexity bounds. The
lower bound is shown for all cases. For the scenarios
where the lower bound for query complexity was proved to be linear, we do not
state the upper bound as it is at least linear. For simplicity of comparison
between the cases, we introduce the following notation.  We say that
$O^*(f(n))$  is the asymptotic complexity for constant-many unknowns.
Specifically, a function $g(n,u,c,d)$ is in $O^*(f(n))$  if it is in $O(f(n)
\cdot h(u,c,d))$ for some function $h(u,c,d)=2^{O(u+c+d)}$. The upper and lower
bounds for query complexity can then be presented using $O^*$ and $\Omega^*$
notations for small $d$, $c$, and $u$. 
\fi

Our algorithms assume that there are no \emph{optional dependencies} in the
package repository for which we need to find a solution.  This means that if a
package \textit{a} depends on \textit{b} then any working installation that
includes \textit{a} must include \textit{b} as well. However, for some package repositories, 
it might be the case that \textit{a} depends
on either \textit{b} \textbf{or} \textit{c}. \ifdefined\fullVersion Thus a working installation could
be $\{a, b\}$ or $\{a, c\}$, while $\{a\}$ will fail. \fi In the future, we plan to investigate how our algorithms are affected by this definition of dependencies.

Another fundamental issue is setting the bounds on the number of unknown constraints. That is, we may not know in advance how many defects, hidden dependencies, or unknown conflicts exists in a repository. 
A possible workaround would be devising an \emph{anytime algorithm} -- an algorithm that has no predefined stopping condition but can generate output on demand. The natural operation for such an algorithm would be starting from small values for $d,c,u$ and incrementing them one at a time. The query complexity dependency on the parameters is such that the overall number of queries would not be much larger than if we had known the parameters to begin with.
\ignore{
\\1
\\2
\\3
\\4
\\5
\\6
\\7
\\8
\\9
\\10
\\11
\\12
\\13
\\14
\\15
}
\ifdefined\fullVersion
If the parameters, at the time of output, are not large enough to capture the actual number of unknown constraints then either the algorithms will fail to find a working installation or would mistake some packages as defective while in practice they are functional but with many unknown constraints.
\fi

\ifdefined\TACASVersion
\newpage
\fi

\bibliographystyle{splncs03}
\bibliography{DependenciedAndConflicts}

\end{document}